\documentclass[11pt]{article}  
\usepackage{amsmath,amssymb,amsthm,amsfonts,verbatim}
\usepackage{fullpage,geometry,xcolor}

\title{Pseudorandomness for concentration bounds and signed majorities}
\author{Parikshit Gopalan \\Microsoft \and Daniel M. Kane \\University of California, San Diego \and Raghu Meka}

\newcommand{\floor}[1]{\left\lfloor #1 \right\rfloor}

\newcommand{\supp}{\textrm{supp}}

\newcommand{\eat}[1]{}
\newcommand{\calG}{\ensuremath{\mathcal{G}}}

\newcommand{\E}{\mathop{\mathbb{E}\/}}

\newcommand{\poly}{\ensuremath{\mathrm{poly}}}
\newcommand{\polylog}{\ensuremath{\mathrm{polylog}}}

\newcommand{\prg}{{\ensuremath \mathsf{PRG}}}
\newcommand{\bpp}{{\ensuremath \mathsf{BPP}}}
\newcommand{\rl}{{\ensuremath \mathsf{RL}}}
\newcommand{\logspace}{{\ensuremath \mathsf{L}}}
\newcommand{\ptime}{{\ensuremath \mathsf{P}}}
\newcommand{\tc}{{\ensuremath \mathsf{TC}_0}}

\newcommand{\eps}{\ensuremath{\varepsilon}}
\newcommand{\zo}{\{0,1\}}
\newcommand{\pmo}{\ensuremath \{ \pm 1\}}
\newcommand{\rgta}{\ensuremath{\rightarrow}}

\newcommand{\R}{\mathbb R}

\newcommand{\pr}{\Pr}
\newcommand{\sgn}{\textrm{sgn}}
\newcommand{\dotp}[2]{\left\langle #1,#2\right\rangle}

\newtheorem{thm}{Theorem}
\newtheorem{definition}{Definition}
\newtheorem{prop}[thm]{Proposition}

\newtheorem{lem}[thm]{Lemma}

\newtheorem{claim}{Claim}
\newtheorem*{defn}{Definition}

\newtheorem{fact}[thm]{Fact}

\newcommand{\zpm}{\{-1,0,1\}}
\newcommand{\dpm}{\pmo}
\newcommand{\gcs}{G^{CS}}
\newcommand{\glarge}{G^{b}}

\newcommand{\dtv}{d_{TV}}
\newcommand{\hh}{\mathcal{H}}
\newcommand{\ignore}[1]{{}}
\newcommand{\bkets}[1]{\left(#1\right)}
\newcommand{\sbkets}[1]{\left[#1\right]}

\newcommand{\iprod}[2]{\langle #1,\ #2\rangle}
\newcommand{\nmo}[1]{\left\|#1\right\|_1}
\newcommand{\nmt}[1]{\left\|#1\right\|_2}
\newcommand{\nm}[1]{\left\|#1\right\|}
\newcommand{\nmp}[1]{\left\|#1\right\|_p}

\newcommand{\abs}[1]{\left|#1\right|}
\newcommand{\calD}{\mathcal{D}}

\newcommand{\rn}{\mathbb{R}^n}
\newcommand{\mnote}[1]{}
\newcommand{\pnote}[1]{}
\newcommand{\dnote}[1]{}

\newcommand{\rom}{signed majorities }
\newcommand{\Rom}{Signed majorities }
\date{}
\begin{document}
\begin{titlepage}
\thispagestyle{empty}
\maketitle

\begin{abstract}
The problem of constructing pseudorandom generators that fool
halfspaces has been studied intensively in recent times. For fooling halfspaces over $\pmo^n$ with polynomially small
error, the best construction known requires seed-length $O(\log^2(n))$
\cite{MekaZ13}. Getting the seed-length down to $O(\log(n))$ is a natural
challenge in its own right, which needs to be overcome in order to
derandomize $\rl$. In this work we make progress towards this goal by obtaining
near-optimal generators for two important special cases:
\begin{itemize}  

\item We give a near optimal derandomization of the Chernoff bound for
  independent, uniformly random bits. Specifically, we show how to
  generate $x \in \pmo^n$ using $\tilde{O}(\log (n/\eps))$ random bits
  such that for any unit vector $u$, $u\cdot x$ matches the sub-Gaussian tail
  behaviour predicted by the Chernoff bound up to error $\eps$.

\item We construct a generator which fools halfspaces with $\{0,1,-1\}$
  coefficients with error $\epsilon$ with a seed-length of
  $\tilde{O}(\log(n/\epsilon))$. This includes the important special
  case of majorities.
\end{itemize}

In both cases, the best previous results required seed-length of $O(\log
n + \log^2(1/\epsilon))$.    

Technically, our work combines new Fourier-analytic tools with the
iterative dimension reduction techniques and the gradually increasing
independence paradigm  of previous works \cite{KaneMN11, CelisRSW13,
  GopalanMRTV12}. 
\end{abstract}
\end{titlepage}
\section{Introduction}
The theory of pseudorandomness has given compelling evidence that very strong
pseudorandom generators ($\prg$s) exist.  For example, assuming that
there are computational problems solvable in exponential time that
require exponential-sized circuits, Impagliazzo and
Wigderson~\cite{ImpagliazzoWi97} showed that there exist very strong $\prg$s which
allow us to simulate every randomized algorithm deterministically with
only a polynomial slowdown, and thus $\bpp= \ptime$. 
These results, however, are conditional on a circuit complexity
assumption whose proof seems far off.  Since $\prg$s that
fool a class of Boolean circuits also imply lower bounds for that class, we
cannot hope to circumvent this assumption. Thus unconditional generators
are only possible for  restricted models of computation for which we have strong lower bounds.

{\em Bounded-space algorithms} are a natural computational model
for which we know how to construct strong $\prg$s unconditionally.
Let $\rl$ denote the class of randomized algorithms with $O(\log n)$
work space which can access the random bits in a read-once pre-specified order.
Nisan~\cite{Nisan92} devised a $\prg$ of seed
length $O(\log^2(n/\eps))$ that fools $\rl$.
This generator was used by Nisan himself to show that $\rl
\subseteq \mathsf{SC}$ \cite{Nisan94} and by Saks and Zhou~\cite{SaksZ99} to
prove that $\rl$ can be simulated in space $O(\log^{3/2} n)$. 
Constructing $\prg$s with the optimal
$O(\log(n/\epsilon))$ seed length for this class and showing that $\rl
= \logspace$ is arguably the outstanding open problem in
derandomization (which might not require a breakthrough in lower bounds). 
Despite much progress in this area
\cite{ImpagliazzoNW94,NisanZ96,RazR99,Reingold08,ReingoldTV06,BravermanRRY14,BrodyV10,KouckyNP11,De11,GopalanMRTV12},
there are few cases where we can improve on Nisan's twenty year old
bound of $O(\log^2 n)$ \cite{Nisan92}.

Halfspaces are Boolean functions $h:\pmo^n \to \dpm$ described as $h(x)
= \sgn(\dotp{w}{x} - \theta)$ for some \emph{weight} vector $w \in
\R^n$ and \emph{threshold} $\theta \in \R$. They are of central importance
in computatonal complexity, learning theory and social choice. Lower
bounds for halfspaces are trivial, whereas the problem of proving
lower bounds against depth-$2$ $\tc$ or halfspaces of
halfspaces is a frontier open problem in computational complexity. The 
problem of constructing explicit $\prg$s that can fool halfspaces is a
natural challenge that has seen a lot of exciting progress recently
\cite{DGJSV09, MekaZ13, Kane11,Kane14}. The best known $\prg$
construction for halfspaces is that of Meka and Zuckerman
\cite{MekaZ13} who gave a $\prg$ with seed-length $O(\log n +
\log^2(1/\epsilon))$, which is $O(\log^2(n))$ for polynomially small
error. They also made a connection to space bounded algorithms by 
showing that $\prg$s against $\rl$ with
inverse polynomial error can be used to fool halfspaces. Thus
constructing better $\prg$s for halfspaces seems to be a
necessary step towards progress for bounded-space algorithms. 

Beyond computational complexity, the problem of
constructing better $\prg$s for halfspaces has ample algorithmic
motivation;  perhaps the most compelling of which comes from the
ubiquitous applications in computer science of Chernoff-like bounds
for weighted sums of the form $\sum_iw_ix_i$ where the $x_i$s are
uniformly random bits. There has been a long line of work on showing sharp tail bounds for pseudorandom sequences starting from
\cite{SchmidtSS95}. A $\prg$ for halfspaces with seed-length $O(\log(n/\eps))$
would give a space of support size $\poly(n)$ where Chernoff-like tail
bounds hold. This in turn would yield a black-box derandomization with
only a polynomial slow-down of any algorithm which relies on uniform
randomness only for such tail bounds. $\prg$s for halfspaces also have
other algorithmic applications to streaming algorithms for duplicate
detection \cite{GopalanR09} and efficient revenue maximization for
certain kinds of auctions \cite{GopalanNR14}.

\subsection{Our results}

A $\prg$ is a function $\calG:\pmo^r \rgta \pmo^n$. We refer to $r$ as
the seed-length of the generator. The $\tilde{O}()$ notation hides
polylogarithmic factors in its argument. We say $\calG$ is \emph{explicit} if
the output of $\calG$ can be computed in time $\poly(n)$.

\begin{definition}
A $\prg$ $\calG:\pmo^r \to \dpm^n$ fools a class of
functions $\mathcal{F} = \{f:\dpm^n \to \dpm\}$ with error $\eps$
(or $\eps$-fools $\mathcal{F}$) if for every $f \in \mathcal{F}$, 
\[\left|\pr_{x \in \dpm^n}[f(x) = 1] - \pr_{y \in \pmo^r}[f(G(y)) = 1]\right| < \eps.\]
\end{definition}

\subsection*{Derandomized Chernoff bounds}
Chernoff bounds are a basic tool in the analysis of randomized
algorithms. A ubiquitous version that applies to the setting of
independent random bits is the following:

\begin{claim}[Chernoff bound]
There exist constants $c_1, c_2 > 0$ such that for every unit vector $w \in
\R^n$ and $t \geq 1$, $\Pr_{x \in \pmo^n}\left[\, \left|\iprod{w}{x}\right| > t \right] \leq
 c_1e^{-c_2 t^2}$.
\end{claim}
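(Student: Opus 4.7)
The plan is to apply the classical Chernoff--Bernstein moment generating function argument to the Rademacher sum $\iprod{w}{x} = \sum_i w_i x_i$. This is a textbook Hoeffding-type tail bound, so no new ideas are required; I will simply organize the standard steps.

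First, I would control the one-sided upper tail. For any $\lambda > 0$, Markov's inequality applied to the nonnegative random variable $\exp(\lambda \iprod{w}{x})$ gives
\[\pr[\iprod{w}{x} > t] \;\leq\; e^{-\lambda t}\,\E\!\left[e^{\lambda \iprod{w}{x}}\right].\]
Since the bits $x_i$ are independent and uniform over $\pmo$, the MGF factors as $\prod_{i=1}^n \E[e^{\lambda w_i x_i}] = \prod_{i=1}^n \cosh(\lambda w_i)$.

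Second, I would invoke the elementary inequality $\cosh(y) \leq e^{y^2/2}$, which follows from coefficient-wise comparison of Taylor series using $(2k)! \geq 2^k k!$. Combined with $\nmt{w}^2 = 1$ this gives $\E[e^{\lambda \iprod{w}{x}}] \leq \exp(\lambda^2 / 2)$, and hence $\pr[\iprod{w}{x} > t] \leq \exp(-\lambda t + \lambda^2/2)$. Optimizing with $\lambda = t$ produces the one-sided bound $\pr[\iprod{w}{x} > t] \leq e^{-t^2/2}$. Applying the same argument with $w$ replaced by $-w$ and taking a union bound yields $\pr[|\iprod{w}{x}| > t] \leq 2 e^{-t^2/2}$, establishing the claim with $c_1 = 2$ and $c_2 = 1/2$.

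There is essentially no obstacle; the only nontrivial scalar inequality is $\cosh(y) \leq e^{y^2/2}$, which is routine. In fact the bound holds for all $t > 0$, the hypothesis $t \geq 1$ in the statement only ensuring that the right-hand side is a meaningful probability estimate rather than a vacuous one.
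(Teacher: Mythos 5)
Your proof is correct; it is the standard Hoeffding MGF argument (Markov on $e^{\lambda\iprod{w}{x}}$, factor the MGF, $\cosh(y)\leq e^{y^2/2}$, optimize $\lambda=t$, union bound over two tails), yielding $c_1=2$, $c_2=1/2$. The paper states this Claim as a classical fact without proof, so there is no paper argument to compare against, but your derivation is exactly the textbook one that is implicitly being invoked.
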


We obtain a near-optimal derandomization of this result. 

\begin{thm}
\label{thm:chernoff}
There exists an explicit generator $\calG_1:\pmo^r \to \dpm^n$
and constants $d_1, d_2$ such that for every unit vector $w \in \R^n$,
$t \geq 1$ and $\eps > 0$,
\[ 
\Pr_{y \in_u \pmo^r}\left[\,\left|\dotp{w}{\calG_1(y)}\right| > t \right] \leq d_1e^{-d_2t^2} + \eps.
\]
The generator has seed-length $r =
\tilde{O}(\log(n/\eps))$.
\end{thm}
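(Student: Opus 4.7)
My plan is to first convert the sub-Gaussian tail condition into a high-moment bound. Set $K = \Theta(\log(1/\eps))$. For any distribution $\calD$ on $\pmo^n$, Markov's inequality gives
\[
\pr_{x \sim \calD}\bigl[|\dotp{w}{x}| > t\bigr] \leq t^{-2k}\, \E_{x \sim \calD}\bigl[\dotp{w}{x}^{2k}\bigr],
\]
and Khintchine's inequality bounds the uniform moment by $\E_u\bigl[\dotp{w}{x}^{2k}\bigr] \leq (2k-1)!!$. Hence it suffices to construct $\calG_1$ satisfying $\E_y\bigl[\dotp{w}{\calG_1(y)}^{2k}\bigr] \leq (2k-1)!! + \eps$ for every $k \leq K$. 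Optimizing $k \approx t^2/2$ then yields a sub-Gaussian tail of the form $d_1 e^{-d_2 t^2}$ for $t \leq O(\sqrt{K})$, while for larger $t$ the Chernoff term $d_1 e^{-d_2 t^2}$ is already below $\eps$ so the claim is vacuous. Thus the problem reduces to fooling all moments of degree at most $2K$.

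\textbf{Step 2: Construction via bucketing and gradually increasing independence.} Assume WLOG $|w_1| \geq \cdots \geq |w_n|$. Drop coordinates with $|w_i| \leq \eps/\poly(n)$: their aggregate contribution to any degree-$\leq K$ moment is $O(\eps)$ under any $\eps$-biased generator, so a single small-bias seed handles them. Partition the remainder into $L = O(\log(n/\eps))$ dyadic buckets $B_\ell$ indexed by the magnitude of $|w_i|$, so that $|B_\ell| \lesssim 4^\ell$ and $\|w|_{B_\ell}\|_2 = O(1)$. The generator samples the bits in each $B_\ell$ from an almost $k_\ell$-wise independent distribution, using an independent seed per bucket, and then concatenates. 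Following the gradually increasing independence paradigm of \cite{CelisRSW13, GopalanMRTV12}, one takes $k_\ell$ to be large on small (heavy-weight) buckets and smaller on large (light-weight) buckets, balancing so the total seed is $\tilde O(\log(n/\eps))$.

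\textbf{Step 3: Fourier analysis and the main obstacle.} To establish the moment bound of Step 1, expand $\dotp{w}{y}^{2k} = \sum_S \widehat{p}_w(S)\, y_S$ via the Fourier transform on $\pmo^n$. The error is
\[
\sum_S \widehat{p}_w(S)\bigl(\E_y y_S - \E_u y_S\bigr).
\]
Because the buckets are sampled independently, $\E_y y_S$ factors as $\prod_\ell \E y_{S \cap B_\ell}$; within each bucket, $k_\ell$-wise independence matches the uniform expectation up to small bias whenever $|S \cap B_\ell| \leq k_\ell$, so only the ``bad'' characters $S$ overloading some bucket contribute. The main obstacle, and the site where the new Fourier-analytic tools advertised in the abstract must enter, is to show that these bad characters carry small combined Fourier mass: intuitively, packing many $S$-elements into a bucket with $\ell_2$ weight $\sigma_\ell$ produces Fourier coefficients of size $\lesssim \sigma_\ell^{|S \cap B_\ell|}$, and combined with the combinatorial count of degree-$\leq 2K$ characters that overload $B_\ell$, this should telescope across buckets to yield error $\eps$. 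Quantifying this precise overload estimate and calibrating the $k_\ell$ accordingly is the delicate technical core of the proof; once this is done, adding back the truncated light coordinates via small bias gives the full seed bound $\tilde O(\log(n/\eps))$.
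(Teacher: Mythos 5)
Your proposal takes a genuinely different route from the paper. The paper's construction is built on \emph{iterative dimension reduction}: it repeatedly hashes into $\sqrt{n}$ buckets and applies a $\delta$-biased sign flip, showing (Lemma \ref{lm:chernofffirst}) that the resulting $\calG'$ gives only a mild polylogarithmic tail; it then boosts this to sub-Gaussian by hashing into $m=\polylog(n/\eps)$ buckets, running independent copies of $\calG'$ on each, and invoking Bernstein's inequality (Lemma \ref{lm:chernoffsecond}); finally it recycles the $m$ seeds via the INW generator. The gradually-increasing-independence-by-weight-scale trick you propose \emph{does} appear in the paper, but only as a sub-step in the analysis of a single dimension-reduction step (the decomposition $w=\sum_\ell w(\ell)$ in Lemma \ref{lem:v-technical}), not as the top-level construction. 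Your Step 1 (reducing the tail bound to matching moments up to degree $2K$ with $K=\Theta(\log(1/\eps))$ via Markov and Khintchine) is correct and could in principle substitute for the paper's Bernstein-based boosting.

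The genuine gap is in the seed-length accounting for your Step 2. You have $L=\Theta(\log(n/\eps))$ dyadic weight buckets and ``an independent seed per bucket.'' For the heaviest buckets (say weight scale $\ell=O(\log K)$) the vector could place $\Theta(1)$ of its $\ell_2$ mass on a single bucket of size $4^\ell = K^{O(1)}$, and to fool the $2K$-th moment of a flat sum over such a bucket you need close to $2K$-wise independence there, costing $\Theta(K\log K)$ bits. For intermediate buckets your Fourier-decay heuristic ($\widehat p_w(S)\lesssim \sigma_\ell^{|S\cap B_\ell|}$ together with the combinatorial factor $(2K)^{O(|S\cap B_\ell|)}$) forces $k_\ell = \Omega\left(K/(\ell - O(\log K))\right)$, and the per-bucket seed cost $k_\ell \log|B_\ell| = \Theta(k_\ell\,\ell)$ stays $\Theta(K)$ uniformly in $\ell$. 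Summing over $L=\Theta(\log(n/\eps))$ independent buckets yields $\Theta(K\cdot L) = \Theta(\log(1/\eps)\log(n/\eps))$, i.e.\ $\Theta(\log^2 n)$ when $\eps=1/\poly(n)$---exactly the bound you were trying to beat. To get below this you would need to share randomness across buckets (the paper does this via an INW generator exploiting that the bucket sums form a small-space computation once $h$ is fixed) or shrink the ambient dimension as you go (the paper's iterated hashing). As written, the proposal does not achieve $\tilde O(\log(n/\eps))$, and the missing idea is precisely the seed recycling / dimension reduction that the paper's proof hinges on.
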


To contrast this with what was known previously, consider the setting
where $\eps =1/\poly(n)$. The Chernoff bound asserts that the
probability that $|\iprod{w}{x}| = \Omega(\sqrt{\log(n)})$ is inverse
polynomially small. A $\prg$ for halfspaces with error parameter $\eps
=1/\poly(n)$ would also gurarantee such tails, but the best known
construction requires seed-length $O(\log^2(n))$ \cite{MekaZ13}. One could also get
such tail bounds using limited indpendence \cite{SchmidtSS95}; however, we
would need $O(\log(n))$-wise independence, which again requires
$O(\log^2(n))$ seed-length.

\subsubsection*{Fooling signed majorities}

An important sub-class of halfspaces are those whose weight
vectors  have $\{0,1,-1\}$-valued entries. This corresponds to
selecting a subset of variables, assiging each of them an orientation and then
taking a threshold. We henceforth refer to this class of 
halfspaces as signed majorities. \Rom arise naturally in voting theory,
learning theory and property testing - see ~\cite{MosselOO05,MatulefORS09, RonS13, BlaisOD10}. Fooling such tests
requires fooling the sum of arbitrary subsets of variables in statistical distance,
a problem that was studied by \cite{GopalanMRZ13} in their work on
fooling {\em combinatorial shapes}. 
Fooling sums in statistical distance includes as a special case modular
tests on sums of variables with unrestricted modulus \cite{LovettRTV09,MekaZ09}. $\prg$s for
modular sums are a strong generalization of the versatile
small-bias spaces \cite{NaorN93} which correspond to fooling modular
sums with modulus two. The best previously known $\prg$s due to
Lovett {\em et al.} for such tests require seed-length $O(\log^2 n)$
\cite{LovettRTV09} for large modulii, but their result can also
handle sums with non-binary coefficients. Finally, \rom seem to capture several
technical hurdles in designing optimal $\prg$s
for halfspaces. 

We construct  a $\prg$ which $\eps$-fools \rom with a
seed-length of $\tilde{O}(\log (n/\eps))$.  

\begin{thm}\label{th:mainintrohs}
There exists an explicit generator $\calG_2:\pmo^r \to \dpm^n$ with seed-length $r = \tilde{O}(\log(n/\eps))$ which $\eps$-fools \rom.
\end{thm}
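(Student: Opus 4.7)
The plan is to reduce fooling signed majorities to fooling the CDF of the integer-valued sum $Z := \sum_i w_i x_i$, split the analysis into a ``tail'' regime handled directly by Theorem \ref{thm:chernoff} and a ``bulk'' regime handled by Fourier analysis combined with small-bias and limited-independence tools, and glue everything together using the iterative dimension-reduction / gradually-increasing-independence framework of \cite{KaneMN11, GopalanMRTV12}.

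Fix a signed majority $h(x) = \sgn\bkets{\sum_i w_i x_i - \theta}$ with $w \in \{-1,0,1\}^n$, let $k = \|w\|_1$, and set $T = C\sqrt{\log(1/\eps)}$ for a large absolute constant $C$. In the \emph{tail regime} $|\theta| \geq T\sqrt{k}$, Theorem \ref{thm:chernoff} applied to the unit vector $w/\sqrt{k}$ gives $\Pr_G[|Z| \geq T\sqrt{k}] = O(\eps)$, and the uniform Chernoff bound gives the same estimate under $U$; thus $|\Pr_G[Z > \theta] - \Pr_U[Z > \theta]| = O(\eps)$ for free. In the \emph{bulk regime} $|\theta| \leq T\sqrt{k}$ fooling the CDF reduces, via discrete Fourier inversion on $\Z_N$ with $N = 2k+1$, to controlling
\[ \sum_{\xi=0}^{N-1} \abs{\phi_G(\xi) - \phi_U(\xi)}, \qquad \phi(\xi) := \E\bkets{\exp(2\pi i \xi Z / N)}. \]
For low frequencies $|\xi|/N \lesssim 1/\sqrt{k}$, the uniform value $\phi_U(\xi) = \prod_i \cos(2\pi\xi w_i/N)$ is close to the Gaussian $\exp(-2\pi^2 (\xi/N)^2 k)$, and matching it reduces to fooling the first $O(\log(n/\eps))$ moments of $\dotp{w}{x}$, which small bias combined with $\polylog(n/\eps)$-wise independence can accomplish. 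For high frequencies $|\xi|/N \gtrsim 1/\sqrt{k}$ the uniform $|\phi_U(\xi)|$ is already inverse-polynomially small, and small-bias distributions alone should suffice to drive $|\phi_G(\xi)|$ to the same magnitude, in the spirit of the treatment of combinatorial and modular sums in \cite{GopalanMRZ13, LovettRTV09}.

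To reach the optimal seed length $\tilde{O}(\log(n/\eps))$ I would embed the above into the iterative dimension-reduction framework of \cite{KaneMN11, GopalanMRTV12}: recursively partition $[n]$ into two halves using a pseudorandom hash, apply the construction within each half with an inductively shorter seed, and glue the halves with a small-bias source whose bias parameter is strengthened at each of the $O(\log n)$ levels. A hybrid argument with the Fourier error telescoping across levels then yields the claimed seed length. The main obstacle I anticipate is the intermediate frequency regime $|\xi|/N \sim 1/\sqrt{k}$, where neither the Gaussian/moment matching nor the small-bias decay is individually tight, and where the phase of $\phi_G(\xi) - \phi_U(\xi)$ must be tracked along with its magnitude; bridging this regime uniformly in $\xi$ is presumably where the new Fourier-analytic tools advertised in the abstract are deployed, combined with a careful allocation of the independence parameter across levels of the recursion.
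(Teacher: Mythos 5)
Your high-level plan — reduce to controlling Fourier coefficients $\phi_{v,\alpha}$, split by whether $|\alpha|$ is above or below roughly $1/\sqrt{k}$, and embed the analysis in an iterative dimension-reduction/increasing-independence framework — does match the structure of the paper's argument. But two of your concrete claims about how the regimes are handled are incorrect, and one of them is a genuine obstruction.

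The critical gap is your treatment of the high-frequency regime. You assert that for $|\xi|/N \gtrsim 1/\sqrt{k}$ ``small-bias distributions alone should suffice to drive $|\phi_G(\xi)|$ to the same magnitude.'' This is provably false: the paper explicitly points out that there are small-bias spaces with exponentially small bias that are far from fooling linear sums in statistical distance, e.g.\ the set of strings whose Hamming weight is divisible by $3$ (the Viola--Wigderson example). Concretely, for $\alpha = 1/3$ and $v = \mathbf{1}_k$ the uniform value $\phi_U(\alpha) = (-1/2)^k$ is exponentially small, yet a small-bias distribution can have $\phi_G(\alpha) = 1$, and $\alpha = 1/3$ sits squarely in your high-frequency regime once $k$ is large. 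The paper instead handles large $\alpha$ with a genuinely different construction: start from the \cite{GopalanMRZ13} generator $\gcs$ (which fools $\phi_{v,\alpha}$ only to \emph{constant} error), hash the coordinates into $m = \polylog(1/\eps)$ buckets via a $\delta$-biased ``spreading'' hash family, use an independent copy of $\gcs$ per bucket so the Fourier coefficient multiplies across buckets and decays geometrically, and then recycle seeds across buckets with the INW generator. There is no small-bias-only shortcut here, and the paper identifies exactly this error-amplification step as the bottleneck preventing an extension to general halfspaces.

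Your low-frequency plan (``reduces to fooling the first $O(\log(n/\eps))$ moments of $\dotp{w}{x}$'') is also too coarse. The paper's small-$\alpha$ analysis is not a direct moment-matching argument on $\dotp{w}{x}$; it is a hybrid argument across the $O(\log\log n)$ dimension-reduction steps, in which the key Lemma \ref{denseVLem} bounds the Fourier error from a single step in terms of $\|v\|_2$ and $\|v\|_4$ of the \emph{current} reduced vector, by approximating the product $\prod_j \cos(2\pi\alpha S_j)$ (which arises after averaging over the truly-random signs in the smaller space) with a low-degree polynomial via Lemma \ref{cosApproxLem}. An additional ingredient you omit is Lemma \ref{momentControlLem}, which shows the $\ell_2$ and $\ell_4$ norms of the test vector are not badly distorted by the dimension-reduction maps — without this the error bound from Lemma \ref{denseVLem} cannot be iterated. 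Finally, the paper does not invoke Theorem \ref{thm:chernoff} or a separate tail/threshold case split at all; the Fourier reduction (Claim \ref{clm:introfouriertotv}) directly yields statistical distance, subsuming your tail regime. That shortcut is harmless, but it is not where the difficulty lies.
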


The best previous result even for \rom had a
seed-length of $O(\log n + \log^2(1/\eps))$ \cite{MekaZ13}. For
the important case of polynomially small error, $\eps =
1/\poly(n)$, our result gives the first improvement over the $O(\log^2
n)$ bound implied by directly applying known $\prg$s for space-bounded
machines  \cite{Nisan92, ImpagliazzoNW94}. 

Independently and concurrently De \cite{De14} gave a $\prg$ for
\emph{combinatorial shapes} introduced by \cite{GopalanMRZ13} with a seed-length of
$O(\log^{3/2}(n/\eps))$. These objects are more general than \rom but De's seed-length is
worse than ours.


\subsection{Other related work}\label{sec:previouswork}
Starting with the work of Diakonikolas et al.~\cite{DGJSV09}, there has been a lot of work on
constructing PRGs for halfspaces and related classes of intersections
of halfspaces and polynomial threshold functions over the domain $\pmo^n$
\cite{DiakonikolasKN10, GopalanOWZ10, HarshaKM12, MekaZ13, Kane11,
  Kane11b,Kane14}. Rabani and Shpilka \cite{RabaniS09} construct optimal
hitting set generators for halfspaces over $\pmo^n$; hitting set
generators are in general weaker than $\prg$s however.  

Another line of work gives constructions of
$\prg$s for halfspaces for the uniform distribution over
the sphere (\emph{spherical caps}) or the Gaussian distribution. This
case is  easier than constructing $\prg$s for halfspaces over
the hypercube; the latter objects are known to imply the former with comparable parameters. For spherical caps, Karnin, Rabani and Shpilka \cite{KarninRS12} gave
a $\prg$ with a seed-length of $O(\log n + \log^2(1/\eps))$. For the
Gaussian distribution, \cite{Kane14} gave a $\prg$ which achieves a
seed-length of $O(\log n + \log^{3/2}(1/\eps))$. Very recently,
Kothari and Meka \cite{KothariM14} gave a $\prg$ for spherical caps with
a seed-length of $\tilde{O}(\log(n/\eps))$. At a high level, \cite{KothariM14} also uses
the iterative dimension reduction approach like in \cite{KaneMN11,
  CelisRSW13, GopalanMRTV12}; however, the final construction and its
analysis are significantly different.  

\subsection{Overview of our constructions}\label{sec:overview}
\subsubsection*{Derandomized Chernoff bounds}

Our first attempt at constructing a $\prg$ for the Chernoff bound
applies a simple dimension reduction step iteratively.
\begin{enumerate}
\item Starting from a linear function $\sum_{i=1}^n w_ix_i$,
  (pseudo)randomly hash the variables into $\sqrt{n}$ buckets using a
  hash function $h$.
\item Use an $\eps$-biased string $x$ to sum up coefficients within a
  bucket. This gives a new linear function $\sum_{j=1}^{\sqrt{n}}v_jy_j$
  in $\sqrt{n}$ dimensions where $v_j = \sum_{i:h(i) =j}w_ix_i$.
\end{enumerate}

Repeating this step $\log\log(n)$ times, we get down to $\theta \in
\R$ which is the value we output. Call this generator $\calG'$. It is
easy to see that each output bit of $\calG'$  is the xor of $\log\log (n)$ bits from independent $\eps$-biased strings, where the hash
functions are used to select co-ordinates from each string. This
technique of applying pseudorandom dimension reduction iteratively is similar to
\cite{KaneMN11,CelisRSW13, GopalanMRTV12}.  

Does this generator give the desired tail behavior? Assume that we start from a
unit vector $w \in \R^n$.  To get tail bounds, we would like to
control the $\ell_2$ norm, which starts at $1$ but could increase
substantially for particular choices of $x$. The Chernoff bound says
that for truly random $x$, the $\ell_2$ norm is unlikely to increase
by more than a factor of $c\sqrt{\log(n)}$. Even if we manage to match
this tail behavior in each step by choosing $x$ pseudorandomly (which
is the problem we are trying to solve), the final bound we get would
be $O((\log n)^{\log\log(n)/2})$. Using $\eps$-biased $x$, we show a
weaker bound of $\polylog(n)$ for each step, giving an overall bound of
$d(n) = (\log(n))^{O(\log\log(n))}$. Showing this bound for one step requires a fair
amount of technical work, it works by decomposing the vector into weight scales
and tuning the amount of independence to the scale like in
\cite{GopalanMRTV12}. We leave open the question of whther $\calG'$
can itself give Chernoff-like tail bounds. 

Next we show that one gets the desired tail behaviour by hashing
variables into $m = \poly(d(n))$ buckets and using an independent copy of
$\calG'$  for each bucket. The reason is  the output of the resulting
generator can be viewed as the sum of $m$ independent {\em bounded}
random variables, which lets us apply Bernstein's inequality which guarantees
Chernoff-like tails for such variables. The boundedness comes from the
tail guarantee of $\calG'$: since large deviations are very unlikely,
we can condition on the event that they do not occur in any of the
buckets. The final step is to reduce to seed-length, we do
this by recycling the seed for the various independent copies of $\calG'$
using the INW generator \cite{ImpagliazzoNW94}, like in \cite{MekaZ13}. 

\subsubsection*{Fooling \rom}
Let us fix a test vector $v \in \zpm^n$ and error $\eps = 1/\poly(n)$.
Fooling \rom with polynomially small error is equivalent to fooling
\emph{linear sums} of the form $\dotp{v}{x}$ in statistical distance
with error $1/\poly(n)$. We shall adopt this view from now on.  


We start with a generator that uses iterated dimension reduction
and gradually-increasing independence as we did for derandomizing the
Chernoff bound. This by itself is not enough for
fooling sums in statistical distance. The reason is that there
exist small-bias spaces with exponentially small bias that are far from fooling linear sums in statistical distance, like the set of strings whose weight is divisible by $3$ \cite{ViolaW07}. 
We design a different generator to deal with such tests and then combine the two generators by xoring independent copies.

Next, note that showing closeness in statistical distance for
discrete random variables is equivalent to showing that their
\emph{Fourier transforms} are close. Using this, it suffices
to design a generator $\calG:\pmo^r \to \pmo^n$ such that for all
$\alpha \in \R$, the corresponding \emph{Fourier coefficient} $\E_y[\exp(2\pi i \alpha \dotp{v}{\calG(y)})]$ is
close to its value under the uniform distribution. Note that in order
to fool the mod $m$ test, it suffices to fool all $\alpha = j/m$ for
integers $j$.  We consider two cases based on how large $\alpha$ is
relative to $\|v\|_0 = k$.\\

\eat{The two cases we consider will capture a
shift in the quantitative behaviour of the Fourier coefficients and
also implicitly take care of the technical issues with using iterative
dimension reduction mentioned above. \pnote{This is too vague: can
  we be more concrete?}}

{\bf Large $\alpha$: } Here we consider $\alpha \gg 1/\sqrt{k}$. This includes the case
of modular tests where the modulus is much smaller than $\sqrt{k}$. We fool such tests using
an error reduction procedure. We start with the generator of \cite{GopalanMRZ13} which requires seed-length $O(\log n)$ to fool
such tests with constant error. We then reduce the error to inverse
polynomial at the expense of a $O(\log \log n)$ factor
in seed-length using standard machinery from pseudorandomness. 
While technically simple, this step is the botteleneck
in extending our result to more general halfpsaces: there is no analog
of the \cite{GopalanMRZ13} generator to start from.\\

{\bf Smaller $\alpha$: }This case which includes modular tests where the
modulus is $\Omega(\sqrt{k})$ is the harder case and technically the most novel
portion of this work. The qualitative difference from the other case
can be seen from the fact that when we sum $k$ random bits modulo $m = \omega(\sqrt{k})$, the
resulting distribution is no longer uniform over congruence classes.

The generator uses dimension reduction in a manner similar to what we used to derandomize the
Chernoff bound. Like before, the plan is to show that a single dimension reduction step does not incur too much error. However, the analysis is very different and requires several new tools. This step critically exploits the recursive structure of the generator: to analyze the error we can work as if the variables in the reduced space are given truly random signs and then recursively analyze the error in the reduced space. Working with truly random bits in the smaller-dimensional space helps us reduce bounding the error to finding good low-degree polynomial approximators for a certain product of cosines. In the most technically, involved part of our argument we use various analytic tools to find such low-degree approximators. One additional ingredient is that the above approach does not actually work for all test vectors but only for \emph{sufficiently well-spread out} vectors as measured by their $\ell_2$, $\ell_4$ norms. The final piece is to argue that the $\ell_2,\ell_4$ norms are not distorted too much by the dimension reduction steps.

\ignore{
Ideally we would have liked to make such a claim for all test
xfvectors, but this turns out to be false. What is true is that a single
dimension reduction step does not incur too much error when the test
vector is \emph{sufficiently well-spread out} as measured by the
$\ell_2$ and $\ell_4$-norms. In particular, in the most technically
involved part of our argument we bound the error from a single 
dimension reduction step as a function of the $\ell_2,\ell_4$ norms of
the test vector. We do this by reducing the problem to that of finding good low-degree polynomial approximators for a product of cosines and then find such low-degree polynomials by careful analytic calculations. This step also critically exploits the fact that when analyzing a single dimension-reduction step, we implicitly have a lot of true randomness in the form of the assignments in the dimension reduced space. 
Finally, we argue separately that the $\ell_2, \ell_4$ norms are not distorted too much by the dimension reduction steps.}
\pnote{Not very satified with this. Verbose, but the novelty of the
  proof does not come across.}


\section{Preliminaries}
\label{sec:prelims}

We start with some notation:
\begin{itemize}
\item For vectors $x \in  \R^n$, let $\nmp{x}$ denote the usual
  $\ell_p$-norms, and let $\nm{x}_0$ denote the size of the support of
  $x$. For a random variable $X$ and $p > 0$, let $\nmp{X} =
  \E[|X|^p]^{1/p}$.
\item For a multi-linear polynomial $Q:\R^n \to \R$, $\nmt{Q}^2$ denotes the sum of squares of coefficients of $Q$ and $\nmo{Q}$ denotes the sum of absolute values of the coefficients.
\item For vectors $u,v \in \R^n$, let $u \star v =(u_iv_i)_{i=1}^n$
  denote the coordinate-wise product.
\item For $v \in \dpm^n$ and $\alpha \in \R$, define
  $\phi_{v,\alpha}(x) = \exp(2 \pi i \alpha (v \cdot x))$.
\item For $v \in\R^n$ and a hash function $h:[n] \to [m]$, define
\begin{align}
\label{eq:hv}
h(v) = \sum_{j=1}^m \|v_{|h^{-1}(j)}\|_2^4
\end{align}  
\item For a hash function $h:[n] \to [m]$, let $A(h) \in \zo^{m \times
  n}$, be the matrix with $A(h)_{ji} = 1$ if and only if $h(i) = j$.
\item For a string $x \in \dpm^n$, let $D(x) \in \R^{n\times n}$ be
  the diagonal matrix formed by  $x$.
\item For two random variables $X,Y$ over a domain $\Omega$, their statistical distance is defined as $\dtv(X,Y) = \max_{A\subseteq \Omega} |\pr[X \in A] - \pr[Y \in A]|$. 
\item Unless otherwise stated $c,C$ denote universal constants.
\end{itemize}
Throughout we assume that $n$ is sufficiently large and that
$\delta,\epsilon>0$ are sufficiently small. All vectors here will be row vectors rather than column vectors.

\begin{defn}
For $n,m,\delta>0$ we say that a family of hash functions $\hh = \{h:[n] \to [m]\}$ is $\delta$-biased if for any $r \leq n$ distinct indices $i_1,i_2,\ldots,i_r \in [n]$ and $j_1,\ldots,j_r \in [m]$,
\[ \pr_{h\in_u \hh} \sbkets{ h(i_1) = j_1 \,\wedge \,h(i_2) = j_2
  \,\wedge\,\cdots\,\wedge h(i_r) = j_r} = \frac{1}{m^r} \pm \delta. \]

We say that such a family is $k$-wise independent if the above holds with $\delta=0$ for all $r\leq k$. We say that a distribution over $\dpm^n$ is $\delta$-biased or $k$-wise independent if the corresponding family of functions $h:[n]\to[2]$ is.
\end{defn}

Such families of functions can be generated using small seeds.
\begin{fact}\label{HashFamilyFact}
For $n,m,k,\delta>0$, there exist explicit $\delta$-biased families of hash functions $h:[n]\to [m]$ that are generated from a seed of length $s=O(\log(n/\delta))$. There are also, explicit $k$-wise independent families that are generated from a seed of length $s=O(k\log(nm))$.

Taking the pointwise sum of such generators modulo $m$ gives a family of hash functions that is both $\delta$-biased and $k$-wise independent generated from a seed of length $s=O(\log(n/\delta)+k\log(nm))$.
\end{fact}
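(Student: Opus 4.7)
The plan is to build the $\delta$-biased and $k$-wise independent families via standard pseudorandomness constructions, and then show that their pointwise sum modulo $m$ enjoys both guarantees simultaneously. There is no genuine obstacle here; the main thing to verify is that the combined construction actually inherits both properties.

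First I would construct a $\delta$-biased family $h_1:[n]\to[m]$. Take an explicit $\delta$-biased distribution over $\F_2^{n\lceil\log m\rceil}$ (for instance from the standard small-bias constructions, seed length $O(\log(n\log m/\delta))=O(\log(n/\delta))$ in the relevant regime $\log m=n^{O(1)}$), parse the output into $n$ blocks of $\lceil\log m\rceil$ bits each, and interpret each block as an element of $[m]$. Since the marginal of an $\eta$-biased distribution on any subset of bits is itself $\eta$-biased, and since an $\eta$-biased distribution on $\F_2^N$ satisfies $|\pr[X=y]-2^{-N}|\leq\eta$ pointwise, restricting to the $r\lceil\log m\rceil$ bits specifying $h_1(i_1),\ldots,h_1(i_r)$ immediately yields the required bound. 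Non-power-of-two $m$ is handled by a standard padding/rejection step with only a constant-factor loss in $\delta$.

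For the $k$-wise independent family $h_2:[n]\to[m]$ I would use the textbook polynomial-evaluation scheme: choose a finite field $\F_q$ with $q\geq\max(n,m)$ a multiple of $m$, draw a uniformly random degree-$(k-1)$ polynomial $p\in\F_q[x]$ specified by its $k$ coefficients (seed length $O(k\log q)=O(k\log(nm))$), and set $h_2(i)=p(i)\bmod m$ after identifying $[n]$ with a subset of $\F_q$. Lagrange interpolation guarantees that $(p(i_1),\ldots,p(i_k))$ is uniform on $\F_q^k$ for any distinct inputs, and reducing modulo $m$ preserves this exactly.

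Finally, draw $h_1$ and $h_2$ independently and set $h(i)=(h_1(i)+h_2(i))\bmod m$; the total seed length is $O(\log(n/\delta)+k\log(nm))$. Conditioning on $h_2$ and using translation invariance,
\[
\pr[h(i_s)=j_s\ \forall s]=\E_{h_2}\!\left[\pr_{h_1}[h_1(i_s)=j_s-h_2(i_s)\ \forall s]\right]=\frac{1}{m^r}\pm\delta,
\]
so $h$ inherits the $\delta$-bias of $h_1$. Conditioning instead on $h_1$, for any $r\leq k$ distinct positions the values $h_2(i_1),\ldots,h_2(i_r)$ are exactly uniform on $[m]^r$, and translating by fixed values of $h_1$ preserves this, so $h$ also inherits the $k$-wise independence of $h_2$.
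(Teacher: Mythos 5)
The paper states this as a known Fact without proof, so I am evaluating your argument on its own terms. Your construction is the standard textbook one, and the combination step is clean and correct: conditioning on $h_2$ and translating shows the sum inherits the $\delta$-bias of $h_1$, and conditioning on $h_1$ and translating shows it inherits the exact $k$-wise independence of $h_2$. That part needs no change.

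The gap is in the $k$-wise independent construction. You ask for a finite field $\F_q$ with $q\geq\max(n,m)$ ``a multiple of $m$,'' but a prime power $q$ can be a multiple of $m$ only when $m$ is itself a prime power. For composite $m$ with more than one prime factor (e.g.\ $m=6$) no such $q$ exists, and the claim that ``reducing modulo $m$ preserves this exactly'' has no construction behind it. The usual escape hatches do not obviously recover the stated bound: reducing a $k$-wise uniform sequence over $\Z_q$ (for a large prime $q$) modulo $m$ yields only \emph{almost} $k$-wise independence, which is not the $\delta=0$ condition in the paper's definition; and the CRT product of per-prime-power constructions is exactly $k$-wise independent but, after forcing each field to have at least $n$ elements, picks up an extra factor of the number of prime divisors of $m$ in the seed length, exceeding $O(k\log(nm))$ in general. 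In the paper's actual uses this is harmless --- $m$ is either $2$ (for the strings in $\dpm^{n_i}$) or a target dimension $n_{i+1}\approx n_i^{1/2}$ that one is free to round to a nearby prime power --- but your proof should say this explicitly rather than assert the general claim. A smaller point in the same vein: for the $\delta$-biased part with $m$ not a power of two, ``padding/rejection'' is not quite right for a deterministic hash family (each seed must output a total function $[n]\to[m]$); the standard fix is to use a block of $\lceil\log m\rceil+O(\log(1/\delta))$ bits per coordinate and reduce modulo $m$, absorbing the resulting nonuniformity into $\delta$. This changes only constants, but it is the correct statement of the trick.
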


\subsection{Basic Results}

We collect some known results about pseudorandomness and prove some
other technical results that will be used later. We defer all proofs
for this section to Appendix \ref{sec:appendix}.

We will use the following result from \cite{GopalanMRZ13} giving $\prg$s
for signed majorities.
\begin{thm}\cite{GopalanMRZ13}
\label{simpleGeneratorThm}
For $n,\eps>0$ there exists an explicit pseudorandom generator, $Y\in \dpm^n$ generated from a seed of length $s=O(\log(n)+\log^2(1/\epsilon))$ so that for any $v\in\zpm^n$ and $X \in_u \dpm^n$, we have that $\dtv(v\cdot Y,v\cdot X)\leq \epsilon$.
\end{thm}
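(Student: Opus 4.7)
My approach is Fourier-analytic. Since $v \in \zpm^n$, both $v\cdot Y$ and $v\cdot X$ take integer values in $[-K,K]$ with $K = \|v\|_0$. By Parseval on $\Z/(2K+1)\Z$ and Cauchy-Schwarz,
\[ \dtv(v\cdot Y,\, v\cdot X) \;\leq\; \tfrac{1}{2}\sqrt{\sum_{j=0}^{2K}\big|\hat p(\omega_j) - \hat q(\omega_j)\big|^2}, \qquad \omega_j = \frac{j}{2K+1}, \]
where $\hat p(\omega) = \E[\exp(2\pi i \omega v\cdot Y)]$ and $\hat q$ is the analogue for uniform $X$. Thus it suffices to bound each Fourier deviation at scale $\eps/\sqrt{K}$ uniformly in $\omega$.

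Factoring coordinate-wise, $\exp(2\pi i \omega v\cdot y) = \prod_{i\in\supp(v)}(\cos(2\pi\omega) + i v_i \sin(2\pi\omega)\, y_i)$, and expanding the product,
\[ \hat p(\omega) = \sum_{S \subseteq \supp(v)}\cos(2\pi\omega)^{K-|S|}(i\sin(2\pi\omega))^{|S|}\Big(\prod_{i\in S}v_i\Big)\,\E\Big[\prod_{i\in S}Y_i\Big], \]
with $\hat q(\omega) = \cos(2\pi\omega)^K$ arising from the $S=\emptyset$ term. I would then take $Y = Y^{(1)}\oplus Y^{(2)}$ where $Y^{(1)}$ is $k$-wise independent with $k = \Theta(\log^2(1/\eps))$ and $Y^{(2)}$ is $\delta$-biased with $\delta = \poly(\eps/n)$, which kills every term with $1\leq |S|\leq k$ and bounds each remaining term's subset-product expectation by $\delta$ in magnitude. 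Using the symmetry $\hat q(1/2-\beta)=(-1)^K\overline{\hat q(\beta)}$, I restrict to $\omega\in[0,1/4]$ and split the analysis: for $|\omega|$ small the sine factor $|\sin(2\pi\omega)|$ is small, so the tail $\sum_{s>k}\binom{K}{s}|\cos(2\pi\omega)|^{K-s}|\sin(2\pi\omega)|^s$ is $\poly(\eps/n)$-small by a Chernoff estimate on the implicit binomial of mean $\approx K|\sin(2\pi\omega)| \ll k$; for $|\omega|$ bounded away from $0$ and $1/2$ the cosine factor $|\cos(2\pi\omega)|^K$ is already exponentially small in $K\omega^2$, dominating the target error.

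The main technical obstacle is twofold. First, achieving the additive seed length $O(\log n + \log^2(1/\eps))$ instead of the multiplicative $O(\log n\cdot\log(1/\eps))$ that one gets by naively invoking Fact~\ref{HashFamilyFact} with $k = \log(1/\eps)$ on all $n$ coordinates requires a more clever layered construction: for instance, using a pairwise-independent hash to partition $[n]$ into $\poly(1/\eps)$ buckets and spending the $k$-wise independence only on the $\poly(1/\eps)$ bucket sums, while sharing a single short $\delta$-biased string across all positions. Second, the crude term-by-term bound $\delta\cdot(|\cos(2\pi\omega)|+|\sin(2\pi\omega)|)^K \leq \delta\cdot 2^{K/2}$ used in the tail analysis above is too weak near $\omega = 1/8$, where it would force $\delta$ to be exponentially small in $K$; avoiding this blow-up requires exploiting Parseval-type cancellations among the $\binom{K}{s}$ subsets of each size using the Fourier structure of the $\delta$-biased distribution, rather than simply bounding each subset contribution in absolute value. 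This refined analysis of the large-$\omega$ regime is the heart of the argument.
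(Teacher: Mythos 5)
This theorem is \emph{cited} by the paper from \cite{GopalanMRZ13}; the paper itself gives no proof of it, so there is no ``paper's own proof'' to compare against. Evaluating your proposal on its own terms, though, the second obstacle you flag is not merely a technical nuisance but a fatal flaw of the whole approach. For a $k$-wise independent, $\delta$-biased $Y$ with $k \ll K = \|v\|_0$, the bound $\left|\hat p(\omega)-\hat q(\omega)\right|\leq \delta\sum_{s>k}\binom{K}{s}|\cos(2\pi\omega)|^{K-s}|\sin(2\pi\omega)|^s$ really is of order $\delta\cdot 2^{\Omega(K)}$ for $\omega$ near $1/8$ (or $\omega=1/3$, corresponding to a mod-3 test), and this is not fixable by ``Parseval-type cancellations'': the only thing $\delta$-bias tells you about a $\delta$-biased distribution is that each individual $\E\bigl[\prod_{i\in S}Y_i\bigr]$ has magnitude at most $\delta$, and there is no further structure to exploit. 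The paper itself emphasizes precisely this point in the overview, noting that ``there exist small-bias spaces with exponentially small bias that are far from fooling linear sums in statistical distance, like the set of strings whose weight is divisible by $3$'' (the Viola--Wigderson example), and this is exactly why the paper treats the large-$\alpha$ regime by invoking the \cite{GopalanMRZ13} generator as a black box rather than by any small-bias$+$bounded-independence argument. So what you have identified as ``the heart of the argument'' is an idea you do not supply, and the route you propose is one that demonstrably cannot reach it.

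The actual \cite{GopalanMRZ13} construction is structurally different: it hashes coordinates into $\mathrm{poly}\log(1/\eps)$ or $\poly(1/\eps)$ buckets and uses a CLT/moment-matching argument within and across buckets, with limited independence applied at the bucket level (giving the additive $\log^2(1/\eps)$ term without a multiplicative $\log n$ blow-up). That hashing structure is what supplies the missing ingredient for large $\omega$: after hashing, each bucket sum already looks approximately Gaussian, so the analysis never needs to control individual Fourier coefficients at $\omega$ of constant size via term-by-term small-bias bounds. The first obstacle you mention (seed length) is likely surmountable along the lines you sketch, but it does not rescue the argument without the bucketing-plus-CLT structure that makes the large-$\omega$ case go through.
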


We shall use $\prg$s for small-space machines or read-once branching programs (ROBP) of Nisan \cite{Nisan92} and Impagliazzo, Nisan and Wigderson \cite{ImpagliazzoNW94}.

\begin{definition}[$(S,D,T)$-ROBP]
An $(S,D,T)$-ROBP $M$ is a layered directed graph with $T+1$ layers and $2^S$ vertices per layer with the following properties.
\begin{itemize}
\item The first layer has a single \emph{start} node and the last layer has two nodes labeled $0,1$ respectively.
\item A vertex $v$ in layer $i$, $0\leq i < T$ has $2^D$ edges to layer $i+1$ each labeled with an element of $\zo^D$.
\end{itemize}
A graph $M$ as above naturally defines a function $M:\left(\zo^D\right)^T \to \zo$ where on input $(z_1,\ldots,z_T) \in \left(\zo^D\right)^T$ one traverses the edges of the graph according to the labels $z_1,\ldots,z_T$ and outputs the label of the final vertex reached.
\end{definition}

\begin{thm}[\cite{Nisan92}, \cite{ImpagliazzoNW94}]\label{th:inwprg}
There exists an explicit $\prg$ $\calG^{INW}:\zo^r \to \left(\zo^D\right)^T$
which $\epsilon$-fools $(S,D,T)$-branching programs and has
seed-length $r = O(D + S \log T + \log(T/\delta) \cdot (\log T))$. 
\end{thm}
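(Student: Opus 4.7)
The plan is to construct $\calG^{INW}$ by the standard recursive INW construction, building generators $G_i : \zo^{r_i} \to (\zo^D)^{2^i}$ for $i = 0, 1, \ldots, \lceil \log T \rceil$, with $G_0$ the identity on $\zo^D$ (so $r_0 = D$), and recursive step
\[
G_i(s, y) \;=\; \bkets{G_{i-1}(s),\; G_{i-1}(E_i(s, y))},
\]
where $E_i : \zo^{r_{i-1}} \times \zo^{t_i} \to \zo^{r_{i-1}}$ has the following \emph{extractor-like} property: for every function $g : \zo^{r_{i-1}} \to [2^S]$, the joint distribution $(g(s), E_i(s,y))$ with $s, y$ independent uniform is $\epsilon_i$-close in statistical distance to $(g(s), s')$ with $s, s'$ independent uniform. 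Such an $E_i$ is realized from a strong averaging sampler, or equivalently from a short random walk on an expander of spectral gap $\Theta(\epsilon_i)$ on $\{0,1\}^{r_{i-1}}$, using $t_i = O(S + \log(1/\epsilon_i))$ additional bits. We then set $\calG^{INW} = G_{\lceil \log T \rceil}$.

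Correctness is a hybrid argument, proving inductively that $G_i$ is an $\epsilon^{(i)}$-$\prg$ for $(S, D, 2^i)$-ROBPs. Fix such an ROBP $M$ and split it at the middle layer into a first half $M_1$, which reads $G_{i-1}(s)$ and ends in some state $g(s) \in [2^S]$, and a second half $M_2$, which reads $G_{i-1}(E_i(s, y))$ from the state it is handed. Applying the extractor property to $g$ shows that, conditioned on the reached state, $E_i(s,y)$ looks $\epsilon_i$-close to uniform, so the acceptance probability of $M_2$ given state $\sigma$ is within $\epsilon_i$ of what it would be if $E_i(s,y)$ were a fresh uniform seed; the induction hypothesis for $G_{i-1}$ then brings this within another $\epsilon^{(i-1)}$ of the truly uniform acceptance probability $q(\sigma)$. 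Finally, the expectation $\E_s[q(g(s))]$ is handled by a second invocation of the induction hypothesis, applied to $M_1$ with its state at the middle layer identified as the output (reducing a $[2^S]$-valued output to $\{0,1\}$-valued tests by taking one bit of $\sigma$ at a time and union-bounding, which costs only constants after absorbing into $\epsilon_i$). Collecting terms yields the recurrence $\epsilon^{(i)} \leq 2\,\epsilon^{(i-1)} + O(\epsilon_i)$, and setting $\epsilon_i \approx \epsilon / \log T$ telescopes to total error $\epsilon$.

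The seed length satisfies $r_i = r_{i-1} + t_i$, which unrolls to
\[
r_{\lceil \log T \rceil} \;=\; D + \sum_{i=1}^{\lceil \log T \rceil} O\bkets{S + \log(1/\epsilon_i)} \;=\; O\bkets{D + S\log T + \log T \cdot \log(T/\epsilon)},
\]
matching the statement. The technically delicate part of the argument, and the main obstacle, is obtaining $E_i$ with seed length depending only on $S$ and $\log(1/\epsilon_i)$ rather than growing with $r_{i-1}$ or $D$. This is exactly the content of the spectral / averaging-sampler analysis of Nisan~\cite{Nisan92} and Impagliazzo--Nisan--Wigderson~\cite{ImpagliazzoNW94}: one shows that for any $g : \zo^{r_{i-1}} \to [2^S]$, the $2^{r_{i-1}}$-dimensional indicator vector of a level set of $g$ has $\ell_2$ mass at most $2^{-(r_{i-1}-S)/2}$ along the nontrivial eigenspaces of the expander, so after a walk of $t_i/\log(\text{degree})$ steps the induced distribution is $\epsilon_i$-close to uniform on $\zo^{r_{i-1}}$ conditional on the end state of $M_1$. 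Verifying this spectral bound and the degree/gap tradeoff is what makes $t_i$ as small as advertised.
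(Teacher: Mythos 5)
The paper cites this theorem from Nisan~\cite{Nisan92} and Impagliazzo--Nisan--Wigderson~\cite{ImpagliazzoNW94} without reproducing a proof, so there is no in-paper argument to compare against; your reconstruction via the recursive INW construction with an expander-walk extractor is the right approach.

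Your error analysis has a genuine arithmetic slip, though. The recurrence $\epsilon^{(i)} \leq 2\epsilon^{(i-1)} + O(\epsilon_i)$ is correct, but setting $\epsilon_i \approx \epsilon/\log T$ does not telescope to $\epsilon$: unrolling from $\epsilon^{(0)} = 0$ gives $\epsilon^{(k)} \leq \sum_{j=1}^{k} 2^{k-j}\,O(\epsilon_j)$ with $k = \lceil\log T\rceil$, which for $\epsilon_j = \epsilon/\log T$ is $\Theta(\epsilon T/\log T)$, not $O(\epsilon)$. The doubling reflects that $E_j$ is invoked at $2^{k-j}$ nodes of the recursion tree, so its error is paid $2^{k-j}$ times. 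The correct setting is $\epsilon_j = \Theta(\epsilon/T)$ at every level (or, more finely, $\epsilon_j = \Theta(\epsilon/(2^{k-j}\log T))$); with $\epsilon_j = \epsilon/T$ the unrolled sum is $<\epsilon$, and since $\log(1/\epsilon_j) = \log(T/\epsilon)$ at every level, the seed length $\sum_j O(S + \log(1/\epsilon_j)) = O(S\log T + \log T\cdot\log(T/\epsilon))$ comes out as claimed. Notice that the seed-length formula you actually wrote down is the one forced by $\epsilon_j = \epsilon/T$: with your stated choice you would instead have obtained $O(S\log T + \log T\cdot\log(\log T/\epsilon))$, an internal inconsistency that flags the slip.

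A smaller point: handling the $[2^S]$-valued end state of $M_1$ by ``taking one bit of $\sigma$ at a time and union-bounding'' would cost a factor of $S$, not a constant, and even that does not directly control the quantity you need. The standard fix is to note that the induced test $s \mapsto q(g(s))$, with $q(\sigma) = \Pr[M_2\ \text{accepts}\mid\sigma]$, is a $[0,1]$-valued function of the first-half seed; since $q(g(s)) = \int_0^1 \mathbf{1}[q(g(s))>\theta]\,d\theta$ and each thresholded function $\mathbf{1}[q(g(\cdot))>\theta]$ is itself a Boolean $(S,D,2^{i-1})$-ROBP (just relabel the accepting states of $M_1$), $\epsilon^{(i-1)}$-fooling Boolean tests already $\epsilon^{(i-1)}$-fools the $[0,1]$-valued one with no loss.
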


We will need to make use of the hypercontractive inequality (see \cite{ODonnell}):
\begin{lem}[Hypercontractivity]\label{hcLem}
Let $x \sim_u \dpm^n$. Then, for a degree $d$ polynomial $Q$ and an even integer $p \geq 2$,
$$\E\sbkets{ Q(x)^p} \leq (p-1)^{p d/2} \cdot \|Q\|_2^p.$$
\end{lem}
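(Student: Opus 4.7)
The plan is to derive this from the Bonami--Beckner $(2,p)$-hypercontractive inequality for the noise operator $T_\rho$ on $\dpm^n$: for every $f:\dpm^n \to \R$ and every $\rho \in [0,1]$ with $\rho \leq 1/\sqrt{p-1}$,
\[ \|T_\rho f\|_p \leq \|f\|_2. \]
This is a standard statement proved by first establishing the two-point ($n=1$) case and then tensoring, using that $T_\rho$ factors coordinatewise and that $L^p$ norms multiply on independent variables. For even integer $p$, the two-point inequality reduces, after Taylor expanding $(a+\rho b X)^p$ with $X\in_u\dpm$, to the coefficient comparison $\binom{p}{2k}\rho^{2k}\leq \binom{p/2}{k}$ for all $k\leq p/2$, which holds precisely when $\rho\leq 1/\sqrt{p-1}$. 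This elementary two-point estimate is the only nontrivial ingredient, and would be the main obstacle in a self-contained proof; here we simply invoke the statement from \cite{ODonnell}.

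To deduce the lemma from it, I would set $\rho = 1/\sqrt{p-1}$ and expand $Q$ in the Walsh/monomial basis as $Q = \sum_{|S|\leq d}\widehat Q(S)\chi_S$. Since $Q$ has degree at most $d$, the polynomial
\[ R \;:=\; \sum_{|S|\leq d}\rho^{-|S|}\widehat Q(S)\,\chi_S \]
is well defined and satisfies $T_\rho R = Q$. Applying hypercontractivity to $R$ gives $\|Q\|_p = \|T_\rho R\|_p \leq \|R\|_2$. By Parseval together with the bound $\rho^{-|S|}\leq \rho^{-d}$ for $|S|\leq d$,
\[ \|R\|_2^2 \;=\; \sum_{|S|\leq d}\rho^{-2|S|}\widehat Q(S)^2 \;\leq\; \rho^{-2d}\,\|Q\|_2^2. \]
Combining these two inequalities yields $\|Q\|_p \leq \rho^{-d}\|Q\|_2 = (p-1)^{d/2}\|Q\|_2$, and raising to the $p$-th power gives $\E[Q(x)^p] \leq (p-1)^{pd/2}\|Q\|_2^p$, which is the claimed bound.
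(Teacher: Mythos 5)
Your derivation is correct and is the standard proof of this fact: reduce to the Bonami–Beckner $(2,p)$ noise-operator inequality, write $Q = T_\rho R$ with $\rho = 1/\sqrt{p-1}$, and use Parseval plus $\rho^{-|S|}\le \rho^{-d}$. The paper itself gives no proof of Lemma~\ref{hcLem} and simply cites \cite{ODonnell}, whose proof is exactly the argument you present, so there is nothing substantive to contrast.
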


\begin{lem}[Hypercontractivity $\delta$-biased]\label{hcBiasedLem}
Let $x \sim \calD$ be drawn from a $\delta$-biased distribution. Then,
for a degree $d$ polynomial $Q$ and an even integer $p \geq 2$, 
\[ \E\sbkets{ Q(x)^p} \leq (p-1)^{p d/2} \cdot \|Q\|_2^p + \|Q\|_1^p \delta.\]
\end{lem}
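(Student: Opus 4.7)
The plan is to decompose
\[
\E_{x\sim\calD}[Q(x)^p] \;=\; \E_{x\in_u\dpm^n}[Q(x)^p] \;+\; \Bigl(\E_{x\sim\calD}[Q(x)^p] - \E_{x\in_u\dpm^n}[Q(x)^p]\Bigr),
\]
bound the first summand by $(p-1)^{pd/2}\|Q\|_2^p$ using the standard hypercontractive inequality (Lemma~\ref{hcLem}), and show that the second piece, namely the error from swapping truly uniform bits for $\delta$-biased ones, is at most $\delta\|Q\|_1^p$. This reduces the whole lemma to a clean Fourier-analytic statement about $Q^p$.

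To control the error, I would expand $Q^p$ as a multilinear polynomial on $\dpm^n$. Writing $Q = \sum_S \hat Q(S)\,\chi_S$ with $\chi_S(x) = \prod_{i\in S}x_i$, and reducing via $x_i^2 = 1$, one gets
\[
R \;:=\; Q^p \;=\; \sum_T \hat R(T)\,\chi_T, \qquad \hat R(T) \;=\; \sum_{S_1\triangle\cdots\triangle S_p = T} \prod_{j=1}^p \hat Q(S_j).
\]
Under the uniform distribution $\E[\chi_T] = 0$ for every $T \neq \emptyset$, while under $\calD$ the $\delta$-bias assumption (applied to the coordinates indexed by $T$) gives $|\E_{\calD}[\chi_T]| \leq \delta$ for $T \neq \emptyset$. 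Therefore
\[
\Bigl|\E_{\calD}[Q^p] - \E_{u}[Q^p]\Bigr| \;=\; \Bigl|\sum_{T\neq\emptyset} \hat R(T)\,\E_{\calD}[\chi_T]\Bigr| \;\leq\; \delta \sum_{T\neq\emptyset}|\hat R(T)| \;\leq\; \delta\,\|R\|_1.
\]

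The remaining piece is the sub-multiplicative bound $\|R\|_1 \leq \|Q\|_1^p$, which follows immediately from the triangle inequality on the expression for $\hat R(T)$:
\[
\|R\|_1 \;=\; \sum_T \bigl|\hat R(T)\bigr| \;\leq\; \sum_{S_1,\ldots,S_p}\prod_{j=1}^p|\hat Q(S_j)| \;=\; \Bigl(\sum_S|\hat Q(S)|\Bigr)^p \;=\; \|Q\|_1^p.
\]
Combining the hypercontractive bound on the uniform expectation with the $\delta\|Q\|_1^p$ bound on the error yields the stated inequality. No step is really a bottleneck here; the proof is a short Fourier calculation whose only substantive observation is that the Fourier $\ell_1$-norm is sub-multiplicative under multiplication of multilinear polynomials. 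The form of the conclusion is the natural one: each time truly uniform bits are replaced by $\delta$-biased ones inside a hypercontractive moment bound, one incurs an additive error of order $\delta\,\|Q\|_1^p$.
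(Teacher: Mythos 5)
Your proof is correct and follows essentially the same route as the paper's: the paper's entire argument is the observation that $\|Q^p\|_1 \leq \|Q\|_1^p$ (sub-multiplicativity of the Fourier $\ell_1$-norm), which, combined with the fact that a $\delta$-biased distribution changes $\E[\chi_T]$ by at most $\delta$ for each nonempty $T$, gives the additive error term, with Lemma~\ref{hcLem} handling the uniform-distribution part. You have simply written out in detail the two steps the paper leaves implicit.
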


We will use the following Chernoff-like tail bound for small-bias spaces.
\begin{lem}\label{lm:epschernoff}
  For all $v \in \rn$ with $\nmt{v} = 1$ and $x \sim \calD$ $\epsilon$-biased over $\dpm^n$, and $t \geq 1$
$$\pr\sbkets{\abs{\iprod{v}{x}} > t} \leq 2 \exp(-t^2/4) + \nm{v}_0^{t^2} \cdot \epsilon.$$
\end{lem}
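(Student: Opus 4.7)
The plan is to prove the bound by the $p$-th moment method. Let $Q(x)=\iprod{v}{x}$, a linear polynomial with $\nmt{Q}=\nmt{v}=1$ and, by Cauchy--Schwarz, $\nmo{Q}=\sum_i |v_i|\leq \sqrt{\nm{v}_0}$. For even $p$, $Q(x)^p=|Q(x)|^p$ is nonnegative, so Markov gives
\[
\pr_{x\sim\calD}\sbkets{|Q(x)|>t}\;\leq\;\frac{\E_\calD[Q(x)^p]}{t^p},
\]
and the task reduces to upper bounding $\E_\calD[Q^p]$.

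To control $\E_\calD[Q^p]$ I would expand $Q^p$ into monomials $v_{i_1}\cdots v_{i_p}\,x_{i_1}\cdots x_{i_p}$ and group by multi-index. Under the uniform distribution only monomials in which every variable appears an even number of times survive, while under an $\epsilon$-biased distribution each of the remaining monomial expectations has magnitude at most $\epsilon$; summing gives
\[
\E_\calD[Q^p]\;\leq\;\E_U[Q^p]\;+\;\epsilon\,\nmo{Q}^p.
\]
For the uniform ``main'' term I would use the sharp Khintchine / Gaussian-comparison bound $\E_U[Q^p]\leq (p-1)!!\,\nmt{v}^p=(p-1)!!$, which follows from the pointwise comparison $\E[x_i^{2k}]=1\leq(2k-1)!!=\E[G_i^{2k}]$ for a standard Gaussian $G_i$; Stirling then yields $(p-1)!!=O((p/e)^{p/2})$.

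Now pick $p$ to be the even integer nearest $t^2$. The main term contributes
\[
\frac{(p-1)!!}{t^p}\;=\;O\!\left(\left(\frac{p}{e\,t^2}\right)^{p/2}\right)\;=\;O(e^{-t^2/2})\;\leq\;2\exp(-t^2/4)
\]
for $t\geq 1$ (absorbing the implicit Stirling constant into the factor $2$; when $2\exp(-t^2/4)\geq 1$ the lemma is trivially true). The bias term contributes
\[
\epsilon\,\frac{\nmo{Q}^p}{t^p}\;\leq\;\epsilon\,\left(\frac{\nm{v}_0}{t^2}\right)^{p/2}\;\leq\;\epsilon\,\nm{v}_0^{t^2},
\]
where the final step uses $t\geq 1$, $\nm{v}_0\geq 1$, and $p\leq 2t^2$ to check both $(\nm{v}_0/t^2)^{p/2}\leq \nm{v}_0^{p/2}$ and $\nm{v}_0^{p/2}\leq \nm{v}_0^{t^2}$. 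Adding the two contributions gives exactly the claimed bound.

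The only real subtlety is that the hypercontractive bound of Lemma~\ref{hcBiasedLem} specialized to degree one gives $(p-1)^{p/2}$, not $(p-1)!!$, in the main term; optimizing that estimate over $p$ yields an exponent of only $1/(2e)<1/4$, so the stated constant $1/4$ genuinely requires the sharper Khintchine moment bound (or, equivalently, a direct Rademacher-vs.-Gaussian comparison). Apart from this, the proof is routine moment-method bookkeeping.
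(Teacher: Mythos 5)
Your proof is correct and follows essentially the same route as the paper's: Markov on the $p$-th moment with $p\approx t^2$, the small-bias error controlled by $\|Q^p\|_1\leq\|Q\|_1^p$, and the sharp Khintchine/Gaussian-comparison bound $\E_U[Q^p]\leq(p-1)!!$ in place of hypercontractivity (a point the paper also flags as needing a ``slightly strengthened'' moment bound). Your handling of the bias term via $\|Q\|_1\leq\sqrt{\|v\|_0}$ is in fact a bit tighter than the paper's $\|v\|_0^p$ bound, but both are absorbed into $\|v\|_0^{t^2}\epsilon$ once $p\leq 2t^2$.
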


The next two lemmas quantify load-balancing properties of
$\delta$-biased hash functions in terms of the $\ell_p$-norms of
vectors. 

\begin{lem}\label{hashmomentsLem}
Let $p\geq 2$ be an integer. Let $v \in \R^n$ and $\hh = \{h:[n] \to [m]\}$ be either a $\delta$-biased hash family for $\delta>0$ or a $p$-wise independent family for $\delta=0$. Then
$$\E[h(v)^p] \leq O(p)^{2p} \left(\frac{\|v\|_2^4}{m}\right)^p + O(p)^{2p} \|v\|_4^{4p}+ m^p \|v\|_2^{4p}\delta.$$
\end{lem}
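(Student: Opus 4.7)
The plan is to expand $\E[h(v)^p]$ combinatorially and control the resulting sum via a graph-theoretic inequality together with standard $\ell^p$ estimates. Writing $X_{j,i} = \mathbf{1}[h(i) = j]$ so that $h(v) = \sum_{j,i,i'} X_{j,i} X_{j,i'} v_i^2 v_{i'}^2$, the $p$-th moment expands as
\[
\E[h(v)^p] = \sum_{\vec j \in [m]^p} \sum_{\vec i \in [n]^{2p}} \Pr\Bigl[\,\bigwedge_{k=1}^p h(i_k) = h(i_k') = j_k\,\Bigr] \prod_{k=1}^p v_{i_k}^2 v_{i_k'}^2.
\]
For each tuple $\vec i = (i_1, i_1', \ldots, i_p, i_p')$, I would associate the set partition $\pi$ of the $2p$ slots (two slots are grouped iff their $\vec i$-values coincide) and the multigraph $G_\pi$ on the blocks whose $k$-th edge joins the block of slot $k$ to the block of slot $k'$ (self-loops allowed). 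Let $q = q(\pi)$ denote the number of blocks and $c = c(\pi)$ the number of connected components of $G_\pi$.

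Under the $\delta$-biased assumption, the joint-collision probability is $m^{-q} \pm \delta$ on the locus where the $j_k$'s agree on each component of $G_\pi$, and zero otherwise. Summing the $m^{-q}$-part over consistent $\vec j$'s collapses to a factor $m^{c-q}$ per $\vec i$, while the $\delta$-error, after dropping consistency, is at most $\delta \cdot m^p \cdot (\sum_{i, i'} v_i^2 v_{i'}^2)^p = m^p\delta \|v\|_2^{4p}$, matching the third term of the lemma. The task is then to bound $\sum_{\vec i} m^{c-q} \prod_k v_{i_k}^2 v_{i_k'}^2$.

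Grouping by $\pi$ and dropping the requirement that distinct blocks receive distinct index labels, the contribution of $\pi$ is at most $m^{c-q} \prod_B \|v\|_{2d_B}^{2 d_B}$, where $d_B$ denotes the size of block $B$. Applying $\|v\|_{2d} \le \|v\|_4$ for $d \ge 2$ (the monotonicity of $\ell^p$-norms under counting measure) to the non-singleton blocks, this factorizes as $m^{c-q}\|v\|_2^{2\sigma}\|v\|_4^{4p - 2\sigma}$, where $\sigma$ is the number of singleton blocks of $\pi$.

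The heart of the argument is the graph-theoretic inequality $\sigma \le 2(q - c)$. Since every singleton is a leaf of the non-self-loop subgraph of $G_\pi$, its unique incident edge must lie in every spanning forest; an edge between two singletons serves as the leaf-edge of both endpoints and is therefore double-counted, so the forest contains at least $\sigma/2$ edges, giving $q - c \ge \sigma/2$. Picking $k = \lceil\sigma/2\rceil \in [\sigma/2,\, q-c]$, the weight of $\pi$ is bounded by $(\|v\|_4^4)^{p-k}(\|v\|_2^4/m)^k \le (\|v\|_4^4 + \|v\|_2^4/m)^p$ using $(\|v\|_4/\|v\|_2)^{4k-2\sigma} \le 1$ and $m^{c-q+k} \le 1$. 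Summing over all $B(2p) \le (2p)^{2p} = O(p)^{2p}$ partitions yields $O(p)^{2p}((\|v\|_2^4/m)^p + \|v\|_4^{4p})$ for the main term, matching the lemma after adding the $\delta$-error. The main obstacle I foresee is establishing $\sigma \le 2(q - c)$ cleanly; the remainder is routine bookkeeping.
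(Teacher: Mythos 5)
Your proof is correct and its skeleton matches the paper's: expand $\E[h(v)^p]$ over the $2p$ index slots, sum over the bucket-indices to pick up a factor $m^{c-q}$ (number of distinct values minus number of collision components), then bound the remaining index sum by classifying blocks into $\|v\|_2^2$- and $\|v\|_4^{2d}$-contributors. The one real deviation is the key combinatorial inequality: the paper establishes the sharper bound $q-c\geq S/2$ with $S$ the number of odd-multiplicity indices, via the observation that every such index lies in a collision class of size at least two (the paper's phrase ``$E$ is at least $T-S/2$'' is a sign typo for ``at most''), whereas you prove the weaker but cleaner $q-c\geq \sigma/2$ with $\sigma$ the number of singletons, via the leaf/spanning-forest count. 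Since the $\|v\|_2^2$-factors in your per-partition bound arise precisely from singleton blocks, your weaker inequality is exactly what your bookkeeping needs, so the argument closes correctly.
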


\begin{lem}\label{lm:hashing}
For all $v \in \R^n_+$ and $\hh = \{h:[n] \to [m]\}$ a $\delta$-biased family, and $j \in [m]$, and all even $p \geq 2$,
$$\pr\sbkets{ \abs{\nmo{v_{|h^{-1}(j)}} - \nmo{v}/m} \geq t} \leq \frac{O( p)^{p/2} \nmt{v}^p + \nmo{v}^p \delta}{t^p}.$$
\end{lem}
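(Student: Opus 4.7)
My plan is to apply Markov's inequality to a high even moment and bound that moment using the $\delta$-bias of $\hh$. Let $Y_i = \mathbf{1}[h(i) = j]$ and define
\[
X := \nmo{v_{|h^{-1}(j)}} - \nmo{v}/m = \sum_{i=1}^n v_i(Y_i - 1/m).
\]
Since $p$ is even, $\pr[|X| \geq t] \leq \E[X^p]/t^p$, so it suffices to prove $\E[X^p] \leq O(p)^{p/2}\nmt{v}^p + \nmo{v}^p\delta$. I would bound this expectation via the decomposition $\E[X^p] = \E_{\mathrm{unif}}[X^p] + (\E_{\mathrm{biased}}[X^p] - \E_{\mathrm{unif}}[X^p])$ and control the two pieces separately.

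For the uniform-hash moment, set $Z_i := Y_i - 1/m$; under truly uniform $h$ these are independent, mean zero, with $|Z_i| \leq 1$. Expanding $(\sum_i v_i Z_i)^p$, independence kills any tuple in which some index appears only once, so the surviving tuples correspond to set partitions $\pi$ of $[p]$ with every block of size at least $2$ (hence at most $p/2$ blocks). For each such partition with block sizes $m_1, \ldots, m_r$, the contribution is at most
\[
\prod_{k=1}^r \Bigl(\sum_j v_j^{m_k}\Bigr) \cdot \prod_k |\E Z^{m_k}| \leq \prod_k \nmt{v}^{m_k} = \nmt{v}^p,
\]
using $v_j \leq \nmt{v}$ (so $\sum_j v_j^{m_k} \leq \nmt{v}^{m_k}$ for $m_k \geq 2$) and $|Z| \leq 1$. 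The number of such partitions is dominated by the $(p-1)!!$ perfect matchings of $[p]$, which is $O(p)^{p/2}$, yielding $\E_{\mathrm{unif}}[X^p] \leq O(p)^{p/2}\nmt{v}^p$.

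For the bias correction, I would expand $X^p$ as a multilinear polynomial in the $Y_i$'s (valid since $Y_i^k = Y_i$): $X^p = \sum_{|S|\leq p} c_S \prod_{i \in S} Y_i$. The $\delta$-bias property gives $|\E_{\mathrm{biased}}[\prod_{i\in S} Y_i] - 1/m^{|S|}| \leq \delta$ for every $S$, so $|\E_{\mathrm{biased}}[X^p] - \E_{\mathrm{unif}}[X^p]| \leq \delta\sum_S |c_S|$. Since the coefficient $\ell_1$-norm is submultiplicative under polynomial product and multilinearization does not increase it, $\sum_S|c_S| \leq \|X\|_1^p = (\nmo{v}(1 + 1/m))^p \leq (2\nmo{v})^p$. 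The harmless extra $2^p$ factor can be absorbed into $\delta$ by instantiating $\hh$ with bias $\delta/2^p$, which by Fact~\ref{HashFamilyFact} only adds an $O(p)$ term to the seed length.

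The main technical point is the partition-counting step: the restriction to blocks of size $\geq 2$ is what yields the sharp $O(p)^{p/2}$ (rather than $(Cp)^p$) dependence and matches the sub-Gaussian behaviour of the moment of $X$. Everything else — Markov's inequality, the polynomial $\ell_1$-norm bound, and converting $\delta$-bias into approximate moment control — is routine and closely parallels the analogous step in the proof of Lemma~\ref{hashmomentsLem}.
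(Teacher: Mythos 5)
Your plan matches the paper's at a high level: bound $\E[X^p]$ via Markov, and split the moment into the truly-random contribution plus a bias correction controlled by the polynomial $\ell_1$-norm. The bias-correction step is essentially identical to the paper's (and to the proof of Lemma~\ref{hcBiasedLem}), modulo the harmless $2^p$ factor you note. The problem is the truly-random moment bound. Your per-partition bound is the fixed quantity $\nmt{v}^p$, so to conclude $\E[X^p]\leq O(p)^{p/2}\nmt{v}^p$ you need the number of set partitions of $[p]$ with all blocks of size $\geq 2$ to be $O(p)^{p/2}$, and that is false. Already at $p=4$ there are $4$ such partitions but only $3$ matchings, and asymptotically the count grows like a constant fraction of the Bell number $B(p)$, which is $p^{p(1-o(1))}$ — no $C^{p/2}p^{p/2}$ can absorb it. The moment inequality itself is correct, but it is correct because partitions containing a block of size $>2$ contribute \emph{much less} than $\nmt{v}^p$ (the factors $\sum_j v_j^{m_k}\leq\|v\|_\infty^{m_k-2}\nmt{v}^2$ and the ``distinct indices within $\Supp(v)$'' constraint enforce a decay), and that decay is exactly what compensates for the explosion in the partition count. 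Your argument discards that decay and then undercounts, so the two errors happen to point in opposite directions but neither step is justified.

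The paper avoids the combinatorics entirely: for a truly random hash function, the indicators $\mathbf{1}[h(i)=j]$ are independent, so Hoeffding's inequality gives $\Pr[|X|\geq t]\leq 2\exp(-t^2/(2\nmt{v}^2))$, and integrating this sub-Gaussian tail immediately yields $\E[|X|^p]\leq O(p)^{p/2}\nmt{v}^p$. You should either take that route, or do the partition bookkeeping carefully in the style of the paper's proof of Lemma~\ref{hashmomentsLem}, which gains a factor $m^{-S/2}$ by tracking the number $S$ of odd-multiplicity indices; it is precisely those partition-dependent decaying factors that your argument is missing.
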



\section{Derandomizing the Chernoff Bounds}

In this section we present a pseudorandom generator that gives
Chernoff-like tail bounds.

\begin{thm}
\label{th:mainChernoff}
For all $\delta > 0$, there exists an explicit generator $\calG:\{0,1\}^r \to
\dpm^n$ with seed-length $r = \tilde{O}(\log(n/\delta))$ such that for
all unit vectors $w \in \rn$, and $t \geq 0$,
\[ \pr_{y \in_u \{0,1\}^r}[ | \iprod{w}{\calG(y)}| \geq t] \leq 4\exp(-t^2/16) +
\delta.\]
\end{thm}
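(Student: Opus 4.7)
My plan follows the three-stage construction sketched in Section \ref{sec:overview}. First, I would build an inner generator $\calG'$ by iterating $T = \Theta(\log \log n)$ rounds of pseudorandom dimension reduction: starting from $v_0 = w \in \R^n$, at round $t$ I would sample a hash $h_t:[n_t] \to [n_{t+1}]$ (with $n_{t+1} = \sqrt{n_t}$) from a family that is simultaneously $\delta$-biased and $k_t$-wise independent, along with a $\delta$-biased sign vector $x_t \in \dpm^{n_t}$, and set $v_{t+1}(j) = \sum_{i : h_t(i) = j} v_t(i) x_t(i)$. After $T$ rounds the dimension collapses to $1$ and $v_T$ equals $\iprod{w}{\calG'(y)}$, where each output bit of $\calG'$ is a product of $T$ bits of the $x_t$'s along the composed hash path. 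By Fact \ref{HashFamilyFact} the total seed-length of $\calG'$ is $\tilde O(\log(n/\delta))$.

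The key technical step is a one-round lemma asserting $\|v_{t+1}\|_2 \le \polylog(n) \cdot \|v_t\|_2$ except with probability $\delta/T$. To prove it I would decompose $v_t$ into $O(\log n)$ weight scales (coordinates whose squared magnitudes fall in geometric intervals) and bound each scale by a moment argument tuned to its size: apply Lemma \ref{hcBiasedLem} at an appropriate even moment, use Lemma \ref{hashmomentsLem} to bound the fourth-power load $h(v)$, and use Lemma \ref{lm:hashing} to control the $\ell_1$ load inside a single bucket, with the independence $k_t$ chosen to grow with $\log(n_t/\delta)$. Iterating this lemma over the $T$ rounds and union-bounding yields $|\iprod{w}{\calG'(y)}| \le d(n) := (\log n)^{O(\log \log n)}$ with probability at least $1 - \delta$. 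This is the main obstacle: obtaining only a $\polylog$ amplification per round from biased hashes with $\polylog$-level independence is exactly what the scale-by-scale moment argument is designed for, and is the step that forces the most technical care.

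Next, I would boost $\calG'$ into the target generator by composing it with an outer hashing step. Sample a fresh $\delta$-biased and $\polylog(n)$-wise independent hash $H:[n]\to[m]$ with $m = \poly(d(n))$, and for each bucket $b \in [m]$ run an independent copy of $\calG'$ with seed $y_b$, producing $Z_b = \iprod{w_{|H^{-1}(b)}}{\calG'(y_b)}$. Lemma \ref{lm:hashing} applied to the squared-weight vector $(w_i^2)_{i \in [n]}$ ensures that with high probability $\sum_b \|w_{|H^{-1}(b)}\|_2^2 \le 2$ and $\max_b \|w_{|H^{-1}(b)}\|_2^2 \le O(1/m)$, while the Stage 2 bound guarantees $|Z_b| \le d(n) \cdot \|w_{|H^{-1}(b)}\|_2 \le O(d(n)/\sqrt m) \le 1$ on the good event. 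Conditioning on this good event, whose complement has probability $O(\delta)$, Bernstein's inequality applied to the (seed-)independent $Z_b$'s with total variance at most $2$ yields the desired sub-Gaussian tail $4\exp(-t^2/16) + O(\delta)$.

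Finally, I would recycle the randomness used across the $m$ independent copies of $\calG'$ via the INW generator of Theorem \ref{th:inwprg}. The computation $b \mapsto \sum_{b' \le b} Z_{b'}$ proceeds bucket by bucket and, to the precision needed for $\delta$-fooling, is realized by a read-once branching program of width $2^{\tilde O(\log(n/\delta))}$ and length $m$; the per-block alphabet is the seed of a single copy of $\calG'$, of size $\tilde O(\log(n/\delta))$. Applying Theorem \ref{th:inwprg} recycles the randomness across the $m$ copies at an additive overhead of only $\polylog(n/\delta)$, keeping the final seed-length at $\tilde O(\log(n/\delta))$, with a constant-factor rescaling of $\delta$ absorbing the various polynomial losses above.
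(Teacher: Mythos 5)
Your three-stage plan (iterated pseudorandom dimension reduction to get $d(n)=(\log n)^{O(\log\log n)}$ tails, an outer bucketing plus Bernstein to upgrade to sub-Gaussian, and INW to recycle seeds) matches the paper's construction and its analysis of $\calG'$ essentially step for step; the scale-by-scale moment bound with Lemmas~\ref{hcBiasedLem}, \ref{hashmomentsLem}, \ref{lm:hashing} is exactly what the paper does. However, the boosting step as you state it has a real gap.

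The max-load claim $\max_b \|w_{|H^{-1}(b)}\|_2^2 \leq O(1/m)$ is false for arbitrary unit $w$: if $w$ has a coordinate with $w_i^2 \gg 1/m$, the bucket containing $i$ is already overloaded regardless of the hash, and then $|Z_b|$ can be as large as roughly $d(n)$, so Bernstein with a useful uniform bound $M$ does not apply. To apply Lemma~\ref{lm:hashing} to $(w_i^2)_i$ and conclude $\max_b\|w_{|H^{-1}(b)}\|_2^2 \lesssim 1/m$ you need $\|w\|_4^2 \ll 1/m$, which is a spread-out-ness condition. The paper handles this by decomposing $w=u+v$ into heavy coordinates ($|w_i|\geq\beta$) and light coordinates, arguing the max-load bound only for the light part $v$, and handling the heavy part $u$ separately via Lemma~\ref{lm:epschernoff} (which in turn requires that the output of $\bar{\calG}$ is $\eps$-biased — something one must explicitly arrange by xoring with a small-bias string). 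Your sketch omits this decomposition, which is not optional.

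A second, related subtlety you do not address: after conditioning on the event $B = \{ |Z_b|\leq M \text{ for all } b\}$, the $Z_b$ need not remain mean-zero, and Bernstein requires centered variables (or an explicit bound on the conditional mean). The paper forces this by making the output distribution of $\calG'$ symmetric (outputting $\pm\calG'(y)$ with equal probability), so that conditioning on the symmetric event $B$ preserves mean zero. Without such a symmetrization (or an explicit bound on $\E[Z_b\mid B]$), the step ``apply Bernstein'' is not justified. Finally, your stated uniform bound $|Z_b|\leq 1$ is too weak to yield a genuine $\exp(-ct^2)$ tail via Bernstein for $t$ up to $\Theta(\sqrt{\log(1/\delta)})$; the choice of $m$ must be large enough that $|Z_b| \lesssim 1/\sqrt{\log(1/\delta)}$, as the paper's parameter $M = \tfrac{1}{2\sqrt{\log(1/\gamma)}}$ makes explicit. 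Your $m=\poly(d(n))$ can deliver this, but the $\leq 1$ bound as written does not justify the final inequality for all $t$.
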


Our construction proceeds in two steps.  We first construct a generator which
has moderate tail bounds but does not match the tail behaviour of
truly random distribution. We then boost the tail bounds to match the
behaviour of truly random distributions using $\prg$s for small-space
machines.

\subsection{Moderately Decaying Tails}
The main result of this section is a generator with the following tail behaviour.

\begin{lem}
\label{lm:chernofffirst}
For $n$ and $\gamma \in(0,1)$, there exists an explicit generator
$\calG':\{0,1\}^{r'} \rgta \dpm^n$ with seed-length $r' =
O(\log(n/\gamma)\log \log(n))$  such that for all unit vectors $w \in \R^n$,
\[ \Pr_{y \in_u \{0,1\}^r}\sbkets{|\iprod{w}{\calG'(y)}| \geq
  (C_1\log(n/\gamma))^{C_2\log\log(n)}} \leq \gamma.\]
\end{lem}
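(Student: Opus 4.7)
The plan is to build $\calG'$ by iterating a dimension-reduction step $L = \Theta(\log\log n)$ times. Setting $n_0 = n$ and $n_k = \lceil\sqrt{n_{k-1}}\rceil$ gives $n_L = O(1)$. At round $k$ we use a pseudorandom hash function $h_k : [n_{k-1}]\to[n_k]$ and a sign string $s_k \in \dpm^{n_{k-1}}$, each drawn from a hybrid $\delta$-biased and limited-independence family via Fact~\ref{HashFamilyFact} using $O(\log(n/\gamma))$ bits of seed per level, with $\delta = \poly(\gamma/n)$. The output $\calG'(y)\in\dpm^n$ is defined coordinatewise by $\calG'(y)_i = \prod_{k=1}^L s_k\bigl(h_{k-1}(\cdots h_1(i))\bigr)$ (interpreting $h_0$ as the identity), so each output bit is an XOR of $L$ sign bits chosen by composed hashes. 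The total seed length is $O(\log(n/\gamma) \cdot L) = O(\log(n/\gamma)\log\log n)$, as required.

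For any unit vector $w\in\R^n$, defining $v_j = \sum_{i:h_1(i)=j} w_i s_1(i)\in\R^{n_1}$ exposes the identity $\langle w,\calG'(y)\rangle = \langle v,\calG''(y')\rangle$, where $\calG''$ is the same construction at dimension $n_1$ with $L-1$ levels. The lemma therefore reduces to a \emph{one-step norm estimate}: with probability at least $1-\gamma/L$ over $(h_1,s_1)$, $\|v\|_2 \le B\|w\|_2$ for $B = \polylog(n/\gamma)$. Iterating this by induction on $L$, taking a union bound over the $L$ levels, and using a trivial Cauchy--Schwarz bound at the $O(1)$-dimensional base case produces $|\langle w,\calG'(y)\rangle| \le O(B^L) = (C_1\log(n/\gamma))^{C_2\log\log n}$ with failure probability $\gamma$.

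The one-step estimate is the core of the proof. I would write $\|v\|_2^2 = \|w\|_2^2 + Q(s_1)$, where (conditional on $h_1$) $Q$ is a multilinear degree-$2$ polynomial in $s_1$ with coefficients $w_i w_{i'}\mathbf{1}[h_1(i)=h_1(i')]$ on pairs $i \ne i'$; a direct calculation then gives $\|Q\|_2^2 \le h_1(w) := \sum_j \|w_{|h_1^{-1}(j)}\|_2^4$, which is exactly the quantity controlled by Lemma~\ref{hashmomentsLem}. Applying that lemma at $p = \Theta(\log(L/\gamma))$, using $\|w\|_2 = 1$ and $\|w\|_4 \le 1$, yields $h_1(w) \le O(\log^2(n/\gamma))$ with probability $1-\gamma/(2L)$ over $h_1$. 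Conditioning on this event, the $\delta$-biased hypercontractive moment bound of Lemma~\ref{hcBiasedLem} applied to $Q$, combined with Markov at the same $p$, bounds $|Q(s_1)| \le \polylog(n/\gamma)$ except with probability $\gamma/(2L)$, hence $\|v\|_2 \le \polylog(n/\gamma)$.

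The main obstacle is the $\|Q\|_1^p \delta$ error term in Lemma~\ref{hcBiasedLem}: for weight vectors $w$ whose mass is spread across many small coordinates, the ratio $\|Q\|_1/\|Q\|_2$ can be polynomially large, so $\delta$-bias alone cannot control the error without inflating the seed beyond $O(\log(n/\gamma))$ per level. The remedy, following \cite{GopalanMRTV12}, is the \emph{gradually increasing independence} paradigm: partition $[n]$ into $O(\log n)$ weight scales $S_r = \{i : |w_i|\in[2^{-r-1}, 2^{-r})\}$, decompose $Q$ into scale-pair pieces, and control each using an amount of true independence calibrated to the scale. Heavy scales have few coordinates, so cheap $k$-wise independence from Fact~\ref{HashFamilyFact} suffices, while light scales have a benign $\|Q\|_1/\|Q\|_2$ ratio and $\delta$-bias is enough. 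Summing the $O(\log^2 n)$ scale-pair bounds yields the one-step estimate without exceeding the per-level seed budget, and the rest of the argument proceeds as outlined.
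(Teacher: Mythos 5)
Your plan matches the paper's at the structural level: the same recursive generator with $\log\log n$ dimension-reduction steps, the same reduction to a one-step $\ell_2$-norm estimate, the same decomposition $\|v\|_2^2 - \|w\|_2^2 = Q(s_1)$ into a degree-$2$ polynomial with $\|Q\|_2^2 \le h_1(w)$, and the same tools (Lemma~\ref{hashmomentsLem} for the hash, Lemma~\ref{hcBiasedLem} plus Markov for $Q$). You also correctly identify the obstacle: for spread-out $w$ the ratio $\|Q\|_1/\|Q\|_2$ is polynomially large, so the $\|Q\|_1^p\delta$ term in Lemma~\ref{hcBiasedLem} is not controlled by a single choice of $p$.

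Your final paragraph, which is supposed to resolve that obstacle, does not. The sentence ``light scales have a benign $\|Q\|_1/\|Q\|_2$ ratio and $\delta$-bias is enough'' contradicts your own diagnosis one line earlier---light scales are exactly where the ratio blows up. And the proposal to use $k$-wise independence with $k$ on the order of $|S_r|$ for the heavy scales is unaffordable for middling $r$: a scale with $|w_i| \approx (\log n)^{-c}$ already has up to $(\log n)^{2c}$ support coordinates. The paper's actual mechanism is different: it keeps a single $\delta$-biased distribution throughout and instead \emph{tunes the Markov moment degree $p_\ell$ to the scale}. It defines the scales \emph{doubly} exponentially---$w(\ell)_i = w_i$ when $|w_i| \in (2^{-2^\ell}, 2^{-2^{\ell-1}}]$---so that there are only $O(\log\log n)$ scales and $\log\|w(\ell)\|_0 = O(2^\ell)$. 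Choosing $p_\ell \propto \log(1/\delta)/2^\ell$ keeps $p_\ell\log\|Q_\ell\|_1 = O(\log(1/\delta))$ across all scales, so $\|Q_\ell\|_1^{p_\ell}\delta$ is always a small power of $\delta$; simultaneously, $\|Q_\ell\|_2^2 \le h(w(\ell))$ decays doubly exponentially in $\ell$ (Lemma~\ref{lem:good-h}), so the hypercontractive main term $(C p_\ell \|Q_\ell\|_2/\theta)^{p_\ell}$ stays small even as $p_\ell$ shrinks. Your singly exponential scales would force $p_r \propto \log(1/\delta)/r$, which drops to a constant for $r = \Theta(\log n)$ when $\delta = n^{-O(1)}$, leaving only constant-moment tail bounds together with a union bound over $\Theta(\log n)$ rather than $\Theta(\log\log n)$ scales. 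The doubly exponential scale choice, the explicit $p_\ell \propto 2^{-\ell}$ calibration, and applying the triangle inequality to $\sum_\ell \|A(h)D(x)w(\ell)^T\|_2$ (a decomposition of $w$ rather than of $Q$) are the concrete ideas your sketch is missing.
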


The generator $\calG'$ is recursively defined. We first specify the
{\em one-step} generator $\calG''$ that is used in defining $\calG'$.
Fix $\delta > 0$, $n$. Let $\hh = \{h:[n] \to [m]\}$ be a family of
$\delta$-biased hash functions. Let $\calD$ be a $\delta$-biased
distribution over $\dpm^n$. The generator $\calG''$ takes as input a hash
function $h \in \hh$, $x \in \calD$ and $z \in_u \dpm^m$, the output
is
\[ \calG''(h,x,z) = z A(h) D(x).\]
Thus we have for any $i \in [n]$,
\[ \calG''(h,x,z)_j = z_{h(i)}x_i.\]
Thus the generator $\calG''$ starts with the $\delta$-biased string $x \in
\calD$ as output, hashes the coordinates into $[m]$ bins and
flips the signs of all coordinates in each bin by picking a uniformly
random independent bit for each bin. This takes $O(\log(n/\delta) + m)$ random bits.

The generator $\calG'$ is obtained by taking $m \approx
\sqrt{n}$ and then recursively using $\calG''$  to generate $z \in
\pmo^n$. The base case of the recursion is reached when $m =
O(\log(n/\delta))$ at which point we use a truly random string
$z$. This requires $k \leq \log\log(n)$ stages of recursion, so that
the seed length is $O(\log(n/\delta)\log\log(n))$. Unrolling the
recursion, we see that if we set $n_\ell =n^{2^{-\ell}}$ for $\ell
\in\{0,\ldots, k\}$ then $\calG'$ takes as input two sequences:
\begin{itemize}
\item A sequence of hash functions $h^1,\ldots,h^k$ where
  $h^\ell:[n_{\ell-1}] \rgta [n_\ell]$ is drawn from a $\delta$-biased family
  of hash functions.
\item A sequence of strings $x^1,\ldots,x^k$ where $x^\ell \in \pmo^{n_\ell}$ is drawn
  from a $\delta$-biased distribution.
\end{itemize}
For each coordinate $i$, consider the sequence $\{i_\ell \in
n_\ell\}_{\ell =0}^k$ obtained by successively applying the hash
functions:
\[ i_0 =i, \ \ i_\ell = h^\ell(i_{\ell -1}) \ \text{for} \ \ell \geq 1.\]
Then we have
\[ \calG'(h^1,\ldots,h^k,x^1,\ldots,x^k) = \prod_{\ell=1}^kx^\ell_{i_\ell}. \]

The analysis of $\calG'$ proceeds step by step and each step reduces
to analyzing $\calG''$. Note that
\begin{align*}
\iprod{w}{\calG''(h,x,z)} =\iprod{w}{zA(h)D(x)} =
\iprod{A(h)D(x)w^T}{z}.
\end{align*}
We can view $A(h)D(x)w^T$ as projection of $w \in \R^n$ down to
$\R^m$ where we first hash coordinates into buckets, and then sum the
coordinates in a bucket with signs given by $x$. The next lemma
saying that the transformation $A(h)D(x)$ is unlikely to  stretch
Euclidean norms too much serves as the base case for the
recursion.


\begin{lem}
\label{lem:v-technical}
Let $n \geq 1$ and $m = \sqrt{n}$ and $\delta < 1/10n^2$. Let $\calD$ be a
$\delta$-biased distribution over $\dpm^n$ and $\hh = \{h:[n]
\to [m]\}$ be a $\delta$-biased hash family.
There exists a constant $C$ such that for all  unit vectors $w \in \R^n$,
\begin{align*}
\Pr_{x \in \calD, h \in \hh}\sbkets{\nmt{A(h)D(x)w^T} \geq
  C(\log\log(n))\log(1/\delta)^{3/4}} \leq  3(\log\log(n))m\sqrt{\delta}
\end{align*}
\end{lem}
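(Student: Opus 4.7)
The plan is to decompose $w$ into $T=O(\log\log n)$ weight scales using \emph{doubly-exponentially} shrinking thresholds, analyze each scale via the moment method (hypercontractivity of Lemma~\ref{hcBiasedLem} combined with the hash moment bound of Lemma~\ref{hashmomentsLem}), and union-bound. Dyadic scales would give $O(\log n)$ pieces; the doubly-exponential spacing is what produces the $\log\log n$ factor in the final bound.

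Concretely, I would set $a_\ell = 2^{-2^{\ell-1}}$ for $\ell\ge 1$ and let scale $C_\ell$ collect coordinates with $|w_i|\in(a_\ell^2,a_\ell]$, together with a single \emph{light} scale $C_0=\{i:|w_i|\le 1/\sqrt m\}$. This gives $\|w^{(\ell)}\|_\infty\le a_\ell$, $|C_\ell|\le a_\ell^{-4}$, and, using $\|w^{(\ell)}\|_4^4\le\|w^{(\ell)}\|_\infty^2\|w^{(\ell)}\|_2^2\le a_\ell^2$, good control over the quartic norms. By the triangle inequality $\|u\|_2\le\sum_\ell \|u^{(\ell)}\|_2$ with $u^{(\ell)}=A(h)D(x)(w^{(\ell)})^T$, so it suffices to show each $\|u^{(\ell)}\|_2\le C\log^{3/4}(1/\delta)$ with failure probability at most $3\sqrt{n\delta}$ per scale.

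For each scale, write $\|u^{(\ell)}\|_2^2=\|w^{(\ell)}\|_2^2+Q_\ell(x,h)$ where $Q_\ell=\sum_{i\neq i'\in C_\ell} w_i w_{i'}[h(i)=h(i')] x_ix_{i'}$ is a degree-$2$ polynomial in $x$ for each fixed $h$ satisfying $\|Q_\ell(\cdot,h)\|_2^2\le 2h(w^{(\ell)})$. Apply Lemma~\ref{hcBiasedLem} in $x$ and take $\E_h$ to get
\[
\E[|Q_\ell|^p]\le (p-1)^p\,\E_h[\|Q_\ell\|_2^p]+\delta\,\E_h[\|Q_\ell\|_1^p].
\]
The first term is controlled by Lemma~\ref{hashmomentsLem} applied to $w^{(\ell)}$ (crucially exploiting $\|w^{(\ell)}\|_4^4\le a_\ell^2$); the second I would handle via the Cauchy--Schwarz inequality $\|Q_\ell\|_1\le\sqrt{K_\ell}\,\|Q_\ell\|_2$, where $K_\ell$ is the number of colliding pairs inside $C_\ell$, and then bound the moments of $K_\ell$ by Lemma~\ref{hashmomentsLem} applied to the indicator $\mathbf 1_{C_\ell}$.

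Choosing $p=\Theta(\log^{3/4}(1/\delta))$ balances the leading $(p-1)^p$ factor against the target tail $t=\Theta(\log^{3/2}(1/\delta))$: with this choice $p^2\asymp t$, so $(p-1)^p/t^{p/2}$ is exponentially small in $p$, while the $a_\ell$-dependent factors from Lemma~\ref{hashmomentsLem} add further exponentially decaying savings at each heavy scale. Markov's inequality then yields $\Pr[|Q_\ell|>t]\le 3\sqrt{n\delta}$, and summing $\|u^{(\ell)}\|_2\le\|w^{(\ell)}\|_2+\sqrt{|Q_\ell|}$ over the $T=O(\log\log n)$ scales gives the claimed $\|u\|_2\le C(\log\log n)\log^{3/4}(1/\delta)$, with total failure probability $3(\log\log n)m\sqrt\delta$ by the union bound.

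The main obstacle will be the simultaneous control of the several $\delta$-correction terms arising from $\delta$-biasedness on both sides---the $\delta\|Q_\ell\|_1^p$ term in Lemma~\ref{hcBiasedLem} and the $m^p\|v\|_2^{4p}\delta$ contribution in Lemma~\ref{hashmomentsLem}---against the target failure $\sqrt{n\delta}$. Both of these are polynomial-in-$n$ factors raised to the $p$-th power, which forces $p$ to be taken near $\log^{3/4}(1/\delta)$ (any smaller and the first term is too big, any larger and the $\delta$-correction terms explode); the hypothesis $\delta<1/(10n^2)$ and the scale-dependent structure $a_\ell^{-c}\le\mathrm{polylog}(n)$ (true for the scales that matter) are precisely what allow the corrections to be absorbed by the factor of $\sqrt{n\delta}$ coming from the Markov inequality.
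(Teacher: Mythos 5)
Your decomposition into doubly-exponential weight scales and the plan to bound each scale via hypercontractivity (Lemma~\ref{hcBiasedLem}) plus the hash-moment bound (Lemma~\ref{hashmomentsLem}) and then union-bound matches the paper's structure. The critical gap is your choice of a \emph{single fixed} $p=\Theta(\log^{3/4}(1/\delta))$ for every scale. The paper instead tunes $p$ \emph{per scale}, taking $p_\ell\asymp \log(1/\delta)/2^{\ell}$ (and an analogous $q_\ell$ for the hash-regularity step) --- the ``gradually increasing independence'' it explicitly advertises --- and also disposes of the few heaviest scales $\ell=O(1)$ by a trivial H\"older bound $\nmt{A(h)D(x)w(\ell)^T}\le\nmo{w(\ell)}\le 2^{2^\ell}$. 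With your fixed $p$, the $\delta$-correction terms cannot be controlled. Concretely, in Lemma~\ref{hashmomentsLem} the error term is $m^{p/2}\nmt{v}^{2p}\delta \le n^{p/4}\delta$, and in Lemma~\ref{hcBiasedLem} the error is $\delta\nmo{Q_\ell}^p$, where $\nmo{Q_\ell}\le\nmo{w^{(\ell)}}^2$ can reach $\mathrm{poly}(n)$ at the light scales. Both are of the form $\mathrm{poly}(n)^p\delta$; to beat the target $m\sqrt\delta$ one needs $p\lesssim \log(1/\delta)/\log n$. But under the hypothesis $\delta<1/10n^2$ (e.g.\ $\delta\approx n^{-c}$ for constant $c$), your $p=\log^{3/4}(1/\delta)=\Theta((\log n)^{3/4})$ overshoots this threshold by a polylog factor, so $n^{p/4}\delta$ and $\mathrm{poly}(n)^p\delta$ blow up to $\gg 1$. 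Your Cauchy--Schwarz reformulation $\nmo{Q_\ell}\le\sqrt{K_\ell}\nmt{Q_\ell}$ shaves polynomial factors but not the $(\cdot)^p$ structure, so it does not save the argument. Your parenthetical ``$a_\ell^{-c}\le\mathrm{polylog}(n)$ (true for the scales that matter)'' is also false: the scales that matter go down to $a_\ell\approx n^{-1/2}$ or even $1/n$, and moreover the problematic $\delta$-terms above do not even carry an $a_\ell$ factor.

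What makes the paper's scale-dependent exponent work is a delicate cancellation you never exploit: at scale $\ell$ the relevant $\delta$-error term is $(2^{2^\ell})^{p_\ell}\delta$ (via $\nm{w(\ell)}_0$ or via the lower bound $2^{-2^{\ell-2}}$ on the threshold $t$ in Lemma~\ref{lm:hashing}), and with $p_\ell\asymp\log(1/\delta)/2^{\ell}$ this is exactly $2^{\Theta(\log(1/\delta))}\delta=\delta^{\Theta(1)}$, uniformly across scales. Simultaneously, the leading term is $(O(p_\ell)\nm{w(\ell)}_4^2/t)^{p_\ell}\lesssim(2^{-2^{\ell-1}}\cdot 2^{2^{\ell-2}})^{p_\ell}$, which is also $\delta^{\Theta(1)}$ because the double-exponential decay of $\nm{w(\ell)}_4$ compensates the decay of $p_\ell$. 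Your fixed $p$ breaks both sides of this balance: too large for the $\delta$-errors at light scales, and (once the $O(p)^{2p}$ hypercontractivity constants are accounted for against a merely constant $a_\ell$) too small to make the leading term decay at the heaviest scales without the constant $C$ growing with $1/\delta$. You would need to reintroduce the scale-dependent exponents, and a separate trivial bound for $\ell=O(1)$, to make the proof close.
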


We prove this lemma by decomposing the vector $w$ across various weight scales.
Fix a unit vector $w \in \R^n$. Without loss of generality, we ignore
all coordinates $i$ where $|w_i| \leq 1/n$ as they can only
effect the $\ell_2$-norm by at most $1$. For $\ell \in
\{1,\ldots,\log\log n\}$, define $w(\ell) \in \R^n$ as
\begin{align*}
w(\ell)_i  = \begin{cases} w_i & \text{if} \ |w_i| \in \left(\frac{1}{2^{2^{\ell}}},
  \frac{1}{2^{2^{\ell-1}}}\right] \\
0 & \text{otherwise}.
\end{cases}
\end{align*}
Thus $w(\ell)$ picks out the entries in the $\ell^{th}$ weight scale.
In addition, define $w(0)$ to consist of entries that lie in the
interval $(1/2,1]$. We will show that for every $\ell$, the bound
\[\nmt{A(h) D(x) w(\ell)^T}^2 \leq O(1) \log(1/\delta)^{1.5}\]
holds with high (inverse polynomial) probability. Here we tailor the
amount of independence we use in the argument to the scale, in a
manner similar to \cite{CelisRSW13,GopalanMRTV12}. Once this is done,
Lemma \ref{lem:v-technical} follows by the triangle inequality.

We start with a simple bound which suffices for small constant $\ell$.

\begin{lem}
\label{lem:small-ell}
For all $\ell$, we have
\begin{align*}
\nm{A(h)D(x)w(\ell)^T}_2 \leq 2^{2^\ell}.
\end{align*}
\end{lem}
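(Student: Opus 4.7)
The plan is to observe that the bound is a worst-case statement that holds for arbitrary $x$ and $h$, and so it should follow just from bounding the $\ell_1$-norm of $w(\ell)$ via the weight-scale structure of its entries. First I would replace the $\ell_2$ norm on the left by the (larger) $\ell_1$ norm of the contribution within each bucket. Concretely,
\[
\|A(h)D(x)w(\ell)^T\|_2^2 \;=\; \sum_{j=1}^m \Bigl(\sum_{i:h(i)=j} x_i\, w(\ell)_i\Bigr)^2 \;\leq\; \sum_j \Bigl(\sum_{i:h(i)=j} |w(\ell)_i|\Bigr)^2 \;\leq\; \Bigl(\sum_j \sum_{i:h(i)=j} |w(\ell)_i|\Bigr)^2 \;=\; \|w(\ell)\|_1^2,
\]
where the last inequality uses that $(\sum_j a_j)^2 \geq \sum_j a_j^2$ for non-negative $a_j$ (i.e.\ the worst case is when everything hashes to a single bucket and the signs align).

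Next I would exploit the weight-scale definition of $w(\ell)$ to bound both its support size and hence its $\ell_1$-norm. For $\ell \geq 1$, every nonzero coordinate of $w(\ell)$ satisfies $|w(\ell)_i| > 2^{-2^{\ell}}$, so since $w$ is a unit vector,
\[
1 \;\geq\; \|w(\ell)\|_2^2 \;>\; \|w(\ell)\|_0 \cdot 2^{-2\cdot 2^{\ell}} \quad\Longrightarrow\quad \|w(\ell)\|_0 \;<\; 2^{2\cdot 2^{\ell}}.
\]
(The case $\ell=0$ is handled identically using the interval $(1/2,1]$, giving $\|w(0)\|_0 < 4$.) Applying Cauchy--Schwarz then yields
\[
\|w(\ell)\|_1 \;\leq\; \sqrt{\|w(\ell)\|_0}\cdot \|w(\ell)\|_2 \;\leq\; 2^{2^{\ell}}.
\]

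Combining the two displays gives $\|A(h)D(x)w(\ell)^T\|_2 \leq 2^{2^\ell}$, which is exactly what the lemma claims. There is no real obstacle here: the lemma is meant as a crude deterministic bound that is useful at small scales $\ell$, and will be complemented (for larger $\ell$) by a probabilistic argument using hypercontractivity and the $\delta$-biased / independence properties of $h$ and $x$. The only minor point to be careful about is making sure the constants in the weight-scale intervals line up with the exponent $2^\ell$ in the stated bound, which I verified above for both $\ell=0$ and $\ell\geq 1$.
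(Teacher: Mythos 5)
Your proof is correct and follows essentially the same approach as the paper: both reduce to the bound $\|A(h)D(x)w(\ell)^T\|_2 \leq \|w(\ell)\|_1$ (the paper via Hölder on the $\ell_1$ and $\ell_\infty$ norms, you via $\sum_j a_j^2 \leq (\sum_j a_j)^2$ for nonnegative $a_j$, which amounts to the same thing) and then bound $\|w(\ell)\|_1 \leq 2^{2^\ell}$ using that $\|w(\ell)\|_2 \leq 1$ and each nonzero entry of $w(\ell)$ exceeds $2^{-2^\ell}$. The only cosmetic difference is that you pass through a support-size bound and Cauchy--Schwarz, while the paper bounds $\|w(\ell)\|_1$ directly from the entry-size lower bound.
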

\begin{proof}
Observe that
\begin{align*}
\nm{A(h)D(x)w(\ell)^T}_\infty \leq \nm{w(\ell)}_1 \\
\nm{A(h)D(x)w(\ell)^T}_1 \leq \nm{w(\ell)}_1
\end{align*}
hence by Holder's inequality
\begin{align*}
\nmt{A(h)D(x)w(\ell)^T} \leq \nm{w(\ell)}_1.
\end{align*}
Since $\nmt{w(\ell)} \leq 1$ and every  non-zero entry is at least
$2^{-2^\ell}$ we have $\nm{w(\ell)}_1 \leq 2^{2^\ell}$ and hence
\begin{align*}
\nm{A(h)D(x)w(\ell)^T}_2 \leq 2^{2^\ell}.
\end{align*}
\end{proof}

Given this lemma, we can assume that $\ell$ is a sufficiently large constant.
We show that the weight vector is hashed fairly
regularly with high probability over the choice of $h \in \hh$, where the
regularity is measured by $h(v)$.

\begin{lem}
\label{lem:good-h}
Fix $\ell \geq 2$. Then
\begin{align}
\label{eq:good-h}
\Pr_{h \in \hh}\sbkets{h(w(\ell)) \leq 2 C_4 \frac{\sqrt{\log(1/\delta)}}{
    2^{2^{\ell-2}}}} \geq 1 - 2m\sqrt{\delta}.
\end{align}
\end{lem}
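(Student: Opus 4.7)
The plan is to combine the $p$-th moment bound from Lemma \ref{hashmomentsLem} with Markov's inequality, applied to the vector $v = w(\ell)$, for a carefully chosen $p$. The first step is to record the two relevant norm estimates for $w(\ell)$. Because $w$ is a unit vector, $\|w(\ell)\|_2 \leq 1$. Because every nonzero entry of $w(\ell)$ has absolute value at most $2^{-2^{\ell-1}}$,
\[
\|w(\ell)\|_4^4 \;\leq\; \|w(\ell)\|_\infty^2 \cdot \|w(\ell)\|_2^2 \;\leq\; 2^{-2^\ell}.
\]

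Plugging these into Lemma \ref{hashmomentsLem} yields
\[
\E[h(w(\ell))^p] \;\leq\; \frac{O(p)^{2p}}{m^p} \;+\; O(p)^{2p}\cdot 2^{-p\cdot 2^\ell} \;+\; m^p \delta,
\]
so Markov's inequality with threshold $t = 2C_4\sqrt{\log(1/\delta)}\cdot 2^{-2^{\ell-2}}$ produces three contributions to $\Pr[h(w(\ell)) > t]$: a ``random'' term $(O(p^2)/(mt))^p$, a ``four-norm'' term $(O(p^2)/(t\cdot 2^{2^\ell}))^p$, and a ``bias'' term $(m/t)^p \delta$. The key algebraic observation is that the threshold $t$ has been chosen so that $2^{-2^\ell}/t$ contains a built-in decay factor $2^{-3\cdot 2^{\ell-2}}$, which is at least $2^{-3}$ for $\ell\geq 2$ and very small for large $\ell$; together with a sufficiently large constant $C_4$, this ensures the four-norm contribution is bounded by $m\sqrt{\delta}$ for every reasonable $p$. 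Similarly, since $mt \geq \sqrt{n}\cdot\sqrt{\log(1/\delta)}/n^{1/4} = n^{1/4}\sqrt{\log(1/\delta)}$ whenever $\ell\leq\log\log n$, the random term decays exponentially in $p$ provided $p\ll n^{1/8}$.

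The main obstacle, and the step demanding the most care, is the bias term $(m/t)^p \delta$, which \emph{grows} with $p$: the first two terms want $p$ moderately large while the bias term wants $p$ small. I would therefore choose $p$ to be the largest integer satisfying $(m/t)^p \delta \leq m\sqrt{\delta}$, equivalently $p \leq \log(m/\sqrt{\delta})/\log(m/t)$. Using the hypothesis $\delta < 1/(10n^2)$, so that $\log(1/\delta) \geq 2\log n$, and the bound $m/t \leq n^{3/4}$, one can verify that this $p$ is at least a small constant, which (for $C_4$ chosen large enough) simultaneously drives the random and four-norm terms below $m\sqrt{\delta}$. Summing the three contributions gives the claimed bound $\Pr[h(w(\ell)) > t] \leq 2m\sqrt{\delta}$. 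The exponent $2^{\ell-2}$ (rather than $2^{\ell-1}$ or $2^\ell$) in the threshold is precisely the sweet spot that makes this three-way balance of terms possible.
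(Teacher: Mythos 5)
You take a genuinely different route from the paper, and unfortunately it has a real gap. The paper applies Lemma~\ref{lm:hashing} to the $\ell_1$-normalized vector $(w(\ell)_i^2)_i$ to bound each $\|w(\ell)_{|h^{-1}(j)}\|_2^2$ separately, union-bounds over the $m$ buckets, and then uses $h(w(\ell)) \leq \max_j\|w(\ell)_{|h^{-1}(j)}\|_2^2\cdot\|w(\ell)\|_2^2$. You instead apply Lemma~\ref{hashmomentsLem} directly to $w(\ell)$ and invoke Markov. The crucial difference is the bias term: in the paper's route the bias contribution is $\nmo{(w(\ell)^2)}^q\,\delta/t^q \leq \delta/t^q$ (with a single factor of $m$ from the union bound), which lets the paper take $q=\log(1/\delta)/2^{\ell-1}$, as large as needed. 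Your bias term $(m/t)^p\delta$ carries an $m^p$ factor, which pins $p$ down to be small --- and then the remaining Markov terms do not decay far enough.

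To see the failure concretely, take $\ell=\log\log n$ (so $2^{2^{\ell-2}}=n^{1/4}$, $t\approx\sqrt{\log(1/\delta)}\,n^{-1/4}$, $m/t\approx n^{3/4}$) and $\delta=n^{-3}$. Then $(m/t)^p\delta\leq m\sqrt{\delta}=n^{-1}$ forces $p\leq 2$ (at most $3$ even accounting for the polylog slack), but with $p=2$ the ``random'' term is $(O(p^2)/(mt))^p\approx O(1)/(\sqrt{n}\log n)$, which is vastly larger than $n^{-1}$. More generally, writing $D=\log(1/\delta)$ and $L=\log m$, your two requirements $(O(p^2)/(mt))^p\leq m\sqrt{\delta}$ and $(m/t)^p\delta\leq m\sqrt{\delta}$ force $p(\log(mt)-2\log(Op))\geq D/2-L$ and $p\log(m/t)\leq D/2+L$; since $\log(mt)+\log(m/t)=2L$ these combine to $p(L-\log(Op))\geq D/2$ while $p\leq(D/2+L)/\log(m/t)$, and at $\ell=\log\log n$ (where $\log(m/t)\approx(3/2)L$) this pair is infeasible precisely when $D>4L=2\log n$, i.e.\ exactly the regime $\delta<1/(10n^2)$ the lemma lives in. Enlarging $C_4$ only shifts things by $O(\log\log)$ factors and does not rescue the balance. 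So the sentence asserting that the chosen $p$ ``simultaneously drives the random and four-norm terms below $m\sqrt{\delta}$'' is the step that fails; you need the paper's per-bucket argument (which keeps the bias contribution at $\nmo{(w(\ell)^2)}^q\delta\leq\delta$ rather than $m^p\delta$) or some other device to remove the $m^p$.
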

\begin{proof}
By Lemma \ref{lm:hashing} applied to the vector
$(w(\ell)_i^2)_{i=1}^n$, there is a constant $C_4$ such that for even $q \geq 2$
\[ \pr_{h \in \hh} \sbkets{h(w(\ell)) \geq 1/m + t} \leq m\left(\left(\frac{C_4 \sqrt{q}
  \nm{w(\ell)}_4^2}{t}\right)^q + \delta\left(\frac{\nmt{w(\ell)}^2}{t}\right)^q \right).\]
Plugging in the bounds
\begin{align*}
\nm{w(\ell)}_4^2  \leq \nm{w(\ell)}_\infty \leq 2^{-2^{\ell-1}}, \ \nmt{w(\ell)} \leq 1
\end{align*}
we get
\[ \pr_{h \in \hh} \sbkets{h(w(\ell)) \geq 1/m + t} \leq
m\left(\left(\frac{C_4 \sqrt{q}}{2^{2^{\ell -1}}t}\right)^q + \delta\left(\frac{1}{t}\right)^q \right).\]

Therefore, taking
\[ q = \frac{\log(1/\delta)}{2^{\ell-1}}, \ \ t = \frac{C_4\sqrt{q}}{2^{2^{\ell-2}}} \]
in the above equation, we get
\begin{align*}
\frac{C_4\sqrt{q}\nm{w(\ell)}_4^2}{t} & \leq
\frac{2^{2^{\ell-2}}}{2^{2^{\ell-1}}} \leq\frac{1}{2^{2^{\ell -2}}},\\
\left(\frac{C_4\sqrt{q}\nm{w(\ell)}_4^2}{t}\right)^q & \leq
\frac{1}{2^{2^{\ell-2}\log(1/\delta)/2^{\ell-1}}} \leq \sqrt{\delta}.\\
\left(\frac{1}{t}\right)^q & \leq  (2^{2^{\ell
    -2}})^{\log(1/\delta)/2^{\ell-1}}  = \frac{1}{\sqrt{\delta}}.
\end{align*}
hence
\[
\pr_{h \in \hh} \sbkets{h(w(\ell)) \geq 1/m + C_4 \frac{\sqrt{\log(1/\delta)}}{ 2^{2^{\ell - 2}}}}
\leq 2 m \sqrt{\delta}.
\]
Hence with probability $1 - 2m\sqrt{\delta}$ over the choice of $h$,
we have
\begin{align*}
h(w(\ell)) \leq \frac{1}{m} + \frac{C_4\sqrt{\log(1/\delta)}}{ 2^{2^{\ell-2}}} \leq 2 C_4 \frac{\sqrt{\log(1/\delta)}}{ 2^{2^{\ell-2}}}
\end{align*}
since $1/m = 1/\sqrt{n} \leq 2^{-2^{\ell -2}}$.
\end{proof}

Conditioned on the hash function $h$ being good (i.e., satisfying the condition of the previous lemma), we will show that
$\nmt{A(h)D(x)w(\ell)^T}^2$ is small with high-probability.

\begin{lem}
\label{lem:good-x}
Fix $\ell \geq 2$ and assume that $h$ is such that the event described in Equation \eqref{eq:good-h}
holds. There exists a constant $C_6$ such that
\begin{align}
\label{eq:good-x}
\Pr_{x \in \calD} \sbkets{\nmt{A(h) D(x) w(\ell)^T} \leq
C_6\log\left(\frac{1}{\delta}\right)^{5/8}} \geq 1- \sqrt{\delta} -\delta^{16}.
\end{align}
\end{lem}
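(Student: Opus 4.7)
The plan is to view $Q(x) := \nmt{A(h)D(x)w(\ell)^T}^2$ as a degree-$2$ multilinear polynomial in $x$ and apply the hypercontractivity inequality for $\delta$-biased distributions (Lemma~\ref{hcBiasedLem}) with a carefully chosen level $p$ that is tuned to the scale $\ell$, in the spirit of \cite{CelisRSW13,GopalanMRTV12}. Since $\E[Q(x)]=\nmt{w(\ell)}^2\le1$ under the uniform distribution, it suffices to show that the centered polynomial $R(x):=Q(x)-\E[Q(x)]$ satisfies $|R(x)|\le O(\log(1/\delta)^{5/4})$ with probability at least $1-\sqrt{\delta}-\delta^{16}$ under $x\in\calD$.

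First I would compute the two relevant Fourier norms of $R$. Expanding the square gives
\[
R(x)=2\sum_{j=1}^m\sum_{\substack{i<i'\\ h(i)=h(i')=j}}w(\ell)_i w(\ell)_{i'}\,x_i x_{i'},
\]
so that $\nmt{R}^2\le 2\,h(w(\ell))$ and $\nmo{R}\le 2\,\nm{w(\ell)}_1^2$. The hypothesis \eqref{eq:good-h} immediately gives $\nmt{R}\le C\,\log(1/\delta)^{1/4}/2^{2^{\ell-3}}$. For the $\ell_1$ norm, I would use the scale information: since every nonzero entry of $w(\ell)$ exceeds $2^{-2^\ell}$ in absolute value and $\nmt{w(\ell)}\le1$, one has $\nm{w(\ell)}_0\le 2^{2^{\ell+1}}$ and hence $\nm{w(\ell)}_1\le\sqrt{\nm{w(\ell)}_0}\le 2^{2^\ell}$, giving $\nmo{R}\le 2\cdot 2^{2^{\ell+1}}$.

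Next I would invoke Lemma~\ref{hcBiasedLem} on $R$ at level $p$ (even), yielding
\[
\E\bigl[R(x)^p\bigr]\le (p-1)^p\,\nmt{R}^p + \delta\,\nmo{R}^p,
\]
and then apply Markov's inequality to bound $\pr[|R|\ge t]\le (p-1)^p\nmt{R}^p/t^p+\delta\nmo{R}^p/t^p$. The choice of $p$ must force the first term below $\sqrt{\delta}$ and the second below $\delta^{16}$. Setting $p=c\,\log(1/\delta)/2^{\ell+1}$ for a suitable constant $c$ makes $\delta\cdot\nmo{R}^p=\delta\cdot 2^{2^{\ell+1}p}\le\delta^{16}\cdot t^p$ provided $t\ge\Omega(1)$, handling the $\nmo{R}$ term. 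With this same $p$, the first-term inequality $(p-1)\nmt{R}\le\delta^{1/(2p)}t$ becomes
\[
t\ \ge\ O\!\left(\frac{\log(1/\delta)}{2^{\ell+1}}\right)\cdot \frac{\log(1/\delta)^{1/4}}{2^{2^{\ell-3}}}\cdot 2^{2^{\ell+1}/(2c)}\ =\ O\!\left(\log(1/\delta)^{5/4}\right),
\]
where the last equality holds for all $\ell\ge 2$ because, with $c$ chosen large enough, the exponent $2^{\ell+1}/(2c)-\ell-1-2^{\ell-3}$ is bounded above by an absolute constant. Taking $t=C_6^2\log(1/\delta)^{5/4}$ for a sufficiently large $C_6$ then yields both $\sqrt{\delta}$ and $\delta^{16}$ bounds on the two terms, and hence the desired probability bound on $Q(x)=\nmt{A(h)D(x)w(\ell)^T}^2$.

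The main obstacle I anticipate is the balancing act in the choice of $p$: the $\delta\nmo{R}^p$ term in Lemma~\ref{hcBiasedLem} forces $p$ to \emph{shrink} with the scale $\ell$ (because $\nmo{R}$ grows double-exponentially in $\ell$), while the hypercontractive $(p-1)^p\nmt{R}^p$ term would normally prefer $p$ as large as possible. Matching these two constraints so that the final tail bound is $\sqrt{\delta}+\delta^{16}$ and the deviation is only $\log(1/\delta)^{5/4}$ uniformly in $\ell$ requires that the gains from the hash-regularity bound $\nmt{R}^2\le h(w(\ell))\le O(\sqrt{\log(1/\delta)}/2^{2^{\ell-2}})$ compensate exactly for the shrinking of $p$. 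Verifying this numerically across all scales $\ell\ge2$ is where the calculation needs care, but the bookkeeping reduces to checking the single inequality $2^{\ell+1}/(2c)-\ell-1-2^{\ell-3}=O(1)$ above.
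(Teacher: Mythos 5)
Your high-level plan is exactly the paper's: write $Q(x)=\nmt{A(h)D(x)w(\ell)^T}^2$ and its centering $R$, use the $\delta$-biased hypercontractivity bound (Lemma~\ref{hcBiasedLem}), apply Markov with a scale-dependent moment $p$, and plug in the hash-regularity bound $h(w(\ell))\lesssim \sqrt{\log(1/\delta)}/2^{2^{\ell-2}}$ from \eqref{eq:good-h}. Your $\ell_2$ and $\ell_1$ estimates $\nmt{R}^2\leq 2h(w(\ell))$ and $\nmo{R}\leq 2\cdot 2^{2^{\ell+1}}$ are in fact more careful than what is written in the paper (which asserts $\nm{w(\ell)}_0\leq 2^{2^\ell}$; the correct count is $2^{2^{\ell+1}}$ since nonzero entries exceed $2^{-2^\ell}$ and $\nmt{w(\ell)}\leq 1$).

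The problem is in your balancing of the two Markov terms, and it is not a cosmetic book-keeping issue but a contradiction. With $p=c\log(1/\delta)/2^{\ell+1}$ and $\nmo{R}=\Theta(2^{2^{\ell+1}})$, the $\ell_1$ term is $\delta\nmo{R}^p=\Theta(2^p\delta^{1-c})$. For any $c\geq 1$ this is already $\geq 1$, and pushing it below $\delta^{16}t^p$ with $t=O(\log(1/\delta)^{5/4})$ would require $\log t=\Omega\bkets{(c+15)2^{\ell+1}/c}$, which for the largest scales $\ell=\Theta(\log\log n)$ means $t\geq 2^{\Omega(\log n)}$ rather than $\polylog(1/\delta)$. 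Your claim ``$\delta\cdot\nmo{R}^p\leq\delta^{16}\cdot t^p$ provided $t\geq\Omega(1)$'' is therefore false. Meanwhile your first-term analysis, via the condition $2^{\ell+1}/(2c)-\ell-1-2^{\ell-3}=O(1)$, forces $c\geq 8$, so the two requirements are incompatible. The resolution is to allocate the errors the other way: take $c\leq 1/2$ (e.g.\ $p=\log(1/\delta)/2^{\ell+2}$), which makes the $\ell_1$ term $\leq\sqrt{\delta}$, and then accept that the hypercontractive term is only $\leq\delta^{c/16}=\delta^{1/32}$ rather than $\sqrt{\delta}$. (Note the paper itself computes $\delta^{1/16}$ for the first term and then miswrites it as $\delta^{16}$; the stated $1-\sqrt{\delta}-\delta^{16}$ should be read as $1-\sqrt{\delta}-\delta^{\Omega(1)}$, which is all the downstream union bound in Lemma~\ref{lem:v-technical} actually needs.) So: same framework, but you have the sign of the tradeoff backwards, and the ``single inequality to check'' you propose does not verify the argument — it selects the wrong branch.
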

\begin{proof}
We will show that $\nmt{A(h)D(x)w(\ell)^T}^2$ is concentrated around
its mean (which is $\nmt{w(\ell)}^2$) by bounding its moments. The
deviation is given by the polynomial
\begin{align*}
Q_\ell(x) & =  \nmt{A(h)D(x)w(\ell)^T}^2 -\nmt{w(\ell)}^2\\
& = \sum_{j \in [m]} \bkets{\sum_{i \in h^{-1}(j)} w(\ell)_i x_i}^2 -\nmt{w(\ell)}^2\\
& =  \sum_{j \in [m]} \sum_{i_1 \neq i_2 \in h^{-1}(j)} w(\ell)_{i_1}w(\ell)_{i_2} x_{i_1} x_{i_2}
\end{align*}
We have
\begin{align*}
\E_{x \in_u \pmo^n}[Q_\ell(x)^2] & = \sum_{j=1}^m \sum_{i_1 \neq i_2 \in h^{-1}(j)}
  (w(\ell)_{i_1})^2 (w(\ell)_{i_2})^2\\
& \leq \sum_{j=1}^m  \|w(\ell)_{|h^{-1}(j)}\|_2^4 \\
&= h(w(\ell)).\\
\nmo{Q_\ell} & = \sum_{j \in [m]} \sum_{i_1 \neq i_2 \in h^{-1}(j)} |w(\ell)_{i_1}w(\ell)_{i_2}| \\
& \leq \sum_{j \in m} \nmo{w(\ell)_{|h^{-1}(j)}}^2 \\
& \leq \nmo{w(\ell)}^2\\
& \leq \nm{w(\ell)}_0.
\end{align*}

By Lemma \ref{hcBiasedLem} applied to $Q$ with $d=2$, there exists a
constant $C_3$ so that for all even $p \geq 2$,
\begin{equation}\label{eq:firststep}
  \E_{x \in \calD}[Q(x)^p] \leq \left(C_3 p \sqrt{h(w(\ell))}\right)^p
  + \nm{w(\ell)}_0^{p} \delta.
\end{equation}

We bound $h(w_\ell)$ using Equation \eqref{eq:good-h}. We also
have $\nm{w(\ell)}_0 \leq 2^{2^\ell}$. Plugging these into Equation \eqref{eq:firststep},
\[
\E_{x \in \calD}[Q(x)^p] \leq
\frac{C_5^pp^p\log(1/\delta)^{p/4}}{2^{2^{\ell -3}p}} + 2^{2^\ell p}\delta
\]
Now setting
\[ p = \frac{\log(1/\delta)}{2^{\ell+1}}, \ \theta =
C_5\log(1/\delta)^{5/4} \]
and using Markov' inequality gives
\[
\Pr_{x \in \calD}[Q(x) \geq \theta] \leq
\left(\frac{C_5p\log(1/\delta)^{1/4}}{2^{2^{\ell -3}}\theta}\right)^p
+  \left(\frac{2^{2^\ell}}{\theta}\right)^p\delta
\]
To bound the first term, note that
\[
\left(\frac{C_5p\log(1/\delta)^{1/4}}{2^{2^{\ell -3}}\theta}\right)^p \leq
  \left(\frac{1}{2^{2^{\ell-3}}}\right)^{\log(1/\delta)/2^{\ell+1}}
  \leq \delta^{16}.
\]
For the second term, note that since $\theta \geq  1$,
\[
\left(\frac{2^{2^\ell}}{\theta}\right)^p\delta \leq 2^{2^\ell
  \log(1/\delta)/2^{\ell +1}}\delta \leq \sqrt{\delta}
\]

Therefore, except with probability at least $\sqrt{\delta}+ \delta^{16}$ we have
\[\nmt{A(h) D(x) w(\ell)^T}^2 \leq \nmt{w_\ell}^2 + C_5\log(1/\delta)^{5/4}\]
hence
\[\nmt{A(h) D(x) w(\ell)^T} \leq C_6\log(1/\delta)^{5/8}\]
\end{proof}

We now finish the proof of Lemma \ref{lem:v-technical}.

\begin{proof}[Proof of Lemma \ref{lem:v-technical}]
Note that
\[ w = \sum_{i=0}^{\log\log(n)}w(\ell).\]
We will assume that $h$ and $x$ are chosen so that the conditions in Equations
\eqref{eq:good-h} and \eqref{eq:good-x} hold for all $\ell$. By the
union bound, this happens except with probability
\[
\log\log(n)(2m\sqrt{\delta} + \sqrt{\delta} + \delta^{16}) < 3\log\log(n)m\sqrt{\delta}.
\]
In which case, we have
\begin{align*}
\nmt{A(h)D(x) w} & = \nmt{\sum_{\ell=0}^{\log\log(n)} A(h) D(x) w(\ell)^T} \\
& \leq \sum_{\ell=0}^{\log\log(n)} \nmt{A(h) D(x) w(\ell)^T}\\
& \leq C_6\log\log(n) \log(1/\delta)^{5/8}\\
&\leq C_6\log(1/\delta)^{3/4}.
\end{align*}
\end{proof}

We now prove the main lemma of this section:
\begin{proof}[Proof of Lemma \ref{lm:chernofffirst}]
Let $k = \log\log(n)$ be the number of recursive stages.
Let $w$ be a unit vector. Given  $h^1,\ldots,h^k$ and $x^1,\ldots,x^k$, we have
\begin{align*}
\iprod{w}{\calG'(h^1,\ldots,h^k,x^1,\ldots,x^k)}  = \prod_{\ell=1}^kA(h^\ell)D(x^\ell)w^T
\end{align*}

Let $C_8n\sqrt{\delta} = \gamma$, so that $\delta =
\Omega(\gamma^2/n^2)$. Note that $\log(1/\delta)\gg \log\log(n)^4$.

By applying Lemma \ref{lem:v-technical} inductively and using the
union bound, except with probability
\[ 3(\log\log(n))^2m\sqrt{\delta} \leq C_8n\sqrt{\delta} \]
we have that for every $i \leq k$
\begin{align*}
\nmt{\prod_{\ell =1}^iA(h^\ell)D(x^\ell)w^T} \leq (C_6\log\log(n)\log(1/\delta)^{3/4})^i \leq \log(1/\delta)^{i/2}
\end{align*}
and hence
\begin{align*}
|\iprod{w}{\calG'(h^1,\ldots,h^k,x^1,\ldots,x^k)}|  \leq
\log(1/\delta)^{\log\log(n)/2}.
\end{align*}

Thus, with probability $1 - \gamma$, the
deviation is bounded by
\[d(n,\gamma) := (C_1\log(n/\gamma))^{C_2\log\log(n)}\]
and the seedlength of this generator is
\[r' = O(\log(n/\delta)\log\log(n)) = O(\log(n/\gamma)\log\log(n)).\]
\end{proof}

\subsection{Getting sub-Gaussian tail bounds}

The generator $\calG'$ gives a tail probability of
$1-\gamma$ pseudorandomly for $d(n,\gamma)$ standard deviations. We now boost this to obtain sub-Gaussian tails by starting with independent copies of
$\calG'$ and then reuse the seeds for  using a $\prg$ for space bounded computations.

\newcommand{\gmain}{\bar{\calG}}

We will make some added assumptions about $\calG'$:
\begin{itemize}
\item The output is $\eps$-biased for some $\eps \ll \gamma$. We can ensure this by xor-ing the output with an $\eps$-biased string.

\item The distribution is symmetric: for every $x$,  $\Pr_y[\calG'(y)
  =x] = \Pr_y[\calG'(y) = -x]$. We ensure this by  outputting either
  $\calG'(y)$ or $-\calG'(y)$ with probability $1/2$.
\end{itemize}

Let $D_1,D_2$ be constants such that
\[ m = (D_1\log(n/\gamma))^{D_2\log\log(n)} >  10 d(n,\gamma)^2\log(1/\gamma). \]
Note that for $\gamma = 1/\poly(n)$, $\log(m) = O(\log\log(n)^2)$.

Let $\hh = \{h:[n] \to [m]\}$ be a family of $\gamma$-biased hash
functions. Define a new generator $\gmain:\bkets{\zo^r}^m \times \hh
\to \dpm^n$ as follows:
\begin{equation}
  \label{eq:genmain}
 \gmain(z_1,\ldots,z_T,h)_i = \calG'(z_j)_i, \text{ if $h(i) = j$}.
\end{equation}

The seed-length of the generator is $\bar{r} = \log(n/\delta) + m \cdot r'$ which
we will later improve to $\log(n/\delta) + r' + \log (n/\gamma)\log(m)$
using $\prg$s for space bounded computations.

The following claim characterizes the tail behaviour of the output of $\gmain$.
\begin{lem}\label{lm:chernoffsecond}
Let $0 < \eps < \gamma \leq 1/n^3$. For all unit vectors $w \in \R^n$, the
generator $\gmain$ satisfies
\[
\pr_{y \in \{0,1\}^{\bar{r}}}\sbkets{\left|\iprod{w}{\gmain(y)}\right|
  \geq  t} \leq 4(\exp(-t^2/16) + m\sqrt{\gamma} +
(1/\gamma)^{4\log(m)} \eps).
\]
\end{lem}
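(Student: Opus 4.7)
I want to view $\iprod{w}{\gmain(y)} = \sum_{j=1}^m X_j$ where $X_j := \iprod{\tilde w_j}{\calG'(z_j)}$ and $\tilde w_j \in \R^n$ is $w$ restricted to the bucket $h^{-1}(j)$ (zero outside). Once the hash $h$ is fixed, the $X_j$'s are independent (independent seeds $z_j$) and each is symmetric about zero by the symmetry assumption on $\calG'$, hence mean-zero; moreover $\sum_j \nmt{\tilde w_j}^2 = \nmt{w}^2 = 1$. The strategy is a Bernstein-type concentration bound on $\sum_j X_j$ after truncating each $X_j$ using the tail bound of Lemma~\ref{lm:chernofffirst}, combined with a hashing regularity argument controlling $\max_j \nmt{\tilde w_j}$.

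\textbf{Step 1 (truncation).} Apply Lemma~\ref{lm:chernofffirst} to the unit vector $\tilde w_j / \nmt{\tilde w_j}$ with deviation parameter $\sqrt\gamma$: this gives $\Pr[\,\abs{X_j} > d(n,\sqrt\gamma)\nmt{\tilde w_j}\,] \leq \sqrt\gamma$, where $d(n,\sqrt\gamma)$ is the tail threshold from that lemma. Let $\hat X_j$ be the truncation of $X_j$ at the threshold $d(n,\sqrt\gamma)\nmt{\tilde w_j}$. By symmetry of $\calG'$ the variable $\hat X_j$ is still mean-zero, and a union bound over the $m$ buckets yields $\Pr[\,\exists\, j:\hat X_j \neq X_j\,] \leq m\sqrt\gamma$, which supplies the $m\sqrt\gamma$ term in the claim.

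\textbf{Step 2 (hash regularity and Bernstein).} Since $m \geq 10\, d(n,\gamma)^2 \log(1/\gamma)$, if $\max_j \nmt{\tilde w_j}^2 \lesssim 1/m$ holds with high probability, then $\max_j \abs{\hat X_j} \leq d(n,\sqrt\gamma)/\sqrt m = O(1/\sqrt{\log(1/\gamma)}) =: M$. I would establish this regularity by applying Lemma~\ref{lm:hashing} to the vector $w \star w$ (whose $\ell_1$-norm is $1$) with a moment $p = \Theta(\log m)$ and union bounding over $j \in [m]$. Meanwhile the $\epsilon$-bias of $\calG'$ gives $\sum_j \E[\hat X_j^2] \leq \sum_j \nmt{\tilde w_j}^2 + O(\epsilon n) = O(1) =: \sigma^2$. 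Bernstein's inequality then yields $\Pr[\,\abs{\sum_j \hat X_j} \geq t\,] \leq 2\exp(-t^2/(2\sigma^2 + 2Mt/3))$, which simplifies to $2\exp(-t^2/16)$ throughout the range $t = O(\sqrt{\log(1/\gamma)})$; beyond that range $\exp(-t^2/16) \lesssim m\sqrt\gamma$ and the bound holds trivially.

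\textbf{Main obstacle.} The delicate point is that the regularity $\max_j \nmt{\tilde w_j}^2 \lesssim 1/m$ cannot hold for every unit $w$ (e.g.\ $w = e_1$ puts all weight in one bucket). I plan to address this by splitting $w = w_H + w_L$ at the threshold $1/\sqrt m$: the light part $w_L$ satisfies $\nm{w_L}_\infty \leq 1/\sqrt m$ and is amenable to the Bernstein argument above, while $w_H$ has at most $m$ nonzero coordinates that typically fall in distinct buckets (by birthday-type regularity of the $\gamma$-biased hash), so $\iprod{w_H}{\gmain}$ reduces to a sum of essentially independent (modulo $\epsilon$-bias) Rademacher variables weighted by the $w_i$, which enjoys an $\exp(-t^2/2)$ Chernoff tail directly. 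The $(1/\gamma)^{4\log m}\epsilon$ error term should then emerge from the Fourier/moment bookkeeping: applying Lemma~\ref{hcBiasedLem} to the degree-$O(\log m)$ polynomials in the outputs of $\calG'$ that arise in the Bernstein estimate, the $\nmo{Q}^p\epsilon$ hypercontractive correction with $p = \Theta(\log(1/\gamma))$ produces exactly the factor $(1/\gamma)^{4\log m}$.
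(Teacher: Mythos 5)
Your high-level plan is the same as the paper's: split the unit vector into a heavy part and a light part, use the hashing-regularity lemma (Lemma~\ref{lm:hashing}) to control the per-bucket $\ell_2$-mass of the light part, truncate each $Z_j := \iprod{v_{|h^{-1}(j)}}{\calG'(z_j)}$ via Lemma~\ref{lm:chernofffirst}, and apply Bernstein. The paper conditions on the truncation event (using that it is symmetric and high-probability, so it preserves mean zero and degrades the bias only by $m\gamma$) rather than literally truncating, but that is a presentational choice. However, two concrete pieces of your plan do not close as stated.

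First, the heavy/light threshold $1/\sqrt{m}$ is too crude. If $\nm{w_L}_\infty \leq 1/\sqrt m$ then $\nm{w_L}_4^2 \leq 1/\sqrt m$, and Lemma~\ref{lm:hashing} applied to $w_L\star w_L$ controls deviations only at the scale $\sqrt{p}\,\nm{w_L}_4^2 \approx \sqrt{p}/\sqrt m$, which dwarfs the target $1/m$. So you cannot conclude $\max_j \nmt{w_{L|h^{-1}(j)}}^2 = O(1/m)$, which is exactly what you need so that $M = d(n,\gamma)\max_j\nmt{w_{L|h^{-1}(j)}} \lesssim 1/\sqrt{\log(1/\gamma)}$ for Bernstein to give $\exp(-\Theta(t^2))$. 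The paper instead cuts at $\beta = 1/(m^2\sqrt{\log(1/\gamma)})$, so that $\nm{v}_4^2 \leq \beta$ and the fluctuations from Lemma~\ref{lm:hashing} with $p = \log(1/\gamma)/\log(1/\beta)$ are $O(\sqrt{\beta p}) \lesssim 1/m$; this is the content of Lemma~\ref{lem:tech1} and is where the $2m\sqrt{\gamma}$ term actually comes from.

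Second, your heavy-part argument is more complicated than needed and does not obviously produce the $(1/\gamma)^{4\log m}\eps$ term. The paper does not reason about the bucket structure of $w_H$ at all: since $\gmain$ itself is $\eps$-biased, it applies Lemma~\ref{lm:epschernoff} directly to $\iprod{u}{\gmain(y)}$, getting $2\exp(-t^2/16) + \nm{u}_0^{\log(1/\gamma)}\eps$ where $\nm{u}_0 \leq 1/\beta^2 = m^4\log(1/\gamma)$, which is precisely the claimed $(1/\gamma)^{O(\log m)}\eps$. Your ``birthday-type regularity'' route would still have to account for buckets holding more than one heavy coordinate and their correlation through a common $\calG'(z_j)$, and your proposed hypercontractive bookkeeping with degree-$O(\log m)$ polynomials is not set up; the direct small-bias Chernoff bound sidesteps all of this. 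I would keep your Bernstein step but replace the threshold with $\beta$ and replace the heavy-part analysis with a one-line invocation of Lemma~\ref{lm:epschernoff}.
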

\begin{proof}
Note that it suffices to prove the claim for $t \leq
2\sqrt{\log(1/\gamma)}$ since tail probabilities can only decrease
with $t$ and beyond this value, the tail bound is dominated by the
additive terms.

Fix a unit vector $w \in \R^n$. Let
\begin{align}
\label{eq:set-beta}
\beta = \frac{1}{m^2\sqrt{\log(1/\gamma)}}
\end{align}
and define $u, v \in \R^n$ to consist of the heavy and light indices respectively
\begin{align*}
u_i = \begin{cases} w_i & \text{if} \ |w_i| \geq \beta\\
0 &\text{otherwise.}
\end{cases}, \ \
v_i = \begin{cases} w_i & \text{if} \ |w_i| < \beta\\
0 &\text{otherwise.}
\end{cases}.
\end{align*}
Since $w =u +v$, it suffices to bound the probability that 
either of $|\iprod{u}{\gmain(y)}|$ and $|\iprod{v}{\gmain{y}}|$ exceed $t/2$.
We will consider $u$ first. Note that
\[ \nm{u}_0 \leq \frac{1}{\beta^2} \leq m^4 \log\left(\frac{1}{\gamma}\right).\]
Since $\gmain(y)$ is $\eps$-biased, by Lemma \ref{lm:epschernoff}
applied for $t/2 \leq \sqrt{\log(1/\gamma)}$,
\begin{align}
\pr\sbkets{\abs{\iprod{u}{\gmain(y)}} > t/2} & \leq 2 \exp(-t^2/16) +
\nm{u}_0^{\log(1/\gamma)} \eps \notag\\
& \leq 2 \exp(-t^2/16) + (m\log(1/\gamma))^{\log(1/\gamma)}  \eps.
\label{eq:finalmain2}
\end{align}

We will show a tail bound for $\iprod{v}{\gmain(y)}$ by bounding its
higher order moments. Fix a hash function $h \in_u \hh$ and for $j \in \{1
,\ldots,m\}$, let
\[ Z_j = \iprod{v_{|h^{-1}(j)}}{\calG'(z_j)}.\]
Note that  the random variables $Z_j$ are independent of one another,
and
\[ \iprod{v}{\gmain(y)} = \sum_{j=1}^m Z_j.\]
We use Lemma \ref{lm:hashing} to bound
$\nmt{v_{|h^{-1}(j)}}$. We defer the proof of the following technical
lemma.

\begin{lem}
\label{lem:tech1}
With probability $1 - 2m\sqrt{\gamma}$, for all $j \in [m]$ we have
\begin{align}
\label{eq:bound-h}
\nmt{v_{|h^{-1}(j)}}^2 \leq  \frac{2}{m}
\end{align}
\end{lem}
We condition on the hash function $h$ satisfying
Equation \eqref{eq:bound-h}, and call this event $A$.

Recall that $Z_j = \iprod{v_{|h^{-1}(j)}}{\calG'(z_j)}$.
By Lemma \ref{lm:chernofffirst}, with probability $1 -\gamma$ over
$z_j$, we have the bound
\begin{align}
\label{eq:bound-zj}
|Z_j| \leq d(n,\gamma) \nmt{v_{|h^{-1}(j)}} \leq
\frac{\sqrt{2}d(n,\gamma)}{\sqrt{m}} \leq
\frac{1}{2\sqrt{\log(1/\gamma)}} := M
\end{align}
where the last inequality is by the choice of $m$.
By the union bound, Equation \eqref{eq:bound-zj} holds with
probability at least $1 - m\gamma$ over $z_1,\ldots,z_m$, for all $j
\in [m]$. We further condition on the event $|Z_j| \leq M$ for all $j
\in [m]$ which we denote by $B$.

Conditioning on a high probability event preserves the small-bias
property of $G(z_j)$'s up to a small additive error. In particular,
conditioned on the event $B$, $\calG'(z_j)$ is $(\eps + m\gamma)$-biased. Since
$\nm{v_{|h^{-1}(j)}}_1 \leq \sqrt{n}$ we have
\begin{align*}
\sum_{j=1}^m \E[Z_j^2|B] & \leq \sum_{j=1}^m (\nmt{v_{|h^{-1}(j)}}^2 + n
  (\eps+m \gamma)) \\
& \leq 1 + nm(\eps + m\gamma)\\
& \leq 2.
\end{align*}
Further, since $\calG'(z_j)$ is symmetric, it continues to be
symmetric after we condition on $B$ (which is a symmetric event in
$\calG'(z_j))$.

Since $t/2 \leq \sqrt{\log(1/\gamma)}$, and $M =
1/2\sqrt{\log(1/\gamma)}$ we have $Mt\leq 1/2$.
We now apply Bernstein's inequality \cite{Fellerbook} to the random variables
$\{Z_j|B\}_{j=1}^m$ which are mean zero and are bounded by $M$ to get
\begin{align*}
 \pr\sbkets{\abs{\sum_{j=1}^m Z_j|B} > t/2} &\leq 2
 \exp\bkets{-\frac{t^2}{4(\sum_j \nmt{Z_j|B}^2 + M t/3)}}\\
& \leq 2 \exp\bkets{-\frac{t^2}{28/3}} \\
& \leq 2 \exp\bkets{-\frac{t^2}{16}}.
\end{align*}
Combining the above arguments, we get that
\begin{equation}
\label{eq:finalmain1}
  \pr\sbkets{\abs{\iprod{v}{\calG'(y)}} > t/2} \leq 2 \exp(-t^2/16) +  2m\sqrt{\gamma}.
\end{equation}

The claim now follows from combining Equations \eqref{eq:finalmain1} and \eqref{eq:finalmain2}
\end{proof}

\begin{proof}[Proof of Lemma \ref{lem:tech1}]
Note that
\[\nmo{v \star v} \leq 1, \ \nmt{v \star v} = \nm{v}_4^2 \leq \beta.\]
Therefore, by setting
\[ p = \frac{\log(1/\gamma)}{\log(1/\beta)}, \ t = D_6 \sqrt{\beta p}\]
in Lemma \ref{lm:hashing} we get
\begin{align*}
\pr_{h \in \hh}\sbkets{\nmt{v_{|h^{-1}(j)}}^2 \geq \frac{1}{m} + D_6
  \sqrt{\beta\frac{\log(1/\gamma)}{\log(1/\beta)}}}  & \leq
  \frac{D_6^p p^p \beta^p + \gamma}{(D_6 \sqrt{\beta p})^p} \leq 2 \sqrt{\gamma}.
\end{align*}

By a union bound, with probability at least $1 - 2m \sqrt{\gamma}$
over $h$, for all $j \in [m]$,
\begin{align*}
\nmt{v_{|h^{-1}(j)}}^2 & \leq \frac{1}{m} +
\sqrt{\beta\frac{\log(1/\gamma)}{\log(1/\beta)}}\\
& \leq \frac{2}{m}
\end{align*}
where the last inequality follows from our choice of $\beta$ in
Equation \eqref{eq:set-beta}.
\end{proof}

\subsection{Putting things together}
We are now ready to prove Theorem \ref{th:mainChernoff}.

\begin{proof}[Proof of Theorem \ref{th:mainChernoff}]
Let $\gmain:\bkets{\{0,1\}^{r'}}^m \times \hh \to \dpm^n$ be the
generator as in Lemma \ref{lm:chernoffsecond}. Let $\delta$ be the
final additive error desired. Set
\begin{align*}
\gamma & = \frac{\delta^2}{(D_5\log(n/\delta))^{D_6\log \log(n)}} \leq \frac{\delta^2}{50m^2}\\
\eps  & = \left(\frac{\delta}{n}\right)^{D_7\log\log(n/\delta)^3} \leq
\frac{\delta}{20 \gamma^{4 \log(m)}}.
\end{align*}
It can be verified that with these parameter seedings, the error
probability in Lemma \ref{lm:chernoffsecond} is at most $4e^{-t^2/16} + \delta$,
and the seed-length of $\gmain$ is $\log(n/\gamma) + m r'$, where $r' = O(\log(n/\delta) \cdot (\log \log (n/\delta))^3)$. 

Observe that once we fix the has function $h$, the inner product
$\iprod{w}{\calG'(z_1,\ldots,z_m,h)}$ can be computed by a $(S, r',m)$-ROBP where $S = O(\log n)$ which reads one $z_1,z_2,\ldots,z_m$ in order. The reason is that we can round each weight $w_i$ up to a multiple of $1/n^2$. This can only increase $|\iprod{w}{z}|$ by $1/n$
for any $z \in \pmo^n$. This also ensures that $\iprod{w}{z}$ lies in the interval $[-\sqrt{n},\sqrt{n}]$ and that it is a multiple of $1/n^2$ and thus can be computed with $O(\log n)$-bits of precision.

Now, let $\calG^{INW}:\{0,1\}^{r_s} \to \bkets{\{0,1\}^{r'}}^m$ be a
generator fooling $O(S,r',m)$-ROBP with error $\delta$. By Theorem
\ref{th:inwprg}, there exist such generators with seed-length 
$$r_s = O(r' + (\log n) (\log m) + \log(m/\delta)\cdot (\log m)) =
O((\log (n/\delta))(\log\log(n/\delta))^3).$$ 

Now, if we define our final generator $\calG^f:\hh \times \{0,1\}^{r_s} \to \pmo^n$ by
$$\calG^f(h,z) = \calG'(\calG^{INW}(z),h).$$
From the above arguments it follows that the output of $\calG^f$ only
has an additional $\delta$ error compared to $\calG'$. The theorem now
follows from the above bound on seed-length.  
\end{proof}


\section{A $\prg$ for \rom}

In this section we construct generator to fool \rom to polynomial
error with seedlength $\tilde{O}(\log n)$ proving Theorem
\ref{th:mainintrohs}. As the generator and its analysis is quite
technical, we first give a high-level description at the
risk of repeating parts of Section \ref{sec:overview}.   
 
\subsection*{Proof overview}
For simplicity, in this discussion let us fix a test vector $v \in
\zpm^n$ and error $\epsilon = 1/\poly(n)$. We start by noting that it
suffices to design a PRG $G:\zo^r \to \dpm^n$ such that
$\dtv(\dotp{v}{G(y)}, \dotp{v}{X}) \ll 1/\poly(n)$ where $y \in_u
\zo^r$ and $X \in_u \dpm^n$. In the following let $X \in_u \dpm^n$ and
$Y \sim G(y)$, where $y \in_u \zo^r$ be the output of the desired
generator. 

The starting point of our analysis and construction is to note that showing closeness in statistical distance for discrete random variables is equivalent to showing that the \emph{Fourier transforms} of the random variables are close. This will allow us to use various analytic tools. Concretely, we shall use the following elementary fact about the discrete Fourier transform.
\begin{claim}\label{clm:introfouriertotv}  
  Let $Z_1, Z_2$ be two discrete random variables with support sizes at most $B$. Then,
$$\dtv(Z_1,Z_2) \leq \sqrt{2 B} \cdot \max_{\alpha \in \R} \left|\E[\exp(2\pi i \alpha Z_1)] - \E[\exp(2\pi i \alpha Z_2)]\right|.$$
\end{claim}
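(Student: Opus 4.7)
The plan is to combine Cauchy--Schwarz with Parseval's theorem, exploiting the fact that support sizes are bounded.

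First I would set $S = \Supp(Z_1) \cup \Supp(Z_2)$, so $|S| \leq 2B$, and define the signed measure $p(z) = \Pr[Z_1 = z] - \Pr[Z_2 = z]$ for $z \in S$. By the standard characterization of total variation distance for discrete variables,
\[ \dtv(Z_1,Z_2) = \frac{1}{2} \sum_{z \in S} |p(z)|. \]
The first step is to pass from the $\ell_1$-norm to the $\ell_2$-norm via Cauchy--Schwarz:
\[ \sum_{z \in S} |p(z)| \leq \sqrt{|S|}\,\Bigl(\sum_{z \in S} |p(z)|^2\Bigr)^{1/2} \leq \sqrt{2B}\,\Bigl(\sum_{z} |p(z)|^2\Bigr)^{1/2}. \]

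The second step is to use the Fourier transform to control the $\ell_2$-norm of $p$. Writing
\[ \varphi(\alpha) = \E[\exp(2\pi i \alpha Z_1)] - \E[\exp(2\pi i \alpha Z_2)] = \sum_{z \in S} p(z)\exp(2\pi i \alpha z), \]
in the intended application $Z_1, Z_2$ are integer-valued (as $v \in \zpm^n$ and $x \in \dpm^n$), so $\varphi$ is a $1$-periodic function of $\alpha$ and Parseval's identity on $[0,1]$ gives
\[ \sum_{z} |p(z)|^2 = \int_0^1 |\varphi(\alpha)|^2\,d\alpha \leq \max_{\alpha \in \R} |\varphi(\alpha)|^2. \]
Taking square roots and combining with the Cauchy--Schwarz step yields
\[ \dtv(Z_1,Z_2) \leq \frac{\sqrt{2B}}{2}\,\max_{\alpha \in \R} |\varphi(\alpha)| \leq \sqrt{2B}\,\max_{\alpha \in \R}|\varphi(\alpha)|, \]
which is the desired bound (in fact with an extra factor of $1/2$ to spare).

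The only subtle point, and the only place a minor obstacle arises, is Parseval: the identity as stated uses that the support lies on an integer lattice so that $\varphi$ is genuinely periodic. For arbitrary finitely-supported real random variables one can either rescale to place the (finite) union of supports on a common rational lattice, or invoke the almost-periodic version $\lim_{T\to\infty}\tfrac{1}{T}\int_0^T|\varphi(\alpha)|^2\,d\alpha = \sum_z|p(z)|^2$; either way the average of $|\varphi|^2$ is bounded by its supremum, so the argument goes through with the same constant. In the setting of signed majorities this issue does not arise since $Z_1,Z_2$ take integer values directly.
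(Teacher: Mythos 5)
Your proof is correct and follows essentially the same route as the paper's: bound the total variation by the $\ell_2$ distance of the difference of probability mass functions via Cauchy--Schwarz (using that the union of supports has size at most $2B$), then bound that $\ell_2$ norm by the sup of the Fourier transform via Parseval/Plancherel. Your write-up is, if anything, more careful than the paper's two-line argument, correctly flagging that the periodic Parseval identity requires the supports to lie on a common lattice (automatic for integer-valued $v\cdot x$ in the intended application) and noting the almost-periodic fallback for the general case.
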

\begin{proof}
Note that the distribution $Z_1-Z_2$ is supported on at most $2B$ points. Therefore,
$$
\dtv(Z_1,Z_2) = \|Z_1-Z_2\|_1 \leq \sqrt{2B}\|Z_1-Z_2\|_2.
$$
On the other hand, the Plancherel identity implies that
$$
\|Z_1-Z_2\|_2 \leq \max_{\alpha \in \R} \left|\E[\exp(2\pi i \alpha Z_1)] - \E[\exp(2\pi i \alpha Z_2)]\right|.
$$
This completes the proof.
\end{proof}

Henceforth, we will focus on designing a generator so as to fool the test function $\exp(2 \pi i \alpha \dotp{v}{x}) \equiv \phi_{v,\alpha}(x)$. To do so, we will consider two cases based on how large $\alpha \in [0,1]$ is. The two cases we consider capture the shift in the behaviour of $\E[\exp(2\pi\alpha(v \cdot X))]$ - the ``$\alpha$-th Fourier coefficient''. We can combine the generators for the two cases easily at the end. 

\subsubsection*{Large $\alpha$: $\alpha \gg (\log n)^{O(1)}/\sqrt{\nm{v}_0}$} Roughly speaking, the reason for considering this threshold is that all values of $\alpha$ greater than this value yield similar Fourier coefficients: $\left|\E[\phi_{v,\alpha}(X)]\right| \ll 1/\poly(n)$ for $\alpha$ in this range. Thus, it suffices to ensure that $\E[\phi_{v,\alpha}(Y)]$ is small. We achieve this by exhibiting a way to ``amplify'' the error, i.e., go from fooling $\phi_{v,\alpha}$ with constant error to fooling them with polynomially small error at the expense of a $O(\log \log n)$ factor in seed-length. We then instantiate this amplification procedure with the generator of Gopalan, Meka, Reingold, Zuckerman \cite{GopalanMRZ13} which requires seed-length $O(\log n)$ to fool such test functions ($\phi_{v,\alpha}(\;)$) with constant error. We leave the details of the amplification procedure to the corresponding section.

\subsubsection*{Small $\alpha$: $\alpha \ll (\log n)^{O(1)}/\sqrt{\nm{v}_0}$}
This is the harder of the two cases and the core of our construction and analysis. 
The generator we use is essentially the same as the one based on iterative dimension reduction used in derandomizng the Chernoff bound. The main difference will be that instead of using small-bias spaces in each dimension reduction step we use $k$-wise independent spaces for suitable $k$. However, the analysis is quite different and requires several new analytic tools. 

We next formally describe our generator for handling this case. Let $n,\delta>0$. Let $C$ be a sufficiently large constant. We define a generator as follows. Let $n=n_1>n_2>\ldots > n_t$ so that $n_{i+1}=n_i^{1/2}+O(1)$ and $\log^{2C}(n/\delta)\geq n_t \geq \log^C(n/\delta)$. Note that this implies that $t=O(\log\log(n))$. For $1\leq i < t$, let $\hh_i = \{h:[n_i] \to [n_{i+1}]\}$ be a family of $\frac{C\log(n/\delta)}{\log(n_i)}$-wise independent hash functions. Let $h_i \in_u \hh_i$. Let $Z_i$ be a random element of $\dpm^{n_i}$ chosen from a distribution that is both $(\delta/n)^C$-biased and and $\frac{C\log(n/\delta)}{\log(n_i)}$-wise independent. Finally, let $Z$ be a random variable in $\dpm^{n_t}$ be chosen to fool weight at most $n$ halfspaces to variational distance $\delta/n$ as described in Theorem \ref{simpleGeneratorThm}. We define our random variable $Y\in\dpm^n$ to be
\begin{equation}\label{LargeAlphaEqn}
Y = Z A(h_{t-1})D(Z_{t-1})A(h_{t-2})D(Z_{t-2})\cdots A(h_1)D(Z_1).
\end{equation}

Informally, this generator begins with the string $Z_1$, then uses $h_1$ to divide the coordinates into $n_2$ bins and then for each bin multiplies the elements in this bin by a random sign, these $n_2$ signs being chosen recursively by a similar generator, until at the final level they are picked using the generator from Theorem \ref{simpleGeneratorThm} instead.

It is easy to see from Theorem \ref{simpleGeneratorThm} and Fact \ref{HashFamilyFact} that the random variable $Y$ can be produced from a random seed of length $s=O(\log(n/\delta)\log\log(n/\delta))$. We also claim that it fools $\phi_{v,\alpha}$ for $|\alpha| \leq \log^3(1/\delta)/\|v\|_2$. This in turn implies our claimed pseudorandomness for halfspaces in lieu of Claim \ref{clm:introfouriertotv}.  

To analyze the generator we shall use a hybrid argument to exploit the recursive nature of the generator. To this end, for $1 \leq i < t$, let $X_i \in_u \dpm^{n_i}$ and define
\begin{equation}\label{YiEqn}
Y_i := X_iA(h_{i-1})D(Z_{i-1})\cdots A(h_1)D(Z_1)
\end{equation}
(note that $Y_1=X_1$) and let $Y_t=Y$.

The crux of the analysis is then in showing the following claim analyzing a single dimension reduction step: for $1 \leq i \leq t$ and $\alpha \leq \log^3(1/\delta)/\nmt{v}$,
$$\left | \E[\phi_{v,\alpha}(Y_i)] - \E[\phi_{v,\alpha}(Y_{i+1})]\right| \leq \delta/n.$$
If we let $v_0 = v$ and $v_i = A(h_{i-1})D(Z_{i-1}\cdots A(h_1)D(Z_1)v$, then the above claim amounts to bounding
\begin{equation}
  \label{eq:overview2}
\left | \E[\phi_{v_i,\alpha}(X_i)] - \E[\phi_{v_i,\alpha}(X_{i-1}A(h_i)D(Z_i))]\right|.  
\end{equation}

Thus, intuitively, we need to argue that a single step of dimension reduction (i.e., applying $A(h_i)D(Z_i)$) does not cause too much error. Ideally, we would have liked to make such a claim for all test functions of the form $\phi_{w,\alpha}$; this turns out to be false. What remains true however is that a single dimension reduction step \emph{fools} test functions of the form $\phi_{w,\alpha}$ when the test vector $w \in \R^{n_i}$ is \emph{sufficiently well-spread out} (as measured by the $\ell_2,\ell_4$-norms of $w$) and $\alpha$ is not too large. In particular, in the most technically intensive part of our argument we bound the error from the above step as a function of the $\ell_2,\ell_4$ norms of the vector $v_i$. We then argue separately that the $\ell_2, \ell_4$ norms of the test vector $v$ are close to their true values under the above transformations.

In order to analyze expectations as in Equation \ref{eq:overview2}, it is critical to note that $X_{i-1}$ is uniformly distributed. This implies (for fixed $h_i$) that the given expectation over $X_{i-1}$ is a product of cosines of linear functions of $Z_i$. We take advantage of the fact that cosine is a smooth function of its input, allowing us to approximate this product by a Taylor polynomial. If $\alpha$ is sufficiently small, the higher order terms will be small enough to ignore, and therefore the limited independence of $Z_i$ will be sufficient to guarantee the desired approximation.
\subsection{Generator for large $\alpha$}
We now develop a generator that works when $\alpha$ is large, in particular, we prove:
\begin{prop}\label{genLargeProp}
There exists an explicit generator $\glarge:\zo^r \to \dpm^n$ with seed-length $r = O((\log(n/\epsilon)) (\log \log (n)))$ such that the following holds. For all $v \in \zpm^n$, $\alpha \in (-1/4,1/4)$ with $\alpha \geq \log^3 (1/\epsilon)/\nm{v}_2$,
\[ \left|\E_{y \in_u \zo^r}\sbkets{\phi_{v,\alpha}(\glarge(y))} - \E_{X \in_u \dpm^n}\sbkets{\phi_{v,\alpha}(X)}\right| \leq \epsilon.\]
\end{prop}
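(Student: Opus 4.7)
For $v \in \zpm^n$ and uniform $X$,
\[
\E_X[\phi_{v,\alpha}(X)] \;=\; \prod_{j:v_j\neq 0} \cos(2\pi\alpha v_j) \;=\; \cos(2\pi\alpha)^{\nm{v}_0}.
\]
Since $\nmt{v}^2 = \nm{v}_0$ for signed vectors, the hypothesis $\alpha \geq \log^3(1/\epsilon)/\nmt{v}$ combined with $|\alpha| \leq 1/4$ forces $\nm{v}_0\alpha^2 \geq \log^6(1/\epsilon)$, and the bound $|\cos(2\pi\alpha)| \leq e^{-c\alpha^2}$ valid on $(-1/4,1/4)$ yields $|\E[\phi_{v,\alpha}(X)]| \leq e^{-c\log^6(1/\epsilon)} \ll \epsilon$. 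It therefore suffices to design $\glarge$ with $|\E_y[\phi_{v,\alpha}(\glarge(y))]| \leq O(\epsilon)$; no matching of the tiny target value is required.

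\textbf{Construction (recursive error amplification).} Take $G^{(0)} := G_0$ to be the generator of Theorem \ref{simpleGeneratorThm} with constant error $\eta_0 = 1/20$ and seed $s_0 = O(\log n)$; since total-variation fooling implies Fourier-coefficient fooling with loss at most a factor of $2$, $|\E_y[\phi_{v',\alpha}(G^{(0)}(y))] - \cos(2\pi\alpha)^{\nm{v'}_0}| \leq 2\eta_0$ for every $v' \in \zpm^n$ and every $\alpha$. Recursively define
\[
G^{(i+1)}(y_1,y_2,h_i)_j \;:=\; G^{(i)}(y_{h_i(j)})_j,\qquad j\in[n],
\]
where $h_i : [n] \to \{0,1\}$ is drawn from a mildly-independent hash family and $(y_1,y_2) \in (\zo^{s_i})^2$ is drawn via the INW generator of Theorem \ref{th:inwprg} tuned to fool the $(O(\log n), s_i, 2)$-ROBP that reads $(y_1,y_2)$ and maintains the running complex product $\phi_{v^{(0)},\alpha}(G^{(i)}(y_1)) \cdot \phi_{v^{(1)},\alpha}(G^{(i)}(y_2))$ (with $v^{(b)} := v|_{h_i^{-1}(b)}$) to $O(\log n)$-bit precision; extending INW from Boolean to bounded complex-valued targets is handled by a standard level-set decomposition of the real and imaginary parts. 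Each level adds only $O(\log(n/\epsilon))$ bits, so after $k := O(\log\log(1/\epsilon))$ levels, $\glarge := G^{(k)}$ has seed length $O(\log(n/\epsilon)\log\log n)$.

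\textbf{Analysis.} The key inductive invariant is that $G^{(i)}$ fools $\phi_{v',\alpha}$ to error $\eta_i$ for every $v' \in \zpm^n$ with $\nm{v'}_0\alpha^2 \geq \Omega(3^i)$, with $\eta_{i+1} \leq 9\eta_i^2 + \epsilon/(ck)$. For fixed $h_i$ and independent $y_1,y_2$, the factorization above combined with hashing concentration giving $\nm{v^{(b)}}_0 \geq \nm{v}_0/3$ ensures that the invariant at level $i+1$ implies each $v^{(b)}$ satisfies it at level $i$; thus each factor has magnitude at most $2\eta_i + |\cos(2\pi\alpha)|^{\nm{v^{(b)}}_0} \leq 3\eta_i$, and the product has magnitude at most $9\eta_i^2$. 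Adding the per-level INW error gives the claimed recurrence, and after $k = O(\log\log(1/\epsilon))$ levels, $\eta_k \leq \epsilon$ while the accumulated INW errors total $O(\epsilon)$. The invariant at level $k$ is satisfied by our given $v$ thanks to $\nm{v}_0\alpha^2 \geq \log^6(1/\epsilon) \gg 3^k = \log^{O(1)}(1/\epsilon)$, concluding $|\E_y[\phi_{v,\alpha}(\glarge(y))]| \leq O(\epsilon)$.

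\textbf{Main obstacle.} The delicate point is choosing the independence of each $h_i$ so that the rare event $\{\nm{v^{(b)}}_0 < \nm{v}_0/3\}$ contributes at most $O(\epsilon)$ to the expected Fourier coefficient; since amplification drives $\eta_i$ down doubly-exponentially in $i$, we need the hash failure probability at level $i$ to shrink comparably, which forces the hash independence parameter to grow linearly in $i$. The bookkeeping that fits per-level hash families (via Fact \ref{HashFamilyFact}) into the $O(\log(n/\epsilon)\log\log n)$ budget --- together with the routine extension of Theorem \ref{th:inwprg} to bounded complex-valued targets --- is the last technical step.
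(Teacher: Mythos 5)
Your high-level plan shares two ingredients with the paper's proof: the observation that $|\E_X[\phi_{v,\alpha}(X)]| = |\cos(2\pi\alpha)|^{\nm{v}_0} \ll \epsilon$ in this regime (so you only need to drive the \emph{generator}'s Fourier coefficient to $O(\epsilon)$), the use of the constant-error generator of Theorem~\ref{simpleGeneratorThm} as a base case, and the INW reuse of seeds. But the \emph{construction} is genuinely different. The paper does this in one shot: a single $\poly(\epsilon)$-biased hash $h:[n]\to[m]$ with $m=O(\log^5(1/\epsilon))$, chosen to have a spreading property (Lemma~\ref{SpreadingHashLem}), splits the support so that at least $\Omega(\log(1/\epsilon))$ bins are nontrivially loaded; each contributes a factor of magnitude $\leq 3/4$, so the product is $\leq\epsilon$; then a single INW invocation over $m$ blocks recycles the seeds. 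Your construction instead builds a binary tree of depth $k = \Theta(\log\log(1/\epsilon))$ with one INW call per level, squaring the error at each level. Both give seed length $O(\log(n/\epsilon)\log\log(1/\epsilon))$, so your approach plausibly works and is a valid alternative, but it carries more moving parts (a per-level hash, a per-level INW error, a per-level support loss) and you have not verified the balance between them.

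More importantly, the ``\textbf{Main obstacle}'' paragraph contains a concrete error in the analysis of your own recurrence. You claim the hash failure probability at level $i$ must shrink doubly exponentially in $i$ to track $\eta_i$, and hence that the hash independence must grow linearly in $i$. Neither statement is right. The recurrence $\eta_{i+1}\leq 9\eta_i^2 + e_i$ with a \emph{constant} additive error $e_i = O(\epsilon)$ per level already converges to a fixed point of order $O(\epsilon)$: once $\eta_i$ drops below $\sqrt{e_i}$, it stays $O(e_i)$ forever, so you never need the additive term to shrink as the multiplicative term does. What you \emph{do} need at every level (not just deep ones) is hash failure probability $O(\epsilon)$. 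That requires either $\Omega(\log(1/\epsilon))$-wise independence --- which would cost $\Omega(\log(1/\epsilon)\log n)$ bits per level and blow the seed budget --- or $\poly(\epsilon)$-biased hashing with seed $O(\log(n/\epsilon))$ per level, which is what the paper uses throughout. Growing the independence only linearly in $i\leq O(\log\log(1/\epsilon))$ gives neither: the resulting failure probability is subpolynomial in $\epsilon$ but not $O(\epsilon)$. Related to this, you should also check that the supports $\nm{v^{(b)}}_0$ at the bottom of the tree remain $\Omega(\log(1/\epsilon))$ (otherwise even a truly random hash into two bins fails with constant probability); since the top-level support is only guaranteed to be $\geq\log^6(1/\epsilon)$ and halves each level, this caps $k$ at roughly $5\log\log(1/\epsilon)$, which is compatible with the $k\geq\log\log(1/\epsilon)+O(1)$ needed for $\eta_k\leq\epsilon$ but only in a fairly narrow window that your sketch does not pin down. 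The paper's flat construction sidesteps all of this, which is why it is the cleaner route.
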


\subsubsection{Spreading hashes}

In order to prove Proposition \ref{genLargeProp} we will need to study a certain property of hash families.

\begin{definition}
 A family of hash functions $\hh = \{h:[n] \to [m]\}$ is said to be $(k,\ell,\epsilon)$-spreading if the following holds: for every $I \subseteq [n]$ with $|I| \geq k$, and $h \in_u \hh$ with probability at least $1-\epsilon$, then for all $j\in [m]$, $|h^{-1}(j)\cap I| \leq |I|/\ell$.
\end{definition}

The above definition quantifies the intuition that when a sufficiently large (so that standard tail bounds apply) collection of items $I \subseteq [n]$ is hashed into $m$ bins, the max-load is not much more than the average load of $|I|/m$. It will be important for us to be able to construct such families explicitly.

\begin{lem}\label{SpreadingHashLem}
For all $\epsilon \geq 0$, there exists an explicit hash family $\hh = \{h:[n] \to [m]\}$ where $m = O(\log^{5}(1/\epsilon))$ which is $((\log^{5} (1/\epsilon)), \log(1/\epsilon),\epsilon)$-spreading and $h \in_u \hh$ can be sampled with $O(\log(n/\epsilon))$ bits.
\end{lem}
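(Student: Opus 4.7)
The plan is to take $\hh$ to be a $t$-wise independent hash family $h:[n]\to[m]$ with $t = c\log(1/\epsilon)$ and $m = C\log^{5}(1/\epsilon)$ for suitable constants $c,C$. By Fact~\ref{HashFamilyFact} such a family can be sampled from $O(t\log(nm))$ bits; to match the claimed $O(\log(n/\epsilon))$ seed length one pre-composes with a cheap pairwise-independent hash $f:[n]\to[M]$ for $M = \mathrm{poly}(1/\epsilon)$ (costing only $O(\log n)$ bits) and then applies a $t$-wise independent hash $g:[M]\to[m]$ (costing $O(t\log M) = \tilde{O}(\log(1/\epsilon))$ bits).

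The core argument is a concentration bound on bin loads. Fix any $I\subseteq[n]$ with $|I|\geq k := \log^{5}(1/\epsilon)$ and a bin $j\in[m]$; let $X_j := \sum_{i\in I}\mathbf{1}[h(i)=j]$, so $\mu := \E[X_j] = |I|/m$. Our goal is
\[
\pr\bigl[X_j > |I|/\ell\bigr] \leq \epsilon/m,
\]
where $\ell = \log(1/\epsilon)$; a union bound over the $m$ bins then yields the spreading property. Since $m\gg \ell$, the deviation $a := |I|/\ell - \mu \asymp |I|/\ell$. Because $X_j$ is a sum of $t$-wise independent $0/1$ variables, the Bellare--Rompel inequality (equivalently, a Markov bound on the $t$-th central moment, which depends only on the $t$-wise marginals) gives
\[
\pr\bigl[\,|X_j-\mu|\geq a\,\bigr] \;\leq\; O\!\bkets{\frac{t\mu + t^{2}}{a^{2}}}^{t/2}.
\]
Substituting the parameters, $t\mu/a^{2} \leq t\ell^{2}/(m|I|)$ and $t^{2}/a^{2} \leq t^{2}\ell^{2}/|I|^{2}$, and both are $O(1/\log^{6}(1/\epsilon))$ whenever $|I|\geq k$. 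Hence the tail bound is at most $\log(1/\epsilon)^{-\Omega(\log(1/\epsilon))}$, which for any large enough constant $c$ is comfortably smaller than $\epsilon/m$. The bound only improves for $|I|>k$, so we are done with a single fixed $I$.

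The main technical obstacle is verifying that the pre-hash composition $h = g\circ f$ retains the spreading property, since under $f$ the set $I$ is sent to a multi-set in $[M]$ with some multiplicities $\{c_{i'}\}_{i'\in f(I)}$, turning the bin load into the weighted sum $\sum_{i'\in f(I)} c_{i'}\mathbf{1}[g(i')=j]$. A simple second-moment calculation using pairwise independence of $f$ (and the fact that $M$ is a sufficiently large polynomial in $1/\epsilon$) shows that with probability $1-\epsilon/2$ every multiplicity $c_{i'}$ is $O(\max(1,|I|/M))$, so the weighted concentration reduces to the unweighted Bellare--Rompel analysis above, up to constant factors absorbed into $c$ and $C$. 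I expect this composition bookkeeping to be the fiddliest piece; all of the genuine content is in the moment calculation for $X_j$.
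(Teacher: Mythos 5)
Your core concentration calculation is sound, and it is morally the same calculation as the paper's (high moments of bin loads followed by Markov), but your seed-length accounting has a genuine gap. A $t$-wise independent hash $g:[M]\to[m]$ with $t=\Theta(\log(1/\epsilon))$ and $M=\poly(1/\epsilon)$ costs $\Theta(t\log M)=\Theta(\log^2(1/\epsilon))$ bits, not $\tilde O(\log(1/\epsilon))$, so the pre-hash composition gives total seed length $\Theta(\log n + \log^2(1/\epsilon))$. That is not $O(\log(n/\epsilon))$: for polynomially small $\epsilon$ it is $\Theta(\log^2 n)$, which would blow up the seed length of Proposition~\ref{genLargeProp} (and hence Theorem~\ref{th:mainintrohs}), where this lemma is invoked precisely to shave the $\log^2(1/\epsilon)$ term. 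Exact $\Omega(\log(1/\epsilon))$-wise independence is simply too expensive for the seed budget here, and that is exactly the obstruction the lemma is designed to get around.

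The paper sidesteps this by using a $\delta$-biased hash family $\hh=\{h:[n]\to[m]\}$ with $\delta=\exp(-C\log(1/\epsilon))=\epsilon^{\Theta(1)}$, which has seed length $O(\log(n/\delta))=O(\log(n/\epsilon))$ by Fact~\ref{HashFamilyFact}, and whose $p$-th moments match the truly random ones up to an additive $\poly(m,\|v\|_2)^p\,\delta$ term. Concretely, it applies Lemma~\ref{hashmomentsLem} to the potential $h(v)=\sum_j\|v_{|h^{-1}(j)}\|_2^4$ with $v$ the indicator of $I$ and $p=2\log(1/\epsilon)/\log\log(1/\epsilon)$, observing that an overloaded bin forces $h(v)\geq |I|^2/\log^2(1/\epsilon)$; a single Markov bound on $h(v)^p$ then handles all $m$ bins at once (no explicit per-bin union bound, though that is only a stylistic difference from your per-bin $X_j$ analysis). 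You could salvage your argument along the same lines by replacing the $t$-wise family with a $\delta$-biased one and carrying the $\delta$-error term through the Bellare--Rompel-style moment bound, at which point you would also be freed from the multiplicity bookkeeping of the two-stage pre-hash, which as you note is delicate (your claim that every multiplicity is $O(\max(1,|I|/M))$ with probability $1-\epsilon/2$ from pairwise independence alone does not hold uniformly over all $|I|\geq\log^5(1/\epsilon)$ without further casework).
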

\begin{proof}
Let $m=\Theta(\log^5(1/\epsilon))$ and let $\hh = \{h:[n]\rightarrow[m]\}$ be a $\delta$-biased family for $\delta=\exp(-C(\log(1/\epsilon)))$ for $C$ a sufficiently large constant. We argue that $\hh$ satisfies the conditions of the lemma by standard moment bounds.

Let $p=2\log(1/\epsilon)/\log\log(1/\epsilon))$. Let $|I|>\log^5(1/\epsilon)$ and let $v\in\{0,1\}^n$ be the indicator vector of the set $I$. Note that if some $h$ has $|h^{-1}(j)\cap I| > |I|/\log(1/\epsilon)$ for some $j$, then $h(v) \geq |I|^2/\log^2(1/\epsilon)$ (recall the definition of $h(v)$ from Equation \ref{eq:hv}). Therefore, by Lemma \ref{hashmomentsLem} and Markov's inequality, the probability that this happens is at most
\begin{align*}
\frac{\E[h(v)^p]\log^{2p}(1/\epsilon)}{|I|^{2p}} & \leq O\left(\frac{p^2\log^2(1/\epsilon)}{m}+p^2\log^2(1/\epsilon)|I|^{-1} \right)^p + m^p \log^{2p}(1/\epsilon)\delta\\
& = O(\log(1/\epsilon))^{-p} + O(\log(1/\epsilon))^{5p} \delta\\
& \leq \epsilon.
\end{align*}
\end{proof}

\subsubsection{The $\prg$}
We begin with a simpler version of our generator which has the desired pseudorandomness property but has too large a seed. We will then improve the seed-length using PRGs for small-space machines.

Let $\hh = \{h:[n] \to [m]\}$ be a $(k,C\log(1/\epsilon),\epsilon)$-spreading family for parameters $k,C,\epsilon$ to be chosen later. Let $\gcs:\zo^r \to \dpm^n$ be a generator as in Theorem \ref{simpleGeneratorThm} with error $1/4$. Now, define the generator $\glarge:\hh \times \left(\zo^r\right)^m \to \dpm^n$ as follows: for $i \in [n]$,
\begin{equation}
  \label{GenLargeEqn}
  \glarge(h,z_1,z_2,\ldots,z_m)_i = \gcs(z_{h(i)})_i.
\end{equation}

We claim that the above generator fools tests of the form $\phi_{v,\alpha}(\;)$ for $\|v\|_0\geq k$ and $\alpha \gg \sqrt{m}/\|v\|_2$.

\begin{lem}\label{LargeGen1Lem}
Let $C$ be a sufficiently large constant. Let $\hh = \{h:[n] \to [m]\}$ for some $m\geq \log(1/\epsilon)$ be a $(k,\ell,\epsilon/4)$-spreading family with $\ell=C\log(1/\epsilon)$. Let $\gcs$ be a generator as in Theorem \ref{simpleGeneratorThm} with error $1/4$. Let $Y \in \dpm^n$ be the output of the generator $\glarge$ as defined in Equation \eqref{GenLargeEqn} on a uniformly random seed and $X \in_u \dpm^n$. Then, for all $v \in \zpm^n$ with $\|v\|_0 \geq k$, and $ C\sqrt{m}/\|v\|_2 \leq \alpha \leq 1/4$,
$$\left|\E[\phi_{v,\alpha}(Y)] - \E[\phi_{v,\alpha}(X)]\right| \leq \epsilon.$$
\end{lem}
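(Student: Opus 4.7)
The approach is to bound $|\E[\phi_{v,\alpha}(X)]|$ and $|\E[\phi_{v,\alpha}(Y)]|$ separately, each by $\epsilon/2$, and then invoke the triangle inequality. The intuition is that once $\alpha$ is in the ``large'' regime $\alpha \geq C\sqrt{m}/\|v\|_2$, the Fourier coefficient under uniform $X$ is already polynomially small in $\epsilon$, and we will show that the construction $\glarge$ also produces a polynomially small Fourier coefficient.

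For the uniform side, since $v \in \zpm^n$ we have $\E[\phi_{v,\alpha}(X)] = \cos(2\pi\alpha)^{\|v\|_0}$. Using $\cos(2\pi\alpha) \leq 1 - c\alpha^2$ for $|\alpha| \leq 1/4$ (for some absolute constant $c>0$), combined with the hypothesis $\alpha^2 \|v\|_0 \geq C^2 m \geq C^2 \log(1/\epsilon)$, yields $|\E[\phi_{v,\alpha}(X)]| \leq \exp(-cC^2\log(1/\epsilon)) \leq \epsilon/2$ for $C$ sufficiently large.

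For the generator side, we first condition on $h$ and use the independence of $z_1,\ldots,z_m$ to write $\E[\phi_{v,\alpha}(Y) \mid h] = \prod_{j=1}^m T_j$, where $T_j = \E_{z_j}[\exp(2\pi i\alpha \sum_{i \in S_j} v_i \gcs(z_j)_i)]$ and $S_j = h^{-1}(j) \cap \supp(v)$. The analogous uniform quantity is $U_j = \cos(2\pi\alpha)^{|S_j|}$. Since $\gcs$ fools the signed sum $\sum_{i \in S_j} v_i X_i$ in total variation by $1/4$, and every test of the form $\exp(2\pi i\alpha\,\cdot)$ is bounded by $1$ in modulus, we obtain $|T_j - U_j| \leq 1/2$, hence $|T_j| \leq |U_j| + 1/2$.

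Now apply the spreading hypothesis to $I = \supp(v)$ (whose size is $\geq k$ by assumption): with probability at least $1-\epsilon/4$ over $h$, every $|S_j| \leq \|v\|_0/\ell$. Since $\sum_j |S_j| = \|v\|_0$ and each $|S_j|$ is at most $\|v\|_0/\ell$, a simple pigeonhole argument forces at least $\ell/2$ ``heavy'' bins with $|S_j| \geq \|v\|_0/(2m)$. For each heavy bin, $|U_j| \leq \exp(-c\alpha^2\|v\|_0/(2m)) \leq \exp(-cC^2/2) \leq 1/16$ for $C$ sufficiently large, so $|T_j| \leq 9/16 < 3/4$. Bounding the remaining factors trivially by $1$ gives $\bigl|\prod_j T_j\bigr| \leq (3/4)^{\ell/2} = (3/4)^{C\log(1/\epsilon)/2} \leq \epsilon/4$ for $C$ large enough. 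Averaging over $h$, and using the trivial bound $1$ on the $\epsilon/4$-probability failure event, yields $|\E[\phi_{v,\alpha}(Y)]| \leq \epsilon/4 + \epsilon/4 = \epsilon/2$. The main delicate step, and where the parameter $\ell = C\log(1/\epsilon)$ is used quantitatively, is this pigeonhole lower bound on the number of heavy bins; it ensures that the final contraction factor $(3/4)^{\ell/2}$ is polynomially small in $\epsilon$, which (together with the trivial error from the spreading failure event) is what lets us close the argument.
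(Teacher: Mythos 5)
Your proof is correct and follows essentially the same line of reasoning as the paper's: decompose into an upper bound on the uniform Fourier coefficient (exponentially small once $\alpha^2\|v\|_0 \geq C^2 m$), then show the generator's Fourier coefficient is also small by using independence across bins, the $1/4$-TV guarantee of $\gcs$ to get $|T_j - U_j| \leq 1/2$, and the spreading/pigeonhole argument to locate $\geq \ell/2$ heavy bins each contributing a factor $\leq 3/4$. The bookkeeping of the $\epsilon$-budget differs slightly ($\epsilon/2 + \epsilon/2$ versus the paper's $\epsilon/4 + \epsilon/4 + \epsilon/2$), but the substance is identical.
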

\begin{proof}
Fix the test vector $v \in \zpm^n$. Let $I = Supp(v)$ and let $|I| = K \geq k$. Let $Y = \glarge(h,z_1,z_2,\ldots,z_m)$ and for $j \in [m]$, let $Y^j = \gcs(z_j)$ and let $X^j \in_u \dpm^n$ be independent uniformly random strings. Suppose that the hash function $h \in_u \hh$ is such that the condition of $(k,\ell,\epsilon/4)$-spreading holds for $I$. This assumption only incurs an additive $\epsilon/2$ in the error.

First note that,
$$\E[\phi_{v,\alpha}(X)] = (\cos 2\pi \alpha)^K \leq \exp(-\Omega(\alpha^2 K)) \leq \exp(-Cm)\leq \epsilon/4.$$
Thus, we need only show that
$$\E[\phi_{v,\alpha}(Y)]\leq \epsilon/4.$$

Now, for $j \in [m]$ let $v^j = v_{h^{-1}(j)}$ and $K_j = |I \cap h^{-1}(j)|$. Observe that by definition, $\dtv(v^j\cdot Y^j, v^j \cdot X^j) \leq 1/4$. Therefore, $$\left|\E[\phi_{v^j,\alpha}(Y^j)] - \E[\phi_{v^j,\alpha}(X^j)]\right| \leq 1/2.$$
Further,
$$\E[\phi_{v^j,\alpha}(X^j)] = (\cos 2\pi\alpha)^{K_j} = \exp(-\Omega(\alpha^2 K_j)).$$
Combining the above two equations, we get
$$\left|\E[\phi_{v,\alpha}(Y)]\right| = \left|\prod_{i=1}^m \E[\phi_{v^j,\alpha}(Y^j)]\right| \leq \prod_{i=1}^m \min\left(\left(\frac{1}{2} + \exp(-\Omega(\alpha^2 K_j))\right),1\right).$$

Now, because $h$ has the well-spreading property, $K_j = |h^{-1}(j) \cap I| \leq |I| /\ell$ for all $j \in [m]$. On the other hand, $\sum_j K_j = K$. Since the sum of the $K_j$ which are at most $K/(2m)$ totals at most $K/2$ and since none of the other $K_j$ are too large, there must be at least $\ell/2$ values of $j$ so that $K_j \geq K/(2m)$. For these $j$ we have that
$$
\frac{1}{2} + \exp(-\Omega(\alpha^2 K_j)) \leq \frac{1}{2} + \exp(-\Omega((C^2m/K) (K/2m))) = \frac{1}{2} + \exp(-\Omega(C)) \leq \frac{3}{4}
$$
for $C$ sufficiently large. Thus, for $C$ sufficiently large
$$
\left|\E[\phi_{v,\alpha}(Y)]\right| \leq \left( \frac{3}{4}\right)^{\ell/2} \leq \epsilon/4.
$$
This completes the proof.

\end{proof}

We are now ready to prove Proposition \ref{genLargeProp}.
\begin{proof}[Proof of Proposition \ref{genLargeProp}]
Let $C$ be a sufficiently large constant, $m=C\log^5(1/\epsilon)$, let $\hh=\{h:[n]\rightarrow[m]\}$ be a $(k,\ell,\epsilon/4)$-spreading family with $k\leq C\log^5(1/\epsilon)$, and $\ell=C\log(1/\epsilon)$ as given in Lemma \ref{SpreadingHashLem}. Note that if $1/4 \geq \alpha \geq \log^3(1/\epsilon)/\|v\|_2$ for some $\alpha$, it must be the case that $\|v\|_0\geq \log^6(1/\epsilon)\geq k$. Therefore, Lemma \ref{LargeGen1Lem} provides us with a generator, $\glarge$, so that for any such $\alpha$ that if $Y$ is an output of $\glarge$ and $X$ a uniform random element of $\{\pm 1\}^n$ and if $\nm{v}_0\geq k$, then
$$
\left| \E[\phi_{v,\alpha}(Y)]-\E[\phi_{v,\alpha}(X)]\right| \leq \epsilon/2.
$$
Unfortunately, the seed-length of $\glarge$ is $\log(\hh) + O(\log n)\cdot m$. We improve this using the PRGs for ROBPs of Theorem \ref{th:inwprg}. It is easy to see that for a fixed hash function $h$ and test vector $v$, the computation of $\dotp{v}{\glarge(h,z_1,\ldots,z_m)}$ can be done by a $(S,D,m)$-ROBP where $S = O(\log n)$ and $D = O(\log n)$. Thus, we can further derandomize the choice of $z_1,\ldots,z_m$ using the PRG from Theorem \ref{th:inwprg}. Formally, let $G^{INW}:\zo^r \to \left(\zo^D\right)^m$ be a generator fooling $(S,D,m)$-ROBPs as in Theorem \ref{th:inwprg} with error $\epsilon/4$ and define
$$G^f(h,z) = \glarge(h,G^{INW}(z)).$$

Then, from the above arguments it follows that $G^f$ fools $\phi_{v,\alpha}$ with error at most $\epsilon$ and has seed-length $O(\log (n/\epsilon) \cdot(\log\log(n/\epsilon)))$ proving the claim.
\end{proof}

\subsection{Generator for small $\alpha$}
We next argue that the generator defined in Equation \ref{LargeAlphaEqn} fools Fourier coefficients $\phi_{v,\alpha}$ for sufficiently small $\alpha$.
\ignore{\subsubsection{The $\prg$}\label{sec:smallalphagenerator}
Let $n,\delta>0$. Let $C$ be a sufficiently large constant. We define a generator as follows. Let $n=n_1>n_2>\ldots > n_t$ so that $n_{i+1}=n_i^{1/2}+O(1)$ and $\log^{2C}(n/\delta)\geq n_t \geq \log^C(n/\delta)$. Note that this implies that $t=O(\log\log(n))$. For $1\leq i < t$ let $h_i:[n_i]\to [n_{i+1}]$ be a random hash function chosen from a $\frac{C\log(n/\delta)}{\log(n_i)}$-wise independent family. Let $Z_i$ be a random element of $\dpm^{n_i}$ chosen from a distribution that is both $(\delta/n)^C$-biased and and $\frac{C\log(n/\delta)}{\log(n_i)}$-wise independent. Let $Z$ be a random variable in $\dpm^{n_t}$ be chosen to fool weight at most $n$ halfspaces to variational distance $\delta/n$ as described in Theorem \ref{simpleGeneratorThm}. Finally, we define our random variable $Y\in\dpm^n$ to be
\begin{equation}\label{LargeAlphaEqn}
Y = Z A(h_{t-1})D(Z_{t-1})A(h_{t-2})D(Z_{t-2})\cdots A(h_1)D(Z_1).
\end{equation}

Informally, this generator begins with the generator $X_1$, then uses $h_1$ to divide the coordinates into $n_2$ bins and then for each bin multiplies the elements in this bin by a random sign, these $n_2$ signs being chosen recursively by a similar generator, until at the final level they are picked using the generator from Theorem \ref{simpleGeneratorThm} instead.

It is easy to see from Theorem \ref{simpleGeneratorThm} and Fact \ref{HashFamilyFact} that the random variable $Y$ can be produced from a random seed of length $s=O(\log(n/\delta)\log\log(n/\delta))$. We also claim that it fools $\phi_{v,\alpha}$ for $|\alpha| \leq \log^3(1/\delta)/\|v\|_2$.}
The main claim of this section is the following.
\begin{prop}\label{SmallAlphaProp}
Let $v\in\zpm^n$ and $\alpha\in\R$ with $|\alpha| \leq \log^3(1/\delta)/\|v\|_2$. Let $C$ be a sufficiently large constant and let $\delta>0$. Let $Y$ be as defined by Equation \eqref{LargeAlphaEqn} and let $X\in_u \dpm^n$. Then
$$
\left|\E[\phi_{v,\alpha}(Y)]-\E[\phi_{v,\alpha}(X)] \right| \leq \delta.
$$
\end{prop}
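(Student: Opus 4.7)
The plan is to perform a hybrid argument across the $t=O(\log\log n)$ levels of the recursive generator, using the random variables $Y_i$ defined in Equation \eqref{YiEqn}. Since $Y_1=X_1$ is uniform on $\dpm^n$, we have $\E[\phi_{v,\alpha}(Y_1)] = \E[\phi_{v,\alpha}(X)]$; and $Y_t$ agrees with $Y$ except that the innermost sign string $X_t$ is replaced by the PRG $Z$, which by Theorem \ref{simpleGeneratorThm} incurs at most $\delta/n$ error on the signed-majority test $\dotp{v_t}{\cdot}$ (where $v_t$ is the pushed-forward test vector at the deepest level). By the triangle inequality it therefore suffices to show, for each $1 \leq i < t$,
\[
\left| \E[\phi_{v,\alpha}(Y_i)] - \E[\phi_{v,\alpha}(Y_{i+1})] \right| \leq \frac{\delta}{2t}.
\]

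For a single per-step bound, condition on the randomness $h_1,\ldots,h_{i-1},Z_1,\ldots,Z_{i-1}$ of the deeper levels and let $w \in \R^{n_i}$ denote the pushed-forward test vector $A(h_{i-1})D(Z_{i-1})\cdots A(h_1)D(Z_1) v^T$. Then $\dotp{v}{Y_i}=\dotp{w}{X_i}$ and $\dotp{v}{Y_{i+1}}=\dotp{w}{X_{i+1}A(h_i)D(Z_i)}$. Averaging the latter over the uniform $X_{i+1}$ first, we obtain for each fixed $(h_i,Z_i)$
\[
\E_{X_{i+1}}\!\left[\phi_{w,\alpha}(X_{i+1}A(h_i)D(Z_i))\right] = \prod_{j=1}^{n_{i+1}} \cos\!\left(2\pi\alpha\,u_j(h_i,Z_i)\right), \qquad u_j := \sum_{\ell:\,h_i(\ell)=j} w_\ell (Z_i)_\ell,
\]
while $\E_{X_i}[\phi_{w,\alpha}(X_i)]=\prod_\ell \cos(2\pi\alpha w_\ell)$. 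Thus the per-step error becomes an expectation, over the joint choice of $(h_i,Z_i)$, of a product of cosines, and the heart of the proof is to show that this expectation nearly matches the product $\prod_\ell \cos(2\pi\alpha w_\ell)$.

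The strategy for this core step is to approximate the product of cosines, viewed as a function of $Z_i$ with $h_i$ fixed, by its Taylor expansion truncated to degree $k_i = \Theta(C\log(n/\delta)/\log(n_i))$. Because $Z_i$ is $k_i$-wise independent, its expectation of the truncated polynomial agrees exactly with the uniform expectation (likewise for $h_i$, which enters polynomially as a sum over buckets); and the $(\delta/n)^C$-bias of $Z_i$ absorbs any contribution from monomials whose coefficients are $\poly(n)$-bounded, which is automatic here. The Taylor remainder is controlled by $\sum_j |\alpha u_j|^{k_i+2}$, whose expectation under uniform bits is governed via hypercontractivity (Lemmas \ref{hcLem}, \ref{hcBiasedLem}) by $\nmt{w}$ and $\nm{w}_4$, together with the bucket-weight quantity $h_i(w)$ from Lemma \ref{hashmomentsLem}. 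Using the hypothesis $\alpha\leq \log^3(1/\delta)/\nmt{v}$ and an inductive control $\nmt{w}\leq 2\nmt{v}$, $\nm{w}_4^4 \lesssim \nmt{v}^4/n_i$ (proved from Lemmas \ref{hashmomentsLem} and \ref{hcBiasedLem} along the hybrid), this remainder is comfortably $o(\delta/(nt))$.

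The main obstacle will be the quantitative analysis of the Taylor-truncation error in the product of cosines: one must carefully track how $|u_j|$ depends on the weight distribution of $w$ within each bucket, combine it with a moment bound for the pseudorandom $Z_i$ at a degree matching the limited independence $k_i$, and verify that these estimates are robust enough to survive the entire hybrid chain of $O(\log\log n)$ steps. In parallel, one must verify the norm-tracking invariants for $\nmt{\,\cdot\,}$ and $\nm{\,\cdot\,}_4$ along the recursion; these are proved by a concentration argument at each level using Lemma \ref{hashmomentsLem} for the $\ell_4$ norm and hypercontractivity for the $\ell_2$ norm, with failure probability set to $\delta/(10nt)$ so that a union bound over all $t$ levels contributes only $O(\delta)$ to the overall error.
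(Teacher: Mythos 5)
Your overall structure matches the paper's: a hybrid argument over the $t=O(\log\log n)$ levels using the $Y_i$, with each per-step difference controlled by (i) reducing to a product of cosines after averaging over the uniform outer bits, (ii) a low-degree Taylor approximation of the cosine product that is matched exactly by the bounded independence of $Z_i$ and $h_i$, and (iii) moment/norm invariants for $\nmt{\cdot}$ and $\nm{\cdot}_4$ propagated along the recursion. This is exactly the paper's Lemma \ref{SmallAlphaLem} via Lemmas \ref{denseVLem}, \ref{cosApproxLem}, \ref{momentControlLem}.

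However, there is a genuine gap: you assert the inductive invariant $\nm{w}_4^4 \lesssim \nmt{v}^4/n_i$, and the whole per-step error estimate rests on it, but this invariant is false when $v$ is sparse relative to the ambient dimension. If $v\in\zpm^n$ has support size $K$, then $\nmt{v}^2=K$ and $\nm{v}_4^4=K$, so at the first level $\nm{v}_4^4 = \nmt{v}^4 / K$, not $\nmt{v}^4/n_1 = K^2/n$. When $K \ll n_i$ (e.g. $K=\polylog(n)$, which is permitted by the hypothesis $|\alpha|\leq\log^3(1/\delta)/\nmt{v}$), the hash $h_i$ typically sends the support to distinct buckets, so $\nm{v_{i+1}}_4^4$ stays about $K$ rather than dropping to $\nmt{v}^4/n_{i+1}$. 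Plugging $\nm{w}_4^4\approx K$ and $\alpha\approx\log^3(1/\delta)/\sqrt{K}$ into the error bound $O(p)^{2p}(\alpha^4\nm{w}_4^4)^{p/8}$ from Lemma \ref{denseVLem} gives only $(\log^{12}(1/\delta)/K)^{p/8}$, which is far from $\delta/n$ when $K$ is polylogarithmic. The paper handles this by splitting each step into two regimes: when $\nm{v}_0^3\leq n_{i+1}$ (sparse case) it abandons the Taylor/cosine argument entirely and instead uses the collision Lemma \ref{collisionLem} to show that, with high probability, $h_i$ induces few collisions in the support, so $w\cdot(X_{i+1}A(h_i)D(Z_i))$ is statistically close to $w\cdot X_i$ by a direct distributional argument (Lemma \ref{sparseVLem}); only when $\nm{v}_0^3>n_{i+1}$ does the analytic argument and the $\ell_4$ invariant kick in. Your proposal does not address the sparse case at all, and the $\ell_4$ invariant you rely on cannot be established there; this case split is essential and missing.
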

As described in the overview, we will prove the claim by a hybrid argument. For ease of notation, we repeat some notation from the overview section.
For $1\leq i <t$, letting $X_i$ be a uniform random element of $\dpm^{n_i}$ we define
\begin{equation}\label{YiEqn}
Y_i := X_iA(h_{i-1})D(Z_{i-1})\cdots A(h_1)D(Z_1)
\end{equation}
(note that $Y_1=X_1$) and let $Y_t=Y$. Our Proposition will follow from the following Lemma.
\begin{lem}\label{SmallAlphaLem}
With $Y_i$ defined as above for $C$ sufficiently large and $v\in\zpm^n$ and $\alpha\in\R$ with $|\alpha| \leq \log^3(1/\delta)/\|v\|_2$, then for $t> i \geq1$
$$
\left| \E[\phi_{v,\alpha}(Y_{i+1})] -  \E[\phi_{v,\alpha}(Y_{i})]\right| \leq \delta/n.
$$
\end{lem}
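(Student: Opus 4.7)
The plan is to reduce the claim to the analysis of a single dimension-reduction step by conditioning on all preceding randomness. Fix $h_1,\ldots,h_{i-1}$ and $Z_1,\ldots,Z_{i-1}$, and let $u \in \R^{n_i}$ be the resulting reduced vector, so that the $k$-th entry of $u$ is a signed sum of $v_j$ over $j$'s mapping to $k$ under $h_{i-1}\circ\cdots\circ h_1$ with signs from $Z_1,\ldots,Z_{i-1}$. Then $\dotp{v}{Y_i}=\dotp{u}{X_i}$ and $\dotp{v}{Y_{i+1}}=\sum_k u_k (Z_i)_k (X_{i+1})_{h_i(k)}$, so it suffices to show that, conditional on a typical outcome for the preceding randomness,
\[\left|\E_{X_i}[\phi_{u,\alpha}(X_i)]-\E_{X_{i+1},h_i,Z_i}[\phi_{u,\alpha}(X_{i+1}A(h_i)D(Z_i))]\right| \leq \delta/(2n).\]

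Performing the expectation over the uniform $X_{i+1}$ on the right-hand side yields $\E_{h_i,Z_i}\bkets{\prod_j \cos(2\pi\alpha b_j)}$ with $b_j=\sum_{k:h_i(k)=j}u_k(Z_i)_k$, while the left-hand side equals $\prod_k \cos(2\pi\alpha u_k)$. A short character-expansion argument shows that if $h_i,Z_i$ were truly uniform, the right-hand side would equal $\prod_k \cos(2\pi\alpha u_k)$ exactly, so the task becomes quantifying the error introduced by the pseudorandomness of $h_i$ and $Z_i$. I would approximate each $\cos(2\pi\alpha b_j)$ by its degree-$2P$ Taylor polynomial $T_P(b_j)$ for a suitable $P$, bounding the truncation error per factor using hypercontractivity (Lemma \ref{hcBiasedLem}) in terms of $\nmt{u_{|h_i^{-1}(j)}}$ and $\nm{u_{|h_i^{-1}(j)}}_4$, and multiplying out over the $n_{i+1}$ buckets. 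Only those monomials in $Z_i$ whose variables all have even multiplicity survive under the truly uniform distribution, and using this even-multiplicity structure they involve relatively few distinct variables; hence the $\frac{C\log(n/\delta)}{\log(n_i)}$-wise independence of $Z_i$, combined with its $(\delta/n)^C$-biasedness to dominate the remaining higher-degree contributions, suffices to match the truly-uniform expectation. An analogous argument using $k$-wise independence of $h_i$ handles the hash randomness.

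The main obstacle is propagating $\ell_2$ and $\ell_4$ norm bounds on the reduced vector $u$ across all $i-1$ preceding stages, since the single-step argument above is tight only when $u$ is ``sufficiently well-spread''. Using Lemmas \ref{hashmomentsLem} and \ref{lm:hashing} inductively, I would argue that with high probability $\nmt{u}$ remains comparable to $\nmt{v}$ and $\nm{u}_4^4$ scales roughly as $\nmt{v}^4/n_i$ at level $i$; over $t=O(\log\log n)$ stages this produces at most a polylogarithmic distortion, which combined with the hypothesis $|\alpha|\leq \log^3(1/\delta)/\nmt{v}$ keeps the per-step error below $\delta/(2n)$. For the boundary step $i=t-1$ the variable $X_t$ is replaced by the generator of Theorem \ref{simpleGeneratorThm}, incurring an additional $\delta/n$ error via the halfspace-fooling guarantee applied to the induced weight-bounded test on $u$.
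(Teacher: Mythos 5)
Your reduction to a single dimension-reduction step by conditioning on the preceding randomness, and your treatment of the $i=t-1$ boundary step via the guarantee of Theorem~\ref{simpleGeneratorThm}, match the paper. The product-of-cosines viewpoint after averaging over the uniform $X_{i+1}$ is also the right starting point. However, there are two genuine gaps.

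First, your proposal has no separate argument for sparse test vectors. The paper's proof of Lemma~\ref{SmallAlphaLem} splits into the case $\|v\|_0^3 \leq n_{i+1}$ (handled by Lemma~\ref{sparseVLem} via a collision-counting argument showing $w\cdot X_{i+1}A(h_i)D(Z_i)$ is statistically close to $w\cdot X_i$ because almost no support coordinates collide) and the complementary dense case. This split is essential: the cosine/hypercontractivity route gives an error governed by $\alpha^4\|v_i\|_4^4$, and for a $\pm1$ vector of small support $k$ with $\alpha\approx \log^3(1/\delta)/\sqrt{k}$ this quantity is $\approx\log^{12}(1/\delta)/k$, which is not small unless $k$ is large. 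You flag that the single-step bound is tight only for well-spread vectors, but you offer no mechanism to handle test vectors that are never well-spread; the paper's sparse-case lemma is precisely that missing mechanism.

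Second, the plan to ``approximate each $\cos(2\pi\alpha b_j)$ by its degree-$2P$ Taylor polynomial and multiply out over the $n_{i+1}$ buckets'' does not control the degree or the error. Since each $b_j$ involves a disjoint block of $Z_i$-coordinates, the product of $m=n_{i+1}$ per-factor approximants has degree $2Pm$ in the coordinates of $Z_i$, which is far beyond the $O(\log(n/\delta)/\log n_i)$-wise independence available. Moreover, per-factor additive errors only combine cleanly when every factor is uniformly bounded by $1$, which a raw Taylor truncation does not guarantee, and in the worst case the errors accumulate multiplicatively over $m$ factors. The paper avoids both issues in Lemma~\ref{cosApproxLem} by expanding $\log\bigl(\prod_j \cos(2\pi\alpha S_j)\bigr)$ as a sum over buckets, grouping terms into the global quantities $\alpha^2(\sum_j S_j^2 - T)$ and $\sum_j(\alpha S_j)^{2k}$, and then truncating the exponential; this yields a \emph{single} polynomial $P(S_1,\ldots,S_m)$ of total degree at most $p$ (hence degree $2p$ in $Z_i$), which is exactly what $2p$-wise independence can match. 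Without this global grouping, the degree/error control in your sketch breaks down.

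Finally, your description of the norm propagation (``$\|u\|_4^4$ scales roughly as $\|v\|_2^4/n_i$'') is only heuristic; the paper's Lemma~\ref{momentControlLem} gives the precise bounds $\|v_i\|_2 \leq 2^i\|v\|_2$ and $\|v_i\|_4 \leq \|v\|_2/\min(\|v\|_2^{1/3}, n_i^{1/20})$, whose inductive proof itself requires a sparse-case base (again via collision counting). This is consistent with the sparse/dense dichotomy being structurally unavoidable rather than an artifact of the paper's presentation.
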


The proof of the Lemma \ref{SmallAlphaLem} will be further split into two main cases based upon whether or not the vector $v$ is sparse relative to $n_i$. Intuitively, the case of sparse $v$ is easier as hashing takes care of most issues here.
\subsubsection{Analysis for sparse vectors}
We begin with the case where $v$ is sparse.
\begin{lem}\label{sparseVLem}
With $Y_i,C,n,v,\alpha,\delta$ as in Lemma \ref{SmallAlphaLem} with $i<t$, if $\|v\|_0^3 < n_{i+1}$ then
$$
\left| \E[\phi_{v,\alpha}(Y_{i+1})] -  \E[\phi_{v,\alpha}(Y_{i})]\right| \leq \delta/n.
$$
\end{lem}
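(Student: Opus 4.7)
My approach is to reduce the lemma to the event that the hash $h_i$ is injective on the support of the ``effective test vector'' at level $i$: when this event holds, evenness of cosine forces the two expectations to coincide exactly, and otherwise the trivial bound $|\cos|\leq 1$ suffices. Sparsity of $v$ will make the failure event very unlikely.

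First, condition on the randomness $h_1,\ldots,h_{i-1},Z_1,\ldots,Z_{i-1}$ of the earlier levels and set $M_i := A(h_{i-1})D(Z_{i-1})\cdots A(h_1)D(Z_1)$ and $u := M_i v^T \in \R^{n_i}$. A direct computation gives $\iprod{v}{Y_i} = \iprod{u}{X_i}$ and $\iprod{v}{Y_{i+1}} = \iprod{u'}{X_{i+1}}$ with $u' := A(h_i)D(Z_i)u$. Since each $D(Z_j)$ preserves supports and each $A(h_j)$ maps supports into their hash images (cancellations only shrink them), a one-line induction yields $|\supp(u)| \leq \|v\|_0 < n_{i+1}^{1/3}$. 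Taking the inner expectations over the uniform $X_i$ and $X_{i+1}$ gives
\[
\E_{X_i}[\phi_{v,\alpha}(Y_i)] = \prod_k \cos(2\pi\alpha u_k) =: f_1, \qquad \E_{X_{i+1}}[\phi_{v,\alpha}(Y_{i+1})] = \prod_{j\in[n_{i+1}]} \cos(2\pi\alpha u'_j),
\]
where $u'_j = \sum_{k:h_i(k)=j} Z_i(k)u_k$. The crucial observation is that if $h_i$ is injective on $I := \supp(u)$, then each $u'_j$ is either $0$ or of the form $\pm u_k$ for a unique $k\in I$, and evenness of $\cos$ collapses the product to $\prod_{k\in I}\cos(2\pi\alpha u_k) = f_1$, independently of $Z_i$. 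Hence
\[
\bigl|\E[\phi_{v,\alpha}(Y_{i+1})\mid\text{earlier randomness}] - f_1\bigr| \leq 2\,\Pr_{h_i}[h_i\text{ collides on }I],
\]
and it suffices to bound this collision probability by $\delta/(2n)$ uniformly in the conditioning.

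Next, let $N := \sum_{k<\ell\in I}\mathbb{1}[h_i(k)=h_i(\ell)]$ and exploit the $K$-wise independence of $h_i$ with $K = C\log(n/\delta)/\log n_i$. Expanding $\E[N^p]$ as a sum over ordered $p$-tuples of pairs from $I$ and grouping by the connected-component type of the resulting ``pair graph'' (which has at most $2p\leq K$ vertices), the dominant contribution comes from the $p$-tuples of disjoint pairs, giving $\E[N^p] \leq (c|I|^2/n_{i+1})^p \leq (c\,n_{i+1}^{-1/3})^p$. Taking $p \approx 4\log(n/\delta)/\log n_{i+1}$ makes this at most $\delta/(2n)$ by Markov, and since $n_{i+1}\approx n_i^{1/2}$ the constraint $2p\leq K$ holds once $C$ is sufficiently large.

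The hardest part is this last moment bound: one has to verify that under $2p$-wise independence the disjoint-pair tuples truly dominate $\E[N^p]$, and that contributions from overlapping pair-tuples (fewer configurations but larger per-term collision probability, corresponding to pair graphs with components of size $\geq 3$) are strictly subdominant. This is a standard combinatorial check, but it is the only nontrivial step; the rest is essentially mechanical once the ``cosine is even'' reduction is spotted.
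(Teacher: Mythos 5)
Your reduction to the ``collision structure of $h_i$ on $\supp(u)$ determines everything'' and the observation about evenness of cosine are both on the right track, and the formula $\bigl|\E[\cdot]-f_1\bigr|\le 2\Pr[\text{bad event}]$ is the right template. However, there is a genuine gap in the choice of bad event and in the moment bound used to control it.

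You take the bad event to be ``$h_i$ has at least one collision on $I=\supp(u)$'' and claim its probability is at most $\delta/(2n)$ via $\E[N^p]\le (c|I|^2/n_{i+1})^p$ with $N$ the number of colliding pairs. That moment bound cannot be correct: $N$ is a nonnegative integer, so $N^p\ge N$ pointwise for $p\ge 1$, hence $\E[N^p]\ge \E[N] = \binom{|I|}{2}/n_{i+1}$, which is \emph{not} upper bounded by $(c|I|^2/n_{i+1})^p$ once $|I|^2/n_{i+1}<1$ and $p\ge 2$. (Combinatorially, the $p$-tuples where all $p$ pairs are the same pair contribute $\binom{|I|}{2}/n_{i+1}$ and are not subdominant.) More fundamentally, as soon as $|I|\ge 2$, any $2$-wise independent $h_i$ collides on a specific pair with probability exactly $1/n_{i+1}$, so $\Pr[\text{some collision}]\ge 1/n_{i+1}\ge 1/\log^{2C}(n/\delta)$, which is only polylogarithmically small. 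Since $\delta$ may be $1/\poly(n)$, you can never push the no-collision failure probability down to $\delta/(2n)$, regardless of how you compute moments. Requiring full injectivity is simply too strong.

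The paper's proof avoids this by invoking Lemma \ref{collisionLem}: with probability $1-\delta/(4n)$ there are at most $k=O(\log(n/\delta)/\log n_{i+1})\le\log(n/\delta)$ \emph{colliding coordinates} (not zero). The same ``signs only matter within a bin'' observation then shows the distribution of $w\cdot(X_{i+1}A(h_i)D(Z_i))$ depends only on $h_i$ and the restriction of $Z_i$ to the small colliding set $J$; and because $Z_i$ is $(\delta/n)^C$-biased, $Z_i|_J$ is within $2^{|J|}(\delta/n)^C<\delta/(4n)$ of uniform. The extra ingredient you are missing is precisely this use of the small-bias property of $Z_i$ to tolerate a small but nonzero number of collisions. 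Without it, the lemma cannot be proved along your lines.
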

\begin{proof}
We claim that this holds even after fixing the values of $h_j,Z_j$ for all $j<i$. In particular, if we let
$$
w =vD(X_1)A(h_1)^T\cdots D(h_{i-1})A(h_{i-1})^T
$$
then we need to show that
$$
\left|\E[\phi_{w,\alpha}(X_{i+1}A(h_i)D(Z_i))] - \E[\phi_{w,\alpha}(X_{i})] \right| \leq \delta/n.
$$
We will show the stronger claim that
$$
\dtv(w\cdot X_{i+1}A(h_i)D(Z_i),w\cdot X_i) \leq \delta/(2n).
$$
Intuitively, this will hold because $v$ (and hence $w$) is sparse. This means that with high probability $h_i$ will cause few collisions within the support of $w$. If this is the case, then $Z_i$ will nearly randomize the relative signs of elements mapped to the same bin and $X_i$ will randomize the signs between bins. To show that we have few collisions, we will need the following lemma:
\begin{lem}\label{collisionLem}
Let $n$ and $m$ be positive integers, $\epsilon>0$ and $C$ a sufficiently large constant. Let $\hh=\{h:[n]\to [m]\}$ be a $k$-wise independent family of hash functions for $k=\frac{C\log(m/\epsilon)}{\log(m)}$. Let $I\subset [n]$ be such that $|I|^3 \leq m$. Then for $h\in_u \hh$, with probability at least $1-\epsilon$ we have that
$$
|I|-|h(I)|\leq k.
$$
\end{lem}
\begin{proof}
Note that if $|I|-|h(I)|>k$ then at least $k$ elements of $I$ were sent to the same location as some other element of $I$. This implies that there must be at least $k/3$ disjoint pairs of elements $x_i,y_i\in I$ so that $h(x_i)=h(y_i)$ (for each element $j\in[m]$ so that $|h^{-1}(j)|=\ell>1$ we can find at least $\ell/3$ pairs). Thus, it suffices to show that the expected number of collections of distinct elements $x_1,y_1,x_2,y_2,\ldots,x_{k/3},y_{k/3}\in I$ so that $h(x_i)=h(y_i)$ for each $i$ is less than $\epsilon$. On the other hand, the number of sequences $x_i,y_i\in I$ is at most $|I|^{2k/3}$ and the probability that any given sequence has the desired property is $m^{-k/3}$ by $k$-wise independence of $h$. Thus the expected number of such sets of pairs is at most
$$
|I|^{2k/3}m^{-k/3} \leq m^{2k/9}m^{-k/3} = m^{-k/9} \leq \epsilon.
$$
This completes the proof.
\end{proof}

Applying this lemma to $I=\supp(w)$, we find that except with probability $\delta/(4n)$ we have that at most $\log(n/\delta)$ elements of $I$ collide with any other element of $I$ under $h_i$. Let $J$ be the set of such coordinates. It is clear that the distribution of $w\cdot (X_{i+1}A(h_i)D(Z_i))$ as we vary $X_{i+1}$ depends only on $h_i$ and the signs of the $Z_i$ on the coordinates of $J$. On the other hand, it is easy to see that the restriction of $Z_i$ to these coordinates is within $2^{|J|}(\delta/n)^C < \delta/(4n)$ of uniform. Thus,
$$
\delta/(2n) \geq \dtv(w\cdot (X_{i+1}A(h_i)D(Z_i)),w\cdot (X_{i+1}A(h_i)D(X_i))) = \dtv(w\cdot X_{i+1}A(h_i)D(Z_i),w\cdot X_i).
$$
Where the equality above is because $(X_{i+1}A(h_i)D(X_i))$ and $X_i$ are identically distributed.
This completes the proof.
\end{proof}

\subsubsection{Analysis for dense vectors}

For relatively dense vectors $v$, we will need a different, more analytic approach. The following crucial lemma analyzes the effect of a single dimension reduction step and bounds the error in terms of the norms of the test vector $v$. We will then apply the lemma iteratively.
\begin{lem}\label{denseVLem}
Let $\delta > 0$, $n,m \geq 1$, and $p\geq 2$ and even integer. Let $\calD$ be a $2p$-wise independent distribution over $\dpm^n$ and $\hh = \{h:[n] \to [m]\}$ be a $2p$-wise independent hash family. Then, for all $v \in \R^n$, $X \in_u \dpm^n$,$Y \sim \calD$, $h\in_u \hh$ and $Z \in_u \dpm^m$,
\begin{multline}
|\E[\phi_{v,\alpha}(Z \cdot A(h) \cdot D(Y))] - \E[\phi_{v,\alpha}(Z \cdot A(h) \cdot D(X))]| < O(p)^{2p}\left(\frac{\alpha^4\|v\|_2^4}{m} +\alpha^4\|v\|_4^4 \right)^{p/8}.
\end{multline}
\end{lem}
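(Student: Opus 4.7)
The plan is to condition on the hash $h$ and first integrate out the uniform $Z \in \dpm^m$. A direct computation using $\E_{Z_j}[\exp(2\pi i \alpha Z_j t)] = \cos(2\pi\alpha t)$ gives, for each fixed $h$,
\[
\E_Z[\phi_{v,\alpha}(Z \cdot A(h) \cdot D(Y))] = \prod_{j=1}^m \cos(2\pi\alpha W_j(Y)) =: Q_h(Y),
\]
where $W_j(Y) := \sum_{i:\,h(i)=j} v_i Y_i$ is linear in $Y$. It thus suffices to bound $|\E_\calD[Q_h(Y)] - \E_X[Q_h(X)]|$ for each fixed $h$ and then average over $h$.

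I would approximate $Q_h$ by a low-degree polynomial using the Taylor expansion of each cosine about $\alpha = 0$, truncated at total order $2K$ (with $K$ taken to be around $p/4$):
\[
P_h(Y) := \sum_{\vec k:\, \sum_j k_j \leq K}\; \prod_{j=1}^m \frac{(-1)^{k_j} (2\pi \alpha W_j(Y))^{2k_j}}{(2k_j)!}.
\]
Since each $W_j$ is linear in $Y$, $P_h$ is a multilinear polynomial of $Y$-degree at most $2K \leq 2p$, so the $2p$-wise independence of $\calD$ gives $\E_\calD[P_h] = \E_X[P_h]$ and hence
\[
|\E_\calD[Q_h] - \E_X[Q_h]| \leq \E_\calD[|Q_h - P_h|] + \E_X[|Q_h - P_h|].
\]
The Taylor tail admits a clean pointwise bound: using $(2k)! \geq 2^k k!$ and the multinomial theorem,
\[
|Q_h(Y) - P_h(Y)| \leq \sum_{N > K} \frac{\bigl(2\pi^2 \alpha^2 \sum_j W_j^2(Y)\bigr)^N}{N!},
\]
which for $K$ sufficiently large relative to $\alpha^2 \sum_j W_j^2$ is dominated by its $N = K+1$ term, yielding $|Q_h - P_h| \lesssim (C\alpha^2 \sum_j W_j^2(Y))^{K+1}/(K+1)!$.

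The final step is to take moments jointly over $Y$ and $h$. Decompose $\sum_j W_j^2 = \|v\|_2^2 + T(Y)$, where $T(Y) := 2 \sum_{i_1 < i_2,\, h(i_1) = h(i_2)} v_{i_1} v_{i_2} Y_{i_1} Y_{i_2}$ is a mean-zero degree-$2$ polynomial in $Y$ satisfying $\|T\|_2^2 \leq 2 h(v)$; hypercontractivity (Lemma~\ref{hcLem}) then gives $\E_Y[|T|^L] \leq O(L)^L h(v)^{L/2}$, and Lemma~\ref{hashmomentsLem} applied with parameter $L/2$ gives $\E_h[h(v)^{L/2}] \leq O(L)^L \bigl(\|v\|_2^4/m + \|v\|_4^4\bigr)^{L/2}$. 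Binomially expanding $(\|v\|_2^2 + T)^{K+1}$ and collecting terms produces the target after setting $K + 1 = p/4$. The main obstacle will be handling the $L = 0$ term in this expansion, which contributes the deterministic factor $\|v\|_2^{2(K+1)}$ not immediately of the target form: it must be re-expressed using $\|v\|_2^4 \leq m \cdot (\|v\|_2^4/m + \|v\|_4^4)$ and combined with the Stirling bound $1/(K+1)! \leq (eC/K)^{K+1}$ so as to be absorbed into the $O(p)^{2p}$ prefactor; delicately calibrating $K$ is then what ensures that the $\alpha^{2(K+1)}$ factor precisely matches the target's $\alpha^{p/2}$.
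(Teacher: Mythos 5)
Your overall template is the same as the paper's: condition on $h$, integrate out the uniform $Z$ to obtain $\prod_{j}\cos(2\pi\alpha S_j)$, replace this by a degree-$\le 2p$ polynomial so that $2p$-wise independence forces $\E_\calD$ and $\E_X$ of the polynomial part to coincide, and bound the approximation error via hypercontractivity and Lemma~\ref{hashmomentsLem}. But the specific polynomial approximation you use is not strong enough, and the obstacle you flag at the end is in fact fatal rather than a calibration detail.

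The problem is that $\alpha^2\sum_j W_j^2 = \alpha^2\|v\|_2^2 + \alpha^2 T(Y)$ with $\alpha^2\|v\|_2^2$ a deterministic constant that is \emph{not required to be small}. In the regime the lemma is applied (Proposition~\ref{SmallAlphaProp}), $\alpha^2\|v\|_2^2$ can be as large as $\log^6(1/\delta)$, and even taking it $\Theta(1)$ already breaks the argument. When $\alpha^2\|v\|_2^2 = c = \Theta(1)$, your truncated Taylor polynomial $P_h$ is trying to approximate $e^{-2\pi^2 c}$ by a degree-$O(p)$ polynomial evaluated near $c$; for $p = O(1)$ (which genuinely occurs, e.g.\ the first dimension-reduction step where $n_i\approx n$ gives $p = O(C)$), $|Q_h - P_h| = \Omega(1)$ for typical $Y$, so $\E_\calD[|Q_h - P_h|] + \E_X[|Q_h - P_h|] = \Omega(1)$. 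Meanwhile the target bound $O(p)^{2p}\big(\alpha^4\|v\|_2^4/m + \alpha^4\|v\|_4^4\big)^{p/8}$ tends to $0$ as $m \to\infty$ and $\|v\|_4/\|v\|_2\to 0$ (dense $v$). Quantitatively, the $L=0$ term you isolate contributes $\Theta\big((\alpha^2\|v\|_2^2)^{K+1}/(K+1)!\big)$, and comparing against the target's $\|v\|_2^4/m$ contribution forces $m \le O(p)^{O(1)}$, whereas in the application $m = n_{i+1}$ is polynomial in $n$ while $p$ can be $O(1)$. No Stirling/recalibration of $K$ rescues this, because the deficiency is not a constant factor but an $m^{\Theta(p)}$ factor.

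The paper's Lemma~\ref{cosApproxLem} sidesteps exactly this: it first factors the product of cosines as $\exp(-2\pi^2\alpha^2 T)$ (with $T = \|v\|_2^2$ deterministic, so this factor cancels identically between $\E_\calD$ and $\E_X$) times a function of the \emph{centered} quantity $\sum_j S_j^2 - T$ and of $\sum_j(\alpha S_j)^4$. The polynomial approximation is then performed only on this mean-zero/small-mean remainder, whose moments are controlled precisely by $h(v)$, which is what Lemma~\ref{hashmomentsLem} bounds in terms of $\|v\|_2^4/m + \|v\|_4^4$. That factoring-out step is the key idea missing from your proposal, and it is not optional — without it the error term carries an irreducible $(\alpha^2\|v\|_2^2)^{\Theta(p)}$ that cannot be matched to the stated right-hand side.
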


To prove the lemma we shall exploit the independence of the $z_i$'s in Equation \eqref{gensmallEqn} to reduce the problem to that of analyzing a product of cosines as in the following lemma. The lemma gives a low-degree (multivariate) polynomial approximation for a product of cosines.
\begin{lem}\label{cosApproxLem}
For all $\alpha \in (0,1/4)$ and even integer $p$, there exists a polynomial $P: \R^m \to \R$ of degree at most $p$ such that for all $S_1,\ldots,S_m, T \in \R$,
 \begin{align}
   \label{CosApproxEqn}
   & \prod_{j=1}^m \cos(2\pi\alpha S_j) = \exp(-2\pi^2\alpha^2T) \cdot \left(\sum_{t=0}^{p/2-1} \frac{\left(-2\pi^2\alpha^2\left(\sum_{i=1}^m S_i^2-T\right)\right)^t}{t!} P(S_i)\right)\\ & + O(1)^p\left(\left( \alpha^2\left(\sum_{i=1}^m S_i^2-T\right)\right)^{p/2}+\left( \alpha^2\left(\sum_{i=1}^m S_i^2-T\right)\right)^{p}+\left(\sum_{i=1}^m (\alpha S_i)^{4}\right)^{p/8}+\left(\sum_{i=1}^m (\alpha S_i)^{4}\right)^{p/2}\right).\nonumber
\end{align}
\end{lem}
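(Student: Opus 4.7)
The plan is to peel off a Gaussian envelope, approximate the remaining analytic factor by a polynomial of degree $p$, and then Taylor-expand the Gaussian correction separately. Define $f(x) := \cos(2\pi\alpha x)\,\exp(2\pi^2\alpha^2 x^2)$ and $F(S) := \prod_{j=1}^m f(S_j)$, so that $\prod_j \cos(2\pi\alpha S_j) = e^{-2\pi^2\alpha^2 \sum_j S_j^2}\, F(S)$. The function $f$ is even and analytic with $f(0)=1$, and its quadratic Taylor coefficient vanishes because the $x^2$ parts of the Taylor series of $\cos(2\pi\alpha x)$ and $\exp(2\pi^2\alpha^2 x^2)$ cancel exactly, i.e.\ $f(x) = 1 + \sum_{l\ge 2} f_{2l}\, x^{2l}$. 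Standard Cauchy estimates on a disk of radius $r \sim \sqrt{l}/\alpha$ give $|f_{2l}| \leq (C\alpha)^{2l}/l!$ for a universal constant $C$.

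With $P(S)$ the total-degree-$p$ Taylor polynomial of $F$ at the origin, I would use the splitting
\[ e^{-2\pi^2\alpha^2 \sum_j S_j^2} = e^{-2\pi^2\alpha^2 T}\Bigl(\sum_{t=0}^{p/2-1}\tfrac{(-2\pi^2\alpha^2(\sum_j S_j^2 - T))^t}{t!} + R\Bigr), \]
where $R$ is the exponential Taylor remainder. Inserting $F = P + (F - P)$ and multiplying out reproduces the main term of the lemma plus two error contributions $E_1 := e^{-2\pi^2\alpha^2 T}\bigl(\sum_{t=0}^{p/2-1}(-2\pi^2\alpha^2(\sum_j S_j^2 - T))^t/t!\bigr)(F - P)$ and $E_2 := e^{-2\pi^2\alpha^2 T}\,R\,F$. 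The term $E_2$ is controlled by the standard estimate $|e^{-y} - \sum_{t<p/2}(-y)^t/t!| \leq |y|^{p/2}/(p/2)!$ (valid when $y\ge 0$ by Leibniz, and multiplied by $e^{|y|}$ in general) combined with the a priori bound $|F(S)| \leq e^{2\pi^2\alpha^2\sum_j S_j^2}$. A two-regime split on the sign and magnitude of $\sum_j S_j^2 - T$ produces the two shapes $(\alpha^2(\sum_j S_j^2 - T))^{p/2}$ and $(\alpha^2(\sum_j S_j^2 - T))^p$ in the final bound, the latter arising to absorb the exponential overhead when $|\sum_j S_j^2 - T|$ is large.

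The most delicate step is bounding $|F(S) - P(S)|$ in terms of $N(S) := \sum_j(\alpha S_j)^4$. The Taylor coefficient of $F$ at a multi-index $\vec k$ is $a_{\vec k} = \prod_j f_{k_j}$, which vanishes unless every $k_j \in \{0,4,6,\ldots\}$, so only such $\vec k$ appear in $F(S) - P(S) = \sum_{|\vec k|>p} a_{\vec k}\,S^{\vec k}$. Grouping by the support $J = \{j:k_j>0\}$, each surviving term factors through $\prod_{j\in J}(\alpha S_j)^{k_j}$ with $k_j\ge 4$, and I would collapse the sum using two symmetric-function inequalities: the elementary-symmetric bound $\sum_{|J|=k}\prod_{j\in J}(\alpha S_j)^4 \leq N(S)^k/k!$ handles the ``wide support'' extreme (every $k_j = 4$, $|J|\gtrsim p/4$, giving $N(S)^{\approx p/4}$ bounded by $O(1)^p N(S)^{p/8}$ when $N(S)\leq 1$), while the power-mean inequality $\sum_j(\alpha S_j)^q \leq N(S)^{q/4}$ for $q\ge 4$ handles the ``narrow support'' extreme (a single $k_j > p$ contributing $N(S)^{(p+2)/4}$ bounded by $O(1)^p N(S)^{p/2}$ when $N(S)\geq 1$). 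Summing the intermediate cases and using Stirling to absorb factorial ratios into the $O(1)^p$ prefactor yields $|F(S) - P(S)| \leq O(1)^p\bigl(N(S)^{p/8} + N(S)^{p/2}\bigr)$.

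The main obstacle is precisely this last bound: the twin $N^{p/8}+N^{p/2}$ shape has to be tight simultaneously for small $N(S)$ (where $N^{p/8}$ is operative) and for large $N(S)$ (where $N^{p/2}$ dominates), and all intermediate multi-indices must be handled without spoiling the $O(1)^p$ dependence on $p$. Once $|F - P|$ is controlled as above, bounding the truncated exponential Taylor sum appearing in $E_1$ by $O(1)^p(1 + |y|^{p/2})$ and combining with the estimate for $E_2$ yields the advertised four-term error bound.
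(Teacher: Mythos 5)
Your decomposition $\cos(2\pi\alpha x) = e^{-2\pi^2\alpha^2 x^2}f(x)$ with $f$ even, analytic, $f(0)=1$, $f_2=0$, and $|f_{2l}| \leq (C\alpha)^{2l}/l!$ is a clean alternative to the paper's route (which Taylor-expands $\log\cos$ and factors the exponential), and it yields essentially the same truncated polynomial $P$. However, the universal bound you assert, $|F(S)-P(S)| \leq O(1)^p\bigl(N(S)^{p/8}+N(S)^{p/2}\bigr)$, is simply false as a statement about all $S$. Take $m=1$ and let $\alpha S_1 \to \infty$: then $|F(S_1)| = |\cos(2\pi\alpha S_1)|\,e^{2\pi^2(\alpha S_1)^2}$ is typically of order $e^{2\pi^2\sqrt{N}}$, while $P$ has fixed degree $p$, so $|P(S_1)| = O\bigl((\alpha S_1)^p\bigr) = O(N^{p/4})$; thus $|F-P|$ grows super-polynomially in $N$ while the right-hand side is a fixed polynomial in $N$. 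The same problem infects your $E_2$ estimate: $e^{-2\pi^2\alpha^2 T}|R|\,|F|$ retains an uncancelled factor $e^{y}$ with $y = 2\pi^2\alpha^2(\sum_j S_j^2 - T)$, and for $y \gg p$ the quantity $e^{y}y^{p/2}/(p/2)!$ is not $O(1)^p(y^{p/2}+y^{p})$.

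What the paper does, and what your argument is missing, is a regime split. The Taylor analysis is carried out only under the assumptions that $\alpha S_i < 1/10$ for all $i$ and that $\alpha^2(\sum_i S_i^2 - T)$ and $\sum_i(\alpha S_i)^4$ are both less than a small absolute constant $a$; in that regime the remainder estimates you propose are valid. In the complementary regime one of $\bigl|\alpha^2(\sum_i S_i^2 - T)\bigr|$ or $N(S)$ is bounded below by a constant, so every candidate error term in the statement is already $\Omega(1)^p$; one then bounds the difference trivially using $\bigl|\prod_j\cos(2\pi\alpha S_j)\bigr|\leq 1$ together with the observation that the main term is controlled by a polynomial in $\bigl|\alpha^2(\sum_i S_i^2 - T)\bigr|$ and $N(S)$ of total degree at most $2p$ with coefficient sum $O(1)^p$, which is absorbed by the stated error. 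Without this complementary case your proof has a genuine hole: the key intermediate inequality it rests on does not hold unconditionally.
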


\begin{proof}[Proof of Lemma \ref{denseVLem}]
Let $Y^s = Z A(h) D(Y)$. We first fix a hash function $h \in \hh$ and then bound the error as a function of the hash function. We then average the error bound for a uniformly random hash function from $\hh$ using Lemma \ref{hashmomentsLem}.

For $j \in [m]$, let random variable $S_j = \sum_{i: h(i) = j} v_i Y_i$. Note that $\iprod{v}{Y^s} = \sum_{j=1}^m z_j S_j$. Therefore, as $z \in_u \dpm^m$,
$$\E_z[\phi_{v,\alpha}(Y^s)] = \prod_{j=1}^m \cos(2\pi\alpha S_j).$$
Let $T = \sum_j \E[S_j^2] = \|v\|_2^2$. Let $Q(X) \equiv R(S_1,\ldots,S_m)$ denote the degree $2p$ polynomial corresponding in the first term of Equation \eqref{CosApproxEqn}, and let $E(X)$ be the \emph{error} term corresponding to the second term. Then, from the above calculations,
$$\E_z[\phi_{v,\alpha}(X^s)] = Q(X) + E(X).$$

Observe that
\begin{equation}
  \label{gensmallEqn}
Q_2(X) := \alpha^2 (\sum_{j=1}^m S_j^2 - T) = \alpha^2 \sum_{j=1}^m \sum_{i \neq i' \in h^{-1}(j)} v_i v_{i'} X_i X_{i'},
\end{equation}
is a degree two polynomial in $X$ with, $\|Q_2\|_2^2 \leq \alpha^4 h(v)$ (recall Equation \eqref{eq:hv}).

By hypercontractivity - Lemma \ref{hcLem}, for all even $r\leq p$,
$$\E[Q_2(X_1,\ldots,X_n)^r] \leq O(r)^{r} \bkets{\alpha^4 h(v)}^{r/2}.$$

A similar calculation for the polynomial $Q_4(X_1,\ldots,X_n) := \alpha^4 (\sum_j S_j^4)$ shows that for all even $r\leq p/2$,
$$\E[Q_4(X_1,\ldots,X_n)^r] \leq O(r)^{2r} \bkets{\alpha^4 h(v)}^{r}.$$

By $2p$-wise independence, the above bounds also hold for $\E[Q_2(Y)^r],\E[Q_4(Y)^r]$.

Now, let $X^s = Z A(h) D(X)$, where $X \in_u \dpm^n$. Then, clearly $X^s \in_u \dpm^n$. Combining the above expressions and noting that they also work for $X \in_u \dpm^n$, we get
\begin{align*}
\left|\E_X  \E_z[\phi_{v,\alpha}(X^s)] - \E_Y \E_z[\phi_{v,\alpha}(Y^s)]\right| &\leq \left|\E\sbkets{Q(X) - Q(Y)}\right| + \E[|E(X)|] + \E[|E(Y)|] \\
&\leq 0 + O(p)^p \bkets{\alpha^4 h(v)}^{p/4} + O(p)^p \bkets{\alpha^4 h(v)}^{p/8}\\
& \leq O(p)^p \bkets{\alpha^4 h(v)}^{p/8}.
\end{align*}

By taking expectation with respect to $h \in_u \hh$ and applying Lemma \ref{hashmomentsLem}, we get
\begin{align*}
  \left|\E[\phi_{v,\alpha}(X^s)] - \E[\phi_{v,\alpha}(Y^s)]\right| &\leq O(p)^{2p}\left(\frac{\alpha^4\|v\|_2^4}{m} +\alpha^4\|v\|_4^4 \right)^{p/8},
\end{align*}
proving the lemma.
\end{proof}

We defer the proof of Lemma \ref{cosApproxLem} to Section \ref{cosApprox} and continue with the analysis of our generator. We do so by applying Lemma \ref{denseVLem} iteratively to the vectors
$$
v_i := vD(Z_1)A(h_1)^T\cdots D(Z_{i-1})A(h_{i-1})^T.
$$
In order for it to be useful, we need to have good bounds on the low order moments of the $v_i$. We deal with these issues in the next section.
\subsubsection{Controlling moments}
In particular we will need the following Lemma:
\begin{lem}\label{momentControlLem}
Let $v\in\zpm^n$ with $\|v\|_0\geq \log^{C/4}(n/\delta)$. Let $Z_i,h_i,v_i$ be defined as above. For any $1\leq i \leq t$ we have with probability at least $1-\delta/(4n)$ that
$$
\|v_i\|_2 \leq 2^i \|v\|_2 \ \ \ \ \ \mathrm{and} \ \ \ \ \ \|v_i\|_4 \leq \frac{\|v\|_2}{\min(\|v\|_2^{1/3},n_i^{1/20})}.
$$
\end{lem}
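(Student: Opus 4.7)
The plan is to prove this by induction on $i$, controlling the growth of $\|v_i\|_2$ and $\|v_i\|_4$ jointly using moment estimates tailored to the available independence budget. The base case $i=1$ is immediate: $v_1=v$ gives $\|v_1\|_2=\|v\|_2$ and $\|v_1\|_4=\|v\|_0^{1/4}=\|v\|_2^{1/2}$, and since $\|v\|_2\ge 1$ (as $\|v\|_0 \ge \log^{C/4}(n/\delta)\ge 1$) we have $\|v\|_2^{1/2}\le \|v\|_2^{2/3}=\|v\|_2/\|v\|_2^{1/3}$.

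For the inductive step, note that $(v_{i+1})_j = \sum_{k\in h_i^{-1}(j)} (Z_i)_k (v_i)_k$, which yields the polynomial identities
\[
\|v_{i+1}\|_2^2 = \|v_i\|_2^2 + Q(Z_i;h_i),\qquad \|v_{i+1}\|_4^4 = 3\,h_i(v_i) - 2\|v_i\|_4^4 + R(Z_i;h_i),
\]
where, conditional on $h_i$, $Q$ is a mean-zero degree-$2$ polynomial in $Z_i$ with $\|Q\|_2^2\le 4\,h_i(v_i)$, and $R$ is a mean-zero degree-$4$ polynomial whose coefficient-$\ell_2$ norm is controlled by $h_i(v_i)$ and $\|v_i\|_4^4$ in an analogous way. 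I would bound the $p$-th moments of $Q$ and $R$ using hypercontractivity (Lemma~\ref{hcLem}), which applies in the $p_i$-wise independent distribution of $Z_i$ for $p\lesssim p_i$, with the $(\delta/n)^C$-biased refinement (Lemma~\ref{hcBiasedLem}) used to absorb the $\|\cdot\|_1$-type additive errors when $p_i$ is small. Averaging the resulting (hash-dependent) moment bound over $h_i$ via Lemma~\ref{hashmomentsLem} yields, for $p\lesssim p_i$,
\[
\E\!\left[h_i(v_i)^{p/2}\right] \;\le\; O(p)^{p}\!\left(\tfrac{\|v_i\|_2^4}{n_{i+1}}+\|v_i\|_4^4\right)^{\!p/2}+\text{(bias error)}.
\]
Markov's inequality then delivers, with failure probability at most $e^{-\Omega(p_i)}$ at step $i$, the per-step bounds
\[
\|v_{i+1}\|_2^2 \;\le\; 4\|v_i\|_2^2,\qquad \|v_{i+1}\|_4^4 \;\le\; O(1)\!\left(\tfrac{\|v_i\|_2^4}{n_{i+1}}+\|v_i\|_4^4\right).
\]

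Because $n_{i+1}=n_i^{1/2}+O(1)$, we have $p_i=C\log(n/\delta)/\log(n_i)=\Theta(C\log(n/\delta)\cdot 2^{i-1}/\log n)$, which grows doubly-geometrically in $i$; consequently the per-step failure probabilities $e^{-\Omega(p_i)}$ shrink doubly-exponentially and sum, over the $t=O(\log\log n)$ levels, to at most $\delta/(4n)$ for $C$ a sufficiently large constant (with the residual contributions at low levels dominated by the $(\delta/n)^C$-biased terms). Given these per-step bounds, the $\ell_2$ claim iterates directly to $\|v_i\|_2\le 2^i\|v\|_2$. For the $\ell_4$ claim, unroll the recursion and use $\|v_1\|_4^4=\|v\|_0=\|v\|_2^2$, $\|v_j\|_2^4\le 16^j\|v\|_2^4\le \polylog(n)\,\|v\|_2^4$, and $\sum_{j\ge 1} 1/n_{j+1}=O(1/n_i)$ (geometric decay) to obtain
\[
\|v_i\|_4^4 \;\le\; \|v\|_2^2 + \polylog(n)\,\frac{\|v\|_2^4}{n_i}.
\]
A short case analysis, depending on whether $\|v\|_2^2$ or $\polylog(n)\,\|v\|_2^4/n_i$ dominates, together with $n_i\ge \log^C(n/\delta)$ for $C$ large, shows this is at most $\max\bigl(\|v\|_2^{8/3},\|v\|_2^4/n_i^{1/5}\bigr)$, which is exactly the fourth power of the claimed bound.

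The main obstacle is calibrating the moment used at each level to the limited independence budget $p_i$, which is smallest at early levels where $n_i$ is largest. The saving feature is that $p_i$ doubles each time the dimension is squared, so the per-step failure probabilities decay doubly-exponentially and a constant $C$ suffices; the $(\delta/n)^C$-biased property cleans up the remaining low-moment errors. The subsidiary obstacle is bookkeeping: the growth factor $2^i$ in the $\ell_2$ norm must be absorbed into the $\polylog(n)$ slack available in the $\ell_4$ bound (specifically the exponent $1/20$ rather than $1/4$), and one must track $\|v_i\|_2$ from below (in high-probability events) when dividing by it in the $\ell_4$ recursion.
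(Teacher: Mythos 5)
Your proposal follows the skeleton of the paper's proof --- inductive invariant on $\|v_i\|_2,\|v_i\|_4$, per-step moment control via hypercontractivity and Lemma~\ref{hashmomentsLem}, union bound over the $O(\log\log n)$ levels --- but it omits a case split that the paper makes and that the moment argument cannot replace. The paper treats the \emph{sparse regime} $\|v\|_0^3\le n_i$ by an entirely separate argument based on the collision lemma (Lemma~\ref{collisionLem}), and only runs the $\ell_2/\ell_4$ moment argument in the \emph{dense regime} $\|v\|_0^3>n_i$. Your plan runs moments uniformly, and this fails on the sparse side.

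Here is the concrete obstruction. Lemma~\ref{NormControlLem} gives $\E\bigl[\|v_{i+1}\|_4^{4p}\bigr]\le O(p)^{4p}\bigl(\|v_i\|_2^4/n_{i+1}\bigr)^p + O(p)^{4p}\|v_i\|_4^{4p}$, with $p$ capped by the independence budget at roughly $p_i:=C\log(n/\delta)/\log n_i$. When you Markov this against a target $\theta$, the second term leaves a residual $\bigl(O(p_i)^4\,\|v_i\|_4^4/\theta\bigr)^{p_i}$; to push that below $(\delta/n)^2$ you need $\theta/\|v_i\|_4^4$ to beat $O(p_i)^4$ by an additional factor $n_i^{\Omega(1/C)}$. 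In the dense regime the paper's invariant $\|v_i\|_4\le\|v\|_2/n_i^{1/20}$ is calibrated to supply exactly this slack, since $\theta=\|v\|_2^4/n_{i+1}^{1/5}$ and $n_{i+1}^{-1/5}/n_i^{-1/5}=n_i^{1/10}$. But when $v$ is sparse the $\min$ in the statement takes the value $\|v\|_2^{1/3}$, the target $\theta=\|v\|_2^{8/3}$ is \emph{constant in $i$}, the ratio $\theta/\|v_i\|_4^4$ has no slack, and the Markov bound evaluates to $\Omega\bigl(O(p_i)^{4p_i}\bigr)$, which is vacuous. This is also why the claim ``failure probability $e^{-\Omega(p_i)}$ decaying doubly-exponentially'' is misleading: at $i=1$, $p_1\approx C\log(n/\delta)/\log n$ is just a constant when $\delta=1/\poly(n)$, so $e^{-\Omega(p_1)}$ is a constant, not $\poly(\delta/n)$. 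And the $(\delta/n)^C$-biasedness of $Z_i$ does not rescue this: Lemma~\ref{hcBiasedLem}'s additive error $\|Q\|_1^p(\delta/n)^C$ stays below the relevant Markov threshold only for $p\lesssim C\log(n/\delta)/\log n$, which is the same cap that $k$-wise independence already imposes, so biasedness cannot be used to raise $p$. The paper sidesteps all of this: for $\|v\|_0^3\le n_i$, Lemma~\ref{collisionLem} shows that with probability $1-(\delta/n)^2$ at most $O(\log(n/\delta))$ support coordinates of $v_i$ collide under $h_i$, which yields directly that $\|v_i\|_2^2$ and $\|v_i\|_4^4$ equal $\|v\|_0$ up to an additive $\poly(i,\log(n/\delta))$; this both settles the sparse levels and seeds the dense induction at the transition. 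Relatedly, your ``unroll the recursion'' step is also unsafe: the correct per-step constant is $n_i^{\Theta(1/C)}\cdot O(p_i)^4$, not $O(1)$, and compounding those factors over $t=\Theta(\log\log n)$ levels produces $n^{\Theta(1/C)}$ slack, which is not absorbable when $\|v\|_0$ is only $\polylog(n)$; the paper avoids any accumulation by carrying the invariant $\|v_i\|_4\le\|v\|_2/n_i^{1/20}$ step by step rather than unrolling.
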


In order to prove this we will first need some controls over how the procedure used to obtain $v_{i+1}$ from $v_i$ affects these norms. In particular, we show:
\begin{lem}\label{NormControlLem}
Let $p\geq 2$ be an even integer. Let $\hh = \{h: [n] \to [m]\}$ be a $4p$-wise independent hash family and $\calD$ be a $4p$-wise independent distribution over $\dpm^n$. Then, for $h \in_u \hh$, $x \sim \calD$ and a vector $v \in \R^n$,
$$\E\left[\left(\|v\|_2^2-\nm{v D(x)A(h)^T}_2^{2}\right)^p\right] \leq O(p)^{2p} \left(\frac{\|v\|_2^4}{m}\right)^{p/2} + O(p)^{2p} \|v\|_4^{2p}.$$
Similarly,
$$\E[\nm{v D(x)A(h)^T}_4^{4p}] \leq O(p)^{4p}\left(\frac{\|v\|_2^4}{m} \right)^p + O(p)^{4p} \|v\|_4^{4p}.$$
\end{lem}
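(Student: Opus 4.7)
The plan for both bounds follows the same two-stage template. First, for a fixed hash function $h$, I will write the random variable inside the expectation as a polynomial $Q$ in $x$ of low degree and bound $\E_x[Q^p]$ via hypercontractivity (Lemma~\ref{hcLem}), reducing everything to a power of $h(v)$. Second, I will take expectation over $h$ using the moment bound for $h(v)$ in Lemma~\ref{hashmomentsLem}. Throughout let $w = v D(x)A(h)^T \in \R^m$, so that $w_j = \sum_{i\in h^{-1}(j)} v_i x_i$. Both polynomials I will encounter have degree at most $4$ and will be raised to the $p$-th power, which is exactly why $4p$-wise independence of both $\calD$ and $\hh$ is assumed.

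For the first bound, using $x_i^2 = 1$ I can write
\[
 Q(x;h) \;:=\; \nmt{w}^2 - \nmt{v}^2 \;=\; \sum_{j=1}^m \sum_{i \neq i' \in h^{-1}(j)} v_i v_{i'} x_i x_{i'},
\]
a mean-zero degree-$2$ polynomial in $x$. A direct calculation using $4$-wise independence of $\calD$ gives $\E_x[Q^2] = 2 \sum_j \nmt{v_{|h^{-1}(j)}}^4 = 2 h(v)$. Hypercontractivity applies since $Q^p$ has degree $2p \leq 4p$, yielding $\E_x[Q^p] \leq (p-1)^p (2h(v))^{p/2}$. Averaging over $h$ and invoking Lemma~\ref{hashmomentsLem} at exponent $p/2$ (which needs only $p/2$-wise independence of $\hh$) produces the claimed bound $O(p)^{2p}(\nmt{v}^4/m)^{p/2} + O(p)^{2p}\nm{v}_4^{2p}$.

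For the $\ell_4$ bound, I split $\nm{w}_4^4 = \sum_j w_j^4$ into its conditional mean and fluctuation given $h$. An expansion and $4$-wise independence give $\E_x[w_j^4] = 3\nmt{v_{|h^{-1}(j)}}^4 - 2 \nm{v_{|h^{-1}(j)}}_4^4$, so $\E_x[\nm{w}_4^4] \leq 3 h(v)$. The key observation for the fluctuation is that for $j \neq j'$ the polynomials $w_j, w_{j'}$ depend on disjoint sets of $x_i$'s, giving
\[
 \mathrm{Var}_x\bkets{\nm{w}_4^4} \;=\; \sum_j \mathrm{Var}_x(w_j^4) \;\leq\; \sum_j \E_x[w_j^8] \;\leq\; O(1) \sum_j \nmt{v_{|h^{-1}(j)}}^8 \;\leq\; O(1)\, h(v)^2,
\]
where the penultimate step is hypercontractivity applied to the linear polynomial $w_j$ and the last step uses $\sum_j a_j^2 \leq (\sum_j a_j)^2$ with $a_j = \nmt{v_{|h^{-1}(j)}}^4 \geq 0$. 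Hypercontractivity applied now to the mean-zero degree-$4$ polynomial $\nm{w}_4^4 - \E_x[\nm{w}_4^4]$ bounds its $L^p(x)$-norm by $(p-1)^2 \cdot O(h(v)) = O(p^2) h(v)$, so combining with the mean yields $\|\,\nm{w}_4^4\,\|_{L^p(x)} \leq O(p^2) h(v)$ pointwise in $h$. Taking the $p$-th power, averaging over $h$, and using Lemma~\ref{hashmomentsLem} with exponent $p$ (which needs $p$-wise independence of $\hh$) gives the claimed $O(p)^{4p}$ bound.

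I do not expect a genuine obstacle: the work is essentially bookkeeping. The main care is in tracking the required independence levels at each step -- $2p$-wise for the degree-$2$ polynomial, $4p$-wise for the degree-$4$ one, and $p/2$-wise and $p$-wise respectively for the two applications of Lemma~\ref{hashmomentsLem}, all comfortably implied by the $4p$-wise hypothesis -- and in checking that the $(p-1)^p$, $(p-1)^{2p}$ and $O(p)^{cp}$ factors from the two invocations of hypercontractivity combine with the $O(p)^{p}$ and $O(p)^{2p}$ factors from Lemma~\ref{hashmomentsLem} into the stated $O(p)^{2p}$ and $O(p)^{4p}$ constants.
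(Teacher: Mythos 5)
Your argument is correct and follows essentially the same route as the paper: for fixed $h$, bound the coefficient $\ell_2$-norm of the relevant degree-$2$ (resp.\ degree-$4$) polynomial in $x$ by a power of $h(v)$, apply hypercontractivity (Lemma~\ref{hcLem}), then average over $h$ via Lemma~\ref{hashmomentsLem}; the paper applies hypercontractivity directly to the non-centered polynomial $\|vD(x)A(h)^T\|_4^4$ rather than splitting off the mean, but this is just a cosmetic difference. One small slip: $\E_x[Q^2]$ equals $2(h(v)-\|v\|_4^4)$, not $2h(v)$, though as an upper bound this changes nothing.
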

\begin{proof}
Note that in either case the independence is sufficient that the expectations would be the same if $x$ and $h$ were chosen uniformly at random from $\dpm^n$ and $[m]^{[n]}$, respectively.

Applying Lemma \ref{hcLem} to the polynomial $P_h(x) = \|v D(x)A(h)^T\|_2^2-\|v\|_2^2$, we find that for fixed $h$
$$
\E_x[P_h(x)^p] \leq O(p)^p h(v)^{p/2}.
$$
Averaging over $h$ and applying Lemma \ref{hashmomentsLem} yields the first line.

Applying Lemma \ref{hcLem} to the polynomial $Q_h(x) = \|v D(x)A(h)^T\|_4^4$, we find that for fixed $h$,
$$
\E_x[\nm{v D(x)A(h)^T}_4^{4p}] \leq O(p)^{2p}h(v)^p.
$$
Taking an expectation over $h$ and applying Lemma \ref{hashmomentsLem}, we get that
$$
\E[\nm{v D(x)A(h)^T}_4^{4p}] \leq O(p)^{4p}\left(\frac{\|v\|_2^4}{m} \right)^p + O(p)^{4p} \|v\|_4^{4p}.
$$
This completes the proof.
\end{proof}

We are now prepared to prove Lemma \ref{momentControlLem}.

\begin{proof}[Proof of Lemma \ref{momentControlLem}]
We proceed by induction on $i$ proving that the desired inequalities hold with probability at least $1-i(\delta/n)^2$. As a base case we consider $i$ so that $\|v\|_0^3 \leq n_i$. In this case, by repeated application of Lemma \ref{collisionLem}, we find that with at least the desired probability that $\|v\|_0-\|v_i\|_0 \leq i\log(n/\delta).$ This implies that other than its zero coefficients, $v_i$ has $\|v\|_0-2i\log(n/\delta)$ coefficients of norm 1, and at most $i\log(n/\delta)$ other coefficients each of norm at most $i\log(n/\delta)$. This means that
$$
\|v_i\|_2^2 = \|v\|_2^2 + O(i^3 \log^3(n/\delta)), \ \ \ \ \mathrm{and} \ \ \ \ \|v_i\|_4^4 = \|v\|_2^2 + O(i^5 \log^5(n/\delta)).
$$
Our bounds follow immediately.

Otherwise, for $\|v_0\|^6 > n_i$, we proceed by induction on $i$. As a base case, note that the desired inequalities hold for $i=1$ as $v_1=v$, and $\|v\|_4=\sqrt{\|v\|_2}.$ We claim that if $\|v_i\|$ satisfies the desired inequalities, then $v_{i+1}$ also does with probability at least $1-(\delta/n)^2$. Note that $v_{i+1} = v_i D(Z_i)A(h_i)^T$. Note also that $Z_i$ and $h_i$ are $k$-wise independent for $k=C\log(n/\delta)/\log(n_i)$. Applying Lemma \ref{NormControlLem} with $p=\floor{k/4}$, we find that
$$
\E\left[\left(\|v_i\|_2^2-\|v_{i+1}\|_2^2\right)^{2p}\right] \leq O(p)^{4p} \left(\frac{\|v_i\|_2^4}{n_{i+1}}\right)^{p} + O(p)^{4p} \|v_i\|_4^{4p},
$$
and
$$
\E[\|v_{i+1}\|_4^{4p}] \leq O(p)^{4p}\left(\frac{\|v_i\|_2^4}{n_{i+1}} \right)^p + O(p)^{4p} \|v_i\|_4^{4p}.
$$

Applying the Markov bound to the first of these equations we find that the probability that $\|v_{i+1}\|_2^2 \geq \|v_i\|_2^2+4^i\|v\|_2^2$ is at most
\begin{align*}
\left(\frac{O(p^4)}{n_{i+1}}\right)^{p} + O\left( \frac{p\|v_i\|_4}{\|v\|_2}\right)^{4p} & \leq n_{i+1}^{-p/2} + O(pn_{i+1}^{-1/10})^{4p}\\
\leq n_{i+1}^{-p/2} + n_{i+1}^{-p/11}\\
\leq (\delta/n)^2/2.
\end{align*}
Where the first inequality above is by the inductive hypothesis. This implies that $\|v_{i+1}\|_2 \leq 2^{i+1}\|v\|_2$ with the desired probability.

Applying the Markov bound to the latter of these equations we find that the probability that $\|v_{i+1}\|_4 > \|v\|_2 /n_{i+1}^{1/20}$ is at most
\begin{align*}
O\left(\frac{p^4\|v_i\|_2^4}{\|v\|_2^4 n_{i+1}^{4/5}} + \frac{p^4\|v_i\|_4^4n_{i+1}^{1/5}}{\|v_2\|_2^4} \right)^p & \leq O\left( n_i^{-1/2} + \frac{n_{i+1}^{1/5}}{n_{i}^{1/5}} \right)^p \leq O(n_i^{-1/10})^p \leq (\delta/n)^2/2.
\end{align*}
Where above we use that $$\|v_i\|_4 \leq \frac{\|v\|_2}{\min(\|v\|_2^{1/3},n_i^{1/20})}=\frac{\|v\|_2}{n_i^{1/20}}.$$
Thus, with the desired probability $\|v_{i+1}\|_4 \leq \|v\|_2 /n_i^{1/20}$. This completes the inductive step, and finishes the proof.
\end{proof}

\subsubsection{Combined analysis}
We are now ready to prove Lemma \ref{SmallAlphaLem}.
\begin{proof}[Proof of Lemma \ref{SmallAlphaLem}]
First note that if $i=t-1$, the lemma follows immediately from the pseudorandomness properties of $Z$. We thus consider only $i<t-1$.

We note that $v\cdot Y_i = v_i\cdot X_i$ and $v\cdot Y_{i+1} = v_i \cdot X_{i+1}A(h_i)D(Z_i)$. If $\|v\|_0^3 \leq n_{i+1}$, we are done by Lemma \ref{sparseVLem}. Otherwise, assume that $\|v\|_0^3 > n_{i+1}$. By Lemma \ref{momentControlLem} we have that except for an event of probability $\delta/(4n)$ we have that
$$
\|v_i\|_2 \leq \log(n)\|v\|_2 \ \ \ \mathrm{and} \ \ \ \|v_i\|_4 \leq \|v\|_2 n_i^{-1/20}.
$$
By ignoring the possibility that these are violated, we introduce an error of at most $\delta/(2n)$, thus it suffices to only consider the case where the choice of $h_1,Z_1,\ldots,h_{i-1},Z_{i-1}$ are such that the above holds. We now need to bound
$$
\left|\E[\phi_{v_i,\alpha}(X_i)] - \E[\phi_{v_i,\alpha}(X_{i+1}A(h_i)D(Z_i))] \right|.
$$
Since $X_i$ has the same distribution as $X_{i+1}A(h_i)D(X_i)$, we may apply Lemma \ref{denseVLem} that for $p=\Omega\left(\frac{C\log(n/\delta)}{\log(n_i)}\right)$ that the above is bounded by
\begin{align*}
O(p)^{2p}\left(\frac{\alpha^4 \|v_i\|_2^4}{n_{i+1}}+ \alpha^4\|v_i\|_4^4\right)^{p/8} & \leq O(p)^{2p} \left(2^{4i}\log^{12}(1/\delta)n_{i}^{-1/2}+\log^{12}(1/\delta)n_i^{-1/5} \right)^{p/8}\\
& \leq \left(\log^{52}(n/\delta)n_i^{-1/5} \right)^{p/8}\\
& \leq n_i^{-p/50}\\
& \leq \delta/(2n).
\end{align*}
This completes the proof.
\end{proof}

Proposition \ref{SmallAlphaProp} now follows immediately after noting that
$$
\left|\E[\phi_{v,\alpha}(X)]- \E[\phi_{v,\alpha}(Y)]\right| \leq \sum_{i=1}^{t-1}\left|\E[\phi_{v,\alpha}(Y_{i})]- \E[\phi_{v,\alpha}(Y_{i+1})]\right|.
$$

\subsubsection{Approximating a product of cosines}\label{cosApprox}    
Here we prove Lemma \ref{cosApproxLem}.
\begin{proof}[Proof of Lemma \ref{cosApproxLem}]
Note that so long as $\alpha S_i < 1/10$ for all $i$ that by Taylor expansion we have that
\begin{align*}
& \phantom{=}\prod_{i=1}^m \cos(2\pi \alpha S_i)\\ & = \exp\left(-2\pi^2\alpha^2\sum_{i=1}^m S_i^2+\sum_{j=2}^{p/2-1}\left(c_j\sum_{i=1}^m (\alpha S_i)^{2j}\right) + \sum_{i=1}^m O(\alpha S_i)^{p} \right)\\
& = \exp(-2\pi^2\alpha^2T)\exp\left(-2\pi^2\alpha^2\left(\sum_{i=1}^m S_i^2-T\right)+\sum_{j=2}^{p/2-1}\left(c_j\sum_{i=1}^m (\alpha S_i)^{2j}\right) + \sum_{i=1}^m O(\alpha S_i)^{p} \right)
\end{align*}
where the $c_j$ are constants obtained from the Taylor expansion of $\log(\cos(z))$. Furthermore, since $\log(\cos(z))$ is analytic in a disk around $z=0$, we have that $c_j = O(1)^j$. Note by conditioning on whether or not $ \sum_{i=1}^m O(\alpha S_i)^{p}$ is more than $1$, we find that the above is equal to
$$
\exp(-2\pi^2\alpha^2T)\exp\left(-2\pi^2\alpha^2\left(\sum_{i=1}^m S_i^2-T\right)+\sum_{j=2}^{p/2-1}\left(c_j\sum_{i=1}^m (\alpha S_i)^{2j}\right) \right)\left( 1 +  \sum_{i=1}^m O(\alpha S_i)^{p}\right) +  \sum_{i=1}^m O(\alpha S_i)^{p}.
$$
For each $j$, let $p_j$ be the ceiling of $p/(2j)$. Note that $p\leq 2j\cdot p_j \leq 2p$. Under the additional assumption that $\alpha^2\left(\sum_{i=1}^m S_i^2-T\right),\sum_{i=1}^m (\alpha S_i)^{4}<a$, for some sufficiently small constant $a$ we have that the above is equal to
\begin{align*}
\exp(-2\pi^2\alpha^2T)&\cdot \left(\sum_{t=0}^{p_2-1} \frac{\left(-2\pi^2\alpha^2\left(\sum_{i=1}^m S_i^2-T\right)\right)^t}{t!} \right)\cdot \left(1+O\left( \alpha^2\left(\sum_{i=1}^m S_i^2-T\right)\right)^{p_2}\right)\\
& \cdot \prod_{j=2}^{p/2-1}\left(\sum_{t=0}^{p_j-1}\frac{\left(c_j\sum_{i=1}^m (\alpha S_i)^{2j}\right)^t}{t!} \right)\cdot \left(1+O(1)^p\left(\sum_{i=1}^m (\alpha S_i)^{2j}\right)^{p_j}\right)\\
& \cdot \left( 1 +  \sum_{i=1}^m O(\alpha S_i)^{p}\right) +  \sum_{i=1}^m O(\alpha S_i)^{p}\\
= \exp(-2\pi^2\alpha^2T)&\cdot \left(\sum_{t=0}^{p_2-1} \frac{\left(-2\pi^2\alpha^2\left(\sum_{i=1}^m S_i^2-T\right)\right)^t}{t!} \right)\cdot \prod_{j=2}^{p/2-1}\left(\sum_{t=0}^{p_j-1}\frac{\left(c_j\sum_{i=1}^m (\alpha S_i)^{2j}\right)^t}{t!} \right)\\
& + O(1)^p\left(\left( \alpha^2\left(\sum_{i=1}^m S_i^2-T\right)\right)^{p_2}+\left(\sum_{i=1}^m (\alpha S_i)^{4}\right)^{p/4}+ \sum_{i=1}^m (\alpha S_i)^{p}\right)
\end{align*}
Next consider the above term
$$
\prod_{j=2}^{p/2-1}\left(\sum_{t=0}^{p_j-1}\frac{\left(c_j\sum_{i=1}^m (\alpha S_i)^{2j}\right)^t}{t!} \right).
$$
Let it equal $P(S_i)+E(S_i)$ where $P$ is the polynomial consisting of all the terms of total degree at most $p$. We note that for any $j$ that $c_j\sum_{i=1}^m (\alpha S_i)^{2j}$ is at most $$O(a)^{j/4} \left(\sum_{i=1}^m (\alpha S_i)^{4}\right)^{ j/4 }.$$ Therefore, $|E(S_i)|$ is at most $\left(\sum_{i=1}^m (\alpha S_i)^{4}\right)^{p/8}$ times the sum of the degree more than $p$ coefficients in the Taylor expansion of
$$
\exp\left( \frac{1}{1-O(a^{1/8}z)} \right).
$$
For $a$ sufficiently small, the above has radius of convergence more than $1$, and thus the sum of the degree more than $p$ terms is bounded.
Thus, $E(S_i)$ is
$$
O\left(\sum_{i=1}^m (\alpha S_i)^{4} \right)^{p/8}.
$$
Therefore, assuming that $\alpha S_i < 1/10$ for all $i$, and $\alpha^2\left(\sum_{i=1}^m S_i^2-T\right),\sum_{i=1}^m (\alpha S_i)^{4}<a$, then
$$\prod_{i=1}^m \cos(2\pi \alpha S_i)$$ equals
\begin{align*} \exp(-2\pi^2\alpha^2T)& \cdot \left(\sum_{t=0}^{p_2-1} \frac{\left(-2\pi^2\alpha^2\left(\sum_{i=1}^m S_i^2-T\right)\right)^t}{t!} \right)P(S_i)\\ & + O(1)^p\left(\left( \alpha^2\left(\sum_{i=1}^m S_i^2-T\right)\right)^{p_2}+\left(\sum_{i=1}^m (\alpha S_i)^{4}\right)^{p/8-1}+ \sum_{i=1}^m (\alpha S_i)^{p}\right).
\end{align*}
On the other hand, if the stated assumptions fail, the main term above is bounded by a polynomial in $\alpha^2\left(\sum_{i=1}^m S_i^2-T\right)$ and $\sum_{i=1}^m (\alpha S_i)^{4}$ with total degree at most $2p$ and sum of coefficients $O(1)^p$. Therefore, under no additional assumptions we have that
$$\prod_{i=1}^m \cos(2\pi \alpha S_i)$$ equals
\begin{align*} & \exp(-2\pi^2\alpha^2T) \cdot \left(\sum_{t=0}^{p_2-1} \frac{\left(-2\pi^2\alpha^2\left(\sum_{i=1}^m S_i^2-T\right)\right)^t}{t!} \right)P(S_i)\\ & + O(1)^p\left(\left( \alpha^2\left(\sum_{i=1}^m S_i^2-T\right)\right)^{p/2}+\left( \alpha^2\left(\sum_{i=1}^m S_i^2-T\right)\right)^{p}+\left(\sum_{i=1}^m (\alpha S_i)^{4}\right)^{p/8}+\left(\sum_{i=1}^m (\alpha S_i)^{4}\right)^{p/2}\right).
\end{align*}
The claim now follows.
\end{proof}
\subsection{Final analysis}
We can finally state our main generator and prove Theorem \ref{th:mainintrohs}.
\begin{proof}[Proof of Theorem \ref{th:mainintrohs}]
Let $Y_1,Y_2$ be the generators from Propositions \ref{genLargeProp} and \ref{SmallAlphaProp} for $\delta = \epsilon/6n$. Let $Y$ be the the coordinate-wise product of the strings $Y_1,Y_2$. We claim that for any $v \in \zpm^n$ and $X \in_u \dpm^n$,
\begin{equation}
  \label{eq:mainproof1}
\dtv(v\cdot X,v\cdot Y) \leq \epsilon.
\end{equation}
The theorem follows immediately from the above claim and the bounds on the seed-lengths from Propositions \ref{genLargeProp} and \ref{SmallAlphaProp}.

To prove the theorem, we first prove that for all $\alpha \in \R$,
$$
\left|\E[\phi_{v,\alpha}(X)]-\E[\phi_{v,\alpha}(Y)] \right| \leq \epsilon/(2n).
$$
Now, if $\log^3(1/\delta)/\nmt{v} \leq \alpha$, then
$$
\E[\phi_{v,\alpha}(Y)] = \E[\phi_{D(Y_2)v,\alpha}(Y_1)]
$$
and
$$
\E[\phi_{v,\alpha}(X)] = \E[\phi_{v,\alpha}(D(Y_2)X)] = \E[\phi_{D(Y_2)v,\alpha}(X)].
$$
However by Proposition \ref{genLargeProp}, we have that
$$
\left|\E[\phi_{D(Y_2)v,\alpha}(Y_1)] - \E[\phi_{D(Y_2)v,\alpha}(X)] \right| \leq \epsilon/(3n).
$$
Similarly, if $\alpha \leq \log^3(1/\delta)/\nmt{v}$, then
then note that
$$
\E[\phi_{v,\alpha}(Y)] = \E[\phi_{D(Y_1)v,\alpha}(Y_2)]
$$
and
$$
\E[\phi_{v,\alpha}(X)] = \E[\phi_{v,\alpha}(D(Y_1)X)] = \E[\phi_{D(Y_1)v,\alpha}(X)].
$$
However by Proposition \ref{SmallAlphaProp}, we have that
$$
\left|\E[\phi_{D(Y_1)v,\alpha}(Y_2)] - \E[\phi_{D(Y_1)v,\alpha}(X)] \right| \leq \epsilon/(3n).
$$
Thus, we have our result for all $\alpha\in[0,1/4]$. Noting that $\phi_{v,-\alpha}(X) = \overline{\phi_{v,\alpha}(X)}$, we determine that the statement in question holds for $\alpha$ if and only if it holds for $-\alpha$. Thus, the inequality in question holds for all $\alpha\in[-1/4,1/4]$. Next, note that for any $X\in \dpm^n$ that $\phi_{v,\alpha+1/2}(X) = \exp(\pi i v\cdot X)\phi_{v,\alpha}(X) = (-1)^{\|v\|_0}\phi_{v,\alpha}(X)$. Thus, the statement in question holds for $\alpha$ if and only if it holds for $\alpha+1/2$. Thus, it holds for all real $\alpha$. Equation \ref{eq:mainproof1} now follows from the above argument and Claim \ref{clm:introfouriertotv} applied to $Z_1 = \dotp{v}{X}$ and $Z_2 = \dotp{v}{Y}$. 
\ignore{
Equation \ref{eq:mainproof1} now follows from noting that
\begin{align*}
\left|\pr(v\cdot X=k) - \pr(v\cdot Y=k) \right| & = \left|\int_{0}^1 e^{-2\pi i k \alpha}\left(\E[\phi_{v,\alpha}(X)]-\E[\phi_{v,\alpha}(Y)]\right)d\alpha \right|\\
& \leq\int_{0}^1 \left|\E[\phi_{v,\alpha}(X)]-\E[\phi_{v,\alpha}(Y)]\right|d\alpha\\
& \leq \epsilon/(3n).
\end{align*}
Thus
$$
\dtv(v\cdot X,v\cdot Y) = \sum_{k=-n}^n\left|\pr(v\cdot X=k) - \pr(v\cdot Y=k) \right| \leq \epsilon.
$$}
\end{proof}
\ignore{
\begin{thm}\label{mainThm}
Let $n,\epsilon>0$. Let $Y_1$ and $Y_2$ be the generators from Propositions \ref{genLargeProp} and \ref{SmallAlphaProp} for $\delta = \epsilon/(3n)$. Let $Y$ be given by the pointwise product $Y=Y_1Y_2$. Then for any $v\in\zpm^n$ and $X\in_u\dpm^n$, we have that
$$
\dtv(v\cdot X,v\cdot Y) \leq \epsilon.
$$
\end{thm}

Now, by Claim \ref{clm:introfouriertotv}, to prove the above claim it suffices to show that for all $\alpha \in [-1/4,1/4]$,
$$
\left|\E[\phi_{v,\alpha}(X)]-\E[\phi_{v,\alpha}(Y)] \right| \leq \epsilon/(2n).
$$

\begin{proof}
If $\log^3(1/\delta)/\|v\|_0 \geq \alpha \geq 0$, then note that
$$
\E[\phi_{v,\alpha}(Y)] = \E[\phi_{D(Y_1)v,\alpha}(Y_2)]
$$
and
$$
\E[\phi_{v,\alpha}(X)] = \E[\phi_{v,\alpha}(D(Y_1)X)] = \E[\phi_{D(Y_1)v,\alpha}(X)].
$$
However by Proposition \ref{SmallAlphaProp}, we have that
$$
\left|\E[\phi_{D(Y_1)v,\alpha}(Y_2)] - \E[\phi_{D(Y_1)v,\alpha}(X)] \right| \leq \epsilon/(3n).
$$

Similarly, if $1/4 \geq \alpha \geq \log^3(1/\delta)/\|v\|_0 $, then note that
$$
\E[\phi_{v,\alpha}(Y)] = \E[\phi_{D(Y_2)v,\alpha}(Y_1)]
$$
and
$$
\E[\phi_{v,\alpha}(X)] = \E[\phi_{v,\alpha}(D(Y_2)X)] = \E[\phi_{D(Y_2)v,\alpha}(X)].
$$
However by Proposition \ref{genLargeProp}, we have that
$$
\left|\E[\phi_{D(Y_2)v,\alpha}(Y_1)] - \E[\phi_{D(Y_2)v,\alpha}(X)] \right| \leq \epsilon/(3n).
$$
Thus, we have our result for all $\alpha\in[0,1/4]$. Noting that $\phi_{v,-\alpha}(X) = \overline{\phi_{v,\alpha}(X)}$, we determine that the statement in question holds for $\alpha$ if and only if it holds for $-\alpha$. Thus, the inequality in question holds for all $\alpha\in[-1/4,1/4]$. Next, note that for any $X\in \dpm^n$ that $\phi_{v,\alpha+1/2}(X) = \exp(\pi i v\cdot X)\phi_{v,\alpha}(X) = (-1)^{\|v\|_0}\phi_{v,\alpha}(X)$. Thus, the statement in question holds for $\alpha$ if and only if it holds for $\alpha+1/2$. Thus, it holds for all real $\alpha$.
\end{proof}

Theorem \ref{mainThm} now follows from noting that
\begin{align*}
\left|\pr(v\cdot X=k) - \pr(v\cdot Y=k) \right| & = \left|\int_{0}^1 e^{-2\pi i k \alpha}\left(\E[\phi_{v,\alpha}(X)]-\E[\phi_{v,\alpha}(Y)]\right)d\alpha \right|\\
& \leq\int_{0}^1 \left|\E[\phi_{v,\alpha}(X)]-\E[\phi_{v,\alpha}(Y)]\right|d\alpha\\
& \leq \epsilon/(3n).
\end{align*}
Thus
$$
\dtv(v\cdot X,v\cdot Y) = \sum_{k=-n}^n\left|\pr(v\cdot X=k) - \pr(v\cdot Y=k) \right| \leq \epsilon.
$$}


\bibliographystyle{alpha}
\bibliography{references}

\appendix
\section{Proofs from Section \ref{sec:prelims}}
\label{sec:appendix}

\begin{proof}[Proof of Lemma \ref{hcBiasedLem}]
This follows from the fact that $\|Q^p\|_1 \leq \|Q\|_1^p.$ Therefore,
\[
\E\sbkets{ Q(x)^p} \leq \E_{X\in_u \{\pm 1\}^n}[Q(X)^p] +\|Q\|_1^p
\delta\leq (p-1)^{p d/2} \cdot \|Q\|_2^p + \|Q\|_1^p \delta. 
\]
\end{proof}

\begin{proof}[Proof of Lemma \ref{lm:epschernoff}]
Note that because $\|v\|_2=1$ that $\|v\|_1\leq \|v\|_0^{1/2}$ by Cauchy-Schwarz. We note by the Markov inequality that for even $p$ that
$$
\pr[|\iprod{v}{x}| > t] \leq t^{-p}\E[|\iprod{v}{x}|^p].
$$
We need a slightly strengthened version of Lemma \ref{hcBiasedLem} to bound this.
Note that if $f(x)=\iprod{v}{x}$
$$
\E[f(x)] \leq \|f^p\|_0\epsilon + \|f\|_p^p \leq \|v\|_0^p \epsilon + (p-1)(p-3)\cdots 1.
$$
The bound on $\|f\|_p$ comes from noting that the expectation of $f^p$ under Gaussian inputs is $(p-1)(p-3)\cdots 1$ and that the expectation under Bernoulli inputs is at most this (which can be seen by expanding and comparing terms). Therefore, we have that
$$
\pr[|\iprod{v}{x}| > t] \leq t^{-p}\sqrt{2}(p/e)^{p/2} + t^{-p}\|v\|_0^p \epsilon
$$
Letting $p$ be the largest even integer less than $t^2$, we find that this is at most
$$
\sqrt{2} \exp(-p/2)+ \|v\|_0^{t^2} \epsilon,
$$
which is sufficient when $t\geq 2$. For $1\leq t \leq \sqrt{2}$, the
trivial upper bound of $1$ is sufficient, and for $\sqrt{2}\leq t \leq
2$, we may instead use the bound for $p=2$. 
\end{proof}

\begin{proof}[Proof of Lemma \ref{hashmomentsLem}]
Let $I_{i,k}$ be the indicator function of the event that
$h(i)=k$. Note that $h(v)=\sum_{i,j,k} I_{i,k}I_{j,k}v_i^2 v_j^2.$
Therefore,
$$
h(v)^p = \sum_{i_1,\ldots,i_{p},j_1,\ldots,j_p}\sum_{k_1,\ldots,k_p} \prod_{t=1}^p I_{i_{t},k_t}I_{j_{t},k_t} \prod_{t=1}^{p} v_{i_t}^2v_{j_t}^2.
$$
Let $R(i_t,j_t,k_t)$ be $0$ if for some $t,t'$ $k_t\neq k_t'$ but one
of $i_{t}$ or $j_t$ equals $i_{t'}$ or $j_{t'}$ and otherwise be equal
to $m^{-T}$ where $T$ is the number of distinct values taken by $i_t$
or $j_t$. Notice that by the $\delta$-biasedness of $h$ that
$$
\E\left[\prod_{t=1}^p I_{i_{t},k_t}I_{j_{t},k_t}\right] \leq R(i_t,j_t,k_t) + \delta.
$$
Combining with the above we find that
\begin{align*}
\E[h(v)^p] & \leq \sum_{i_1,\ldots,i_{p},j_1,\ldots,j_p}\sum_{k_1,\ldots,k_p} (R(i_t,j_t,k_t) + \delta)\prod_{t=1}^{p} v_{i_t}^2v_{j_t}^2\\
& \leq \sum_{i_1,\ldots,i_{p},j_1,\ldots,j_p}\sum_{k_1,\ldots,k_p} R(i_t,j_t,k_t)\prod_{t=1}^{p} v_{i_t}^2v_{j_t}^2 + \delta m^p\sum_{i_1,\ldots,i_{p},j_1,\ldots,j_p}\prod_{t=1}^{p} v_{i_t}^2v_{j_t}^2\\
& \leq \sum_{i_1,\ldots,i_{p},j_1,\ldots,j_p}\sum_{k_1,\ldots,k_p} R(i_t,j_t,k_t)\prod_{t=1}^{p} v_{i_t}^2v_{j_t}^2 + \delta m^p \|v\|_2^{4p}.
\end{align*}
Next we consider
$$
\sum_{k_1,\ldots,k_p} R(i_t,j_t,k_t)
$$
for fixed values of $i_1,\ldots,i_p,j_1,\ldots,j_p$. We claim that it
is at most $m^{-S/2}$ where $S$ is again the number of distinct
elements of the form $i_t$ or $j_t$ that appear in this way an odd
number of times. Letting $T$ be the number of distinct elements of the
form $i_t$ or $j_t$, the expression in question is $m^{-T}$ times the
number of choices of $k_t$ so that each value of $i_t$ or $j_t$
appears with only one value of $k_t$. In other words this is $m^{-T}$
times the number of functions $f:\{i_t,j_t\}\rightarrow[m]$ so that
$f(i_t)=f(j_t)$ for all $t$. This last relation splits $\{i_t,j_t\}$
into equivalence classes given by the transitive closure of the
operation that $x\sim y$ if $x=i_t$ and $y=j_t$ for some $t$. We note
that any $x$ that appears an odd number of times as an $i_t$ or $j_t$
must be in an equivalence class of size at least $2$ because it must
appear at least once with some other element. Therefore, the number of
equivalence classes, $E$ is at least $T-S/2$. Thus, the sum in
question is at most $m^{-T}m^E \leq m^{-S/2}$. Therefore, we have that 
$$
\E[h(v)^p] \leq (2p)!\sum_{\textrm{Multisets }M\subset [n],
  |M|=2p}m^{-\{\mathrm{Odd}(M)\}/2}\prod_{i\in M} v_i^2 +\delta m^p
\|v\|_2^{4p}. 
$$
Where $\mathrm{Odd}(M)$ is the number of elements occurring in $M$ an
odd number of times. This equals 
\begin{align*}
\E[h(v)^p] & \leq (2p)!\sum_{k=0}^p\sum_{\textrm{Multisets }M\subset
  [n], |M|=2p, \mathrm{Odd}(M)=2k}m^{-k}\prod_{i\in M} v_i^2 +\delta
m^p \|v\|_2^{4p}\\ 
& \leq (2p)!\sum_{k=0}^p m^{-k}
\sum_{i_1,\ldots,i_{2k}}\sum_{j_1,\ldots,j_{p-k}}\prod v_{i_t}^2 \prod
v_{j_t}^4+\delta m^p \|v\|_2^{4p}\\ 
& = (2p)!\sum_{k=0}^p \left(\frac{\|v\|_2^4}{m}\right)^k
\|v\|_4^{4(p-k)}+\delta m^p \|v\|_2^{4p}\\ 
& \leq O(p)^{2p} \left(\frac{\|v\|_2^4}{m}\right)^p + O(p)^{2p}
\|v\|_4^{4p}+\delta m^p \|v\|_2^{4p}. 
\end{align*}
Note that the second line above comes from taking $M$ to be the
multiset 
$$\{i_1,i_2,\ldots,i_{2k},j_1,j_1,j_2,j_2,\ldots,j_{p-k},j_{p-k}\}.$$ 
This completes our proof.
\end{proof}

\begin{proof}[Proof of Lemma \ref{lm:hashing}]
 Let $X_i$ denote the indicator random variable which is $1$ if $h(i) = j$ and $0$ otherwise. Let $Z = \sum_i v_i X_i$. Now, if $h$ were a truly random hash function, then, by Hoeffding's inequality,
$$\pr\sbkets{| Z - \nmo{v}/m| \geq t} \leq 2 \exp\bkets{- t^2/2\sum_i v_i^2}.$$
Therefore, for a truly random hash function and even integer $p \geq 2$, $\nmp{Z} = O(\nmt{v}) \sqrt{p}$. Therefore, for a $\delta$-biased hash family, we get $\nmp{Z}^p \leq O( p)^{p/2} \nmt{v}^p + \nmo{v}^p \delta$. Hence, by Markov's inequality, for any $t > 0$,
$$\pr\sbkets{ |Z - \nmo{v}/m| \geq t} \leq \frac{O(p)^{p/2} \nmt{v}^p + \nmo{v}^p \delta}{t^p}.$$
\end{proof}

\end{document}